\newtheorem{open problem}{Open Problem}
\journalname{
}
\begin{document}

\title{New results on vectorial dual-bent functions and partial difference sets
}


\author{Jiaxin Wang  \and
        Fang-Wei Fu 
}


\institute{
              Jiaxin Wang\at
              Chern Institute of Mathematics and LPMC, and Tianjin Key Laboratory of Network and Data Security Technology, Nankai University \\
              Tianjin, 300071, P. R. China\\
              \email{wjiaxin@mail.nankai.edu.cn}           
             \and
             Fang-Wei Fu\at
              Chern Institute of Mathematics and LPMC, and Tianjin Key Laboratory of Network and Data Security Technology, Nankai University \\
              Tianjin, 300071, P. R. China\\
              \email{fwfu@nankai.edu.cn}}

\date{Received: date / Accepted: date}

\maketitle

\begin{abstract}
  Bent functions $f: V_{n}\rightarrow \mathbb{F}_{p}$ with certain additional properties play an important role in constructing partial difference sets, where $V_{n}$ denotes an $n$-dimensional vector space over $\mathbb{F}_{p}$, $p$ is an odd prime. In \cite{Cesmelioglu1,Cesmelioglu2}, the so-called vectorial dual-bent functions are considered to construct partial difference sets. In \cite{Cesmelioglu1}, \c{C}e\c{s}melio\u{g}lu \emph{et al.} showed that for vectorial dual-bent functions $F: V_{n}\rightarrow V_{s}$ with certain additional properties, the preimage set of $0$ for $F$ forms a partial difference set. In \cite{Cesmelioglu2}, \c{C}e\c{s}melio\u{g}lu \emph{et al.} showed that for a class of Maiorana-McFarland vectorial dual-bent functions $F: V_{n}\rightarrow \mathbb{F}_{p^s}$, the preimage set of the squares (non-squares) in $\mathbb{F}_{p^s}^{*}$ for $F$ forms a partial difference set. In this paper, we further study vectorial dual-bent functions and partial difference sets. We prove that for vectorial dual-bent functions $F: V_{n}\rightarrow \mathbb{F}_{p^s}$ with certain additional properties, the preimage set of the squares (non-squares) in $\mathbb{F}_{p^s}^{*}$ for $F$ and the preimage set of any coset of some subgroup of $\mathbb{F}_{p^s}^{*}$ for $F$ form partial difference sets. Furthermore, explicit constructions of partial difference sets are yielded from some (non)-quadratic vectorial dual-bent functions. In this paper, we illustrate that almost all the results of using weakly regular $p$-ary bent functions to construct partial difference sets are special cases of our results. In \cite{Cesmelioglu1}, \c{C}e\c{s}melio\u{g}lu \emph{et al.} showed that for a weakly regular $p$-ary bent function $f$ with $f(0)=0$, if $f$ is a so-called $l$-form with $gcd(l-1, p-1)=1$ for some integer $1\leq l\leq p-1$, then $f$ is vectorial dual-bent. We prove that the converse also holds, which answers one open problem proposed in \cite{Cesmelioglu2}.
  \keywords{Bent functions \and Vectorial bent functions \and Vectorial dual-bent functions \and Partial difference set \and Walsh transform }
\end{abstract}

\section{Introduction}
\label{intro}
Throughout this paper, let $p$ be an odd prime, $\zeta_{p}=e^{\frac{2 \pi \sqrt{-1}}{p}}$ be the complex primitive $p$-th root of unity, $\mathbb{F}_{p}^{n}$ be the vector space of the $n$-tuples over $\mathbb{F}_{p}$, $\mathbb{F}_{p^n}$ be the finite field with $p^n$ elements, $V_{n}$ be an $n$-dimensional vector space over $\mathbb{F}_{p}$, and $\langle \cdot \rangle_{n}$ denote a (non-degenerate) inner product of $V_{n}$, where $n$ is a positive integer. In this paper, when $V_{n}=\mathbb{F}_{p}^{n}$, let $\langle a, b\rangle_{n}=a \cdot b=\sum_{i=1}^{n}a_{i}b_{i}$, where $a=(a_{1}, \dots, a_{n}), b=(b_{1}, \dots, b_{n})\in \mathbb{F}_{p}^{n}$; when $V_{n}=\mathbb{F}_{p^n}$, let $\langle a, b\rangle_{n}=Tr_{1}^{n}(ab)$, where $a, b \in \mathbb{F}_{p^n}$, $Tr_{k}^{n}(\cdot)$ denotes the trace function from $\mathbb{F}_{p^n}$ to $\mathbb{F}_{p^k}$, $k$ is a divisor of $n$; when $V_{n}=V_{n_{1}}\times \dots \times V_{n_{m}} (n=\sum_{i=1}^{m}n_{i})$, let $\langle a, b\rangle_{n}=\sum_{i=1}^{m}\langle a_{i}, b_{i}\rangle_{n_{i}}$, where $a=(a_{1}, \dots, a_{m}), b=(b_{1}, \dots, b_{m})\in V_{n}$.

A function $f$ from $V_{n}$ to $V_{s}$ is called a vectorial $p$-ary function, or simply $p$-ary function when $s=1$. The Walsh transform of a $p$-ary function $f: V_{n}\rightarrow \mathbb{F}_{p}$ is defined as
\begin{equation}\label{1}
  W_{f}(a)=\sum_{x \in V_{n}}\zeta_{p}^{f(x)-\langle a, x\rangle_{n}}, a \in V_{n}.
\end{equation}

A $p$-ary function $f: V_{n}\rightarrow \mathbb{F}_{p}$ is called bent if $|W_{f}(a)|=p^{\frac{n}{2}}$ for any $a \in V_{n}$. A vectorial $p$-ary function $F: V_{n}\rightarrow V_{s}$ is called vectorial bent if all component functions $F_{c}, c \in V_{s}^{*}=V_{s}\backslash \{0\}$ defined as $F_{c}(x)=\langle c, F(x)\rangle_{s}$ are bent. Note that if $f: V_{n}\rightarrow \mathbb{F}_{p}$ is bent, then so are $cf, c \in \mathbb{F}_{p}^{*}$, that is, bent functions are vectorial bent. For a vectorial $p$-ary function $F: V_{n}\rightarrow V_{s}$, the vectorial bentness of $F$ is independent of the inner products of $V_{n}$ and $V_{s}$. The Walsh transform of a $p$-ary bent function $f: V_{n}\rightarrow \mathbb{F}_{p}$ satisfies that for any $a \in V_{n}$,
\begin{equation}\label{2}
  W_{f}(a)=\left\{\begin{split}
                     \pm p^{\frac{n}{2}}\zeta_{p}^{f^{*}(a)} & \ \ \text{if} \ p \equiv 1 \ (mod \ 4) \ \text{or} \ n \ \text{is even},\\
                     \pm \sqrt{-1} p^{\frac{n}{2}} \zeta_{p}^{f^{*}(a)} & \ \ \text{if} \ p \equiv 3 \ (mod \ 4) \ \text{and} \ n \ \text{is odd},
                  \end{split}\right.
\end{equation}
where $f^{*}$ is a function from $V_{n}$ to $\mathbb{F}_{p}$, called the dual of $f$. A $p$-ary bent function $f: V_{n}\rightarrow \mathbb{F}_{p}$ is said to be weakly regular if $W_{f}(a)=\varepsilon_{f}p^{\frac{n}{2}}\zeta_{p}^{f^{*}(a)}$, where $\varepsilon_{f}$ is a constant independent of $a$, otherwise $f$ is called non-weakly regular. In particular, if $\varepsilon_{f}=1$, $f$ is said to be regular. The (non)-weakly regularity of $f$ is independent of the inner product of $V_{n}$ and if $f$ is weakly regular, $\varepsilon_{f}$ is independent of the inner product of $V_{n}$.

In \cite{Cesmelioglu3}, \c{C}e\c{s}melio\u{g}lu \emph{et al.} introduced vectorial dual-bent functions. A vectorial $p$-ary bent function $F: V_{n}\rightarrow V_{s}$ is said to be vectorial dual-bent if there exists a vectorial bent function $G: V_{n}\rightarrow V_{s}$ such that $(F_{c})^{*}=G_{\sigma(c)}$ for any $c \in V_{s}^{*}$, where $(F_{c})^{*}$ is the dual of the component function $\langle c, F(x)\rangle_{s}$ and $\sigma$ is some permutation over $V_{s}^{*}$. The vectorial bent function $G$ is called a vectorial dual of $F$ and denoted by $F^{*}$. It is known in \cite{Cesmelioglu3} that the property of being vectorial dual-bent is independent of the inner products of $V_{n}$ and $V_{s}$. A $p$-ary function $f: V_{n}\rightarrow \mathbb{F}_{p}$ is called an $l$-form if $f(ax)=a^{l}f(x)$ for any $a \in \mathbb{F}_{p}^{*}$ and $x \in V_{n}$, where $1\leq l\leq p-1$ is an integer. In \cite{Cesmelioglu1}, \c{C}e\c{s}melio\u{g}lu \emph{et al.} showed that for a weakly regular $p$-ary bent function $f: V_{n}\rightarrow \mathbb{F}_{p}$ with $f(0)=0$, if $f$ is an $l$-form with $gcd(l-1, p-1)=1$ for some integer $1\leq l\leq p-1$, then $f$ (seen as a vectorial bent function from $V_{n}$ to $V_{1}$) is vectorial dual-bent. In this paper, we will prove that the converse also holds, which answers one open problem proposed in \cite{Cesmelioglu2}.

Let $G$ be a finite abelian group of order $v$. A subset $D$ of $G$ with $k$ elements is called a $(v,k,\lambda,\mu)$ partial difference set of $G$, if every nonzero element of $D$ can be represented as $d_{1}-d_{2}, d_{1}, d_{2} \in D$, in exactly $\lambda$ ways, and every nonzero element of $G \ \backslash \ D$ can be represented as $d_{1}-d_{2}, d_{1}, d_{2} \in D$, in exactly $\mu$ ways. When $\lambda=\mu$, $D$ is called a $(v, k, \lambda)$ difference set. A partial difference set $D$ is called regular if $-D=D$ and $0 \notin D$. If $D$ is a partial difference set of $G$ with $-D=D$, then so are $D \cup \{0\}, D\ \backslash \ \{0\}, G \ \backslash \ D$. In this paper, we consider regular partial difference sets. There is an equivalence way to describe regular partial difference sets in terms of strongly regular graphs \cite{Ma}.

Bent functions play an important role in constructing partial difference sets (or equivalently, strongly regular graphs). It is known that the support of a Boolean bent function $f$, respectively the preimage set of $0$ for $f$, is a Hadamard difference set (see \cite{Hou}). In 2010, Tan \emph{et al.} \cite{Tan} for the first time used weakly regular ternary bent functions $f: V_{2n}\rightarrow \mathbb{F}_{3}$ of $2$-form with $f(0)=0$ to construct partial difference sets. Then their works were extended to weakly regular $p$-ary bent functions for any odd prime $p$ \cite{Chee,Feng,Hyun}. In Section 2, we will review the works in these papers. In \cite{Cesmelioglu1} and \cite{Cesmelioglu2}, vectorial dual-bent functions are considered to construct partial difference sets. In \cite{Cesmelioglu1}, \c{C}e\c{s}melio\u{g}lu \emph{et al.} showed that for vectorial dual-bent functions $F$ with certain additional properties, the preimage set of $0$ for $F$ forms a partial difference set. In \cite{Cesmelioglu2}, \c{C}e\c{s}melio\u{g}lu \emph{et al.} showed that for a class of Maiorana-McFarland vectorial dual-bent functions $F: V_{n}\rightarrow \mathbb{F}_{p^s}$, the preimage set of the squares (non-squares) in $\mathbb{F}_{p^s}^{*}$ for $F$ forms a partial difference set. In fact, as nonsingular quadratic forms $F: \mathbb{F}_{q}^{n}\rightarrow \mathbb{F}_{q}$ ($q$ is an odd prime power) can be proved to be vectorial dual-bent, the works of using vectorial dual-bent functions to construct partial difference sets can be traced to \cite{Calderbank,Feng,Ma} in which partial differences sets were yielded from nonsingular quadratic forms. In this paper, we further study vectorial dual-bent functions and partial difference sets. We provide a new explicit construction of vectorial dual-bent functions. We prove that for vectorial dual-bent functions $F: V_{n}\rightarrow \mathbb{F}_{p^s}$ with certain additional properties, if the permutation $\sigma$ of $\mathbb{F}_{p^s}^{*}$ given by $(F_{c})^{*}=(F^{*})_{\sigma(c)}, c \in \mathbb{F}_{p^s}^{*}$ (where $F^{*}$ is a vectorial dual of $F$) is an identity map, then the preimage set of any subset of $\mathbb{F}_{p^s}$ for $F$ forms a partial difference set. For vectorial dual-bent functions $F: V_{n}\rightarrow \mathbb{F}_{p^s}$ with certain additional properties, if the permutation $\sigma$ of $\mathbb{F}_{p^s}^{*}$ given by $(F_{c})^{*}=(F^{*})_{\sigma(c)}, c \in \mathbb{F}_{p^s}^{*}$ (where $F^{*}$ is a vectorial dual of $F$) is not an identity map but it satisfies some conditions, we prove that the preimage set of the squares (non-squares) in $\mathbb{F}_{p^s}^{*}$ for $F$ and the preimage set of any coset of some subgroup of $\mathbb{F}_{p^s}^{*}$ for $F$ form partial difference sets. Furthermore, explicit constructions of partial difference sets are yielded from some (non)-quadratic vectorial dual-bent functions. In this paper, we illustrate that almost all the results of using weakly regular $p$-ary bent functions to construct partial difference sets are special cases of our results. In \cite{Cesmelioglu1}, \c{C}e\c{s}melio\u{g}lu \emph{et al.} showed that for a weakly regular $p$-ary bent function $f: V_{n}\rightarrow \mathbb{F}_{p}$ with $f(0)=0$, if $f$ is an $l$-form with $gcd(l-1, p-1)=1$ for some integer $1\leq l \leq p-1$, then $f$ (seen as a vectorial bent function from $V_{n}$ to $V_{1}$) is vectorial dual-bent. We prove that the converse also holds, which answers one open problem proposed in \cite{Cesmelioglu2}.

The rest of the paper is organized as follows. In Section 2, we review the related results on vectorial dual-bent functions and partial difference sets. In Section 3, we provide a new explicit construction of vectorial dual-bent functions. In Section 4, we provide new results on vectorial dual-bent functions and partial difference sets, and we illustrate that almost all the results of using weakly regular $p$-ary bent functions to construct partial difference sets are special cases of our results. In Section 5, we make a conclusion.
\section{Preliminaries}
\label{sec:2}
In this section, we review the related results on vectorial dual-bent functions and partial difference sets.

Let $L$ be a linear permutation of $V_{n}$. If for two vectorial functions $F: V_{n}\rightarrow V_{s}$ and $G: V_{n}\rightarrow V_{s}$ we have $G(x)=F(L(x))$, then $F$ and $G$ are called linear equivalent. By the results in \cite{Cesmelioglu3}, if $F: V_{n}\rightarrow V_{s}$ is a vectorial dual-bent function, then $F(L(x))$ is also a vectorial dual-bent function, that is, the property of being vectorial dual-bent is an invariant under linear equivalence. In \cite{Cesmelioglu2}, \c{C}e\c{s}melio\u{g}lu \emph{et al.} showed that if $D_{A}^{F}=\{x \in V_{n}: F(x)\in A\}$ is a $(v, k, \lambda, \mu)$ partial difference set for some vectorial function $F: V_{n}\rightarrow V_{s}$ and subset $A$ of $V_{s}$, then $D_{A}^{G}=\{x \in V_{n}: G(x) \in A\}$ is also a $(v, k, \lambda, \mu)$ partial difference set, where $G(x)=F(L(x))$. Therefore, when we consider using vectorial dual-bent functions to construct partial difference sets, we only need to consider vectorial dual-bent functions in the sense of linear equivalence. Up to now, there are only a few known vectorial dual-bent functions (in the linear equivalence) \cite{Cesmelioglu3}. In Section 3, we will provide a new explicit construction of (non)-quadratic vectorial dual-bent functions. Below we list the vectorial dual-bent functions given in \cite{Cesmelioglu3}.

(i) Let $s$ be a divisor of $m$. Let $F: \mathbb{F}_{p^m} \times \mathbb{F}_{p^m}\rightarrow \mathbb{F}_{p^s}$ be defined as
\begin{equation}\label{3}
  F(x, y)=Tr_{s}^{m}(axy^{e}),
\end{equation}
where $a \in \mathbb{F}_{p^m}^{*}, gcd(e, p^m-1)=1$. Then $F$ is vectorial dual-bent for which for any $c \in \mathbb{F}_{p^s}^{*}$, $\varepsilon_{F_{c}}=1$, $(F_{c})^{*}=(F^{*})_{\sigma(c)}$, where $F^{*}(x, y)=Tr_{s}^{m}(-a^{-u}x^{u}y)$, $\sigma(c)=c^{-u}$ with $eu\equiv 1 \ mod \ (p^m-1)$.

(ii) Let $s$ be a divisor of $m$. Let $F: \mathbb{F}_{p^m} \times \mathbb{F}_{p^m}\rightarrow \mathbb{F}_{p^s}$ be defined as
\begin{equation}\label{4}
  F(x, y)=Tr_{s}^{m}(axL(y)),
\end{equation}
where $L(x)=\sum a_{i}x^{q^{i}} \ (q=p^s)$ is a $q$-polynomial over $\mathbb{F}_{p^m}$ inducing a permutation of $\mathbb{F}_{p^m}$ and $a \in \mathbb{F}_{p^m}^{*}$. Then $F$ is vectorial dual-bent for which for any $c \in \mathbb{F}_{p^s}^{*}$, $\varepsilon_{F_{c}}=1$, $(F_{c})^{*}=(F^{*})_{\sigma(c)}$, where $F^{*}(x, y)=Tr_{s}^{m}(-L^{-1}(a^{-1}x)y)$, $\sigma(c)=c^{-1}$.

(iii) In this paper, let $\eta_{n}$ denote the multiplicative quadratic character of $\mathbb{F}_{p^n}$, that is $\eta_{n}(a)=1$ if $a\in \mathbb{F}_{p^n}^{*}$ is a square and $\eta_{n}(a)=-1$ if $a\in \mathbb{F}_{p^n}^{*}$ is a non-square. Let $s$ be a divisor of $n$ and $F: \mathbb{F}_{p^n}\rightarrow \mathbb{F}_{p^s}$ be defined as
\begin{equation}\label{5}
  F(x)=Tr_{s}^{n}(ax^{2}),
\end{equation}
where $a \in \mathbb{F}_{p^n}^{*}$. Then $F$ is vectorial dual-bent for which for any $c \in \mathbb{F}_{p^s}^{*}$, $\varepsilon_{F_{c}}=(-1)^{n-1}\epsilon^{n}\eta_{n}(ac)$, $(F_{c})^{*}=(F^{*})_{\sigma(c)}$, where $F^{*}(x)=Tr_{s}^{n}(-x^{2}/(4a)), \sigma(c)=c^{-1}$, $\epsilon=1$ if $p  \equiv 1 \ (mod \ 4)$ and $\epsilon=\sqrt{-1}$ if $p \equiv 3 \ (mod \ 4)$.

(iv) Consider a $2m$-dimensional vector space $\mathbb{F}=\mathbb{F}_{p^m} \times \mathbb{F}_{p^m}$ (or $\mathbb{F}=\mathbb{F}_{p^{2m}}$) over $\mathbb{F}_{p}$. A partial spread of $\mathbb{F}$ is a collection of subspaces $U_{0}, U_{1}, \dots, U_{r}$ of $\mathbb{F}$ of dimension $m$ such that $U_{i} \cap U_{j}=\{0\}$ if $i\neq j$. If $r=p^m$, then the collection of subspaces $U_{0}, U_{1}, \dots, U_{p^m}$ is called a spread. For a subspace $U$ of $\mathbb{F}$, let $U^{\perp}=\{(z_{1}, z_{2})\in \mathbb{F}: Tr_{1}^{m}(z_{1}x_{1}+z_{2}x_{2})=0 \ \text{for all} \ (x_{1}, x_{2})\in U\}$ if $\mathbb{F}=\mathbb{F}_{p^m} \times \mathbb{F}_{p^m}$, and
$U^{\perp}=\{z\in \mathbb{F}: Tr_{1}^{2m}(zx)=0 \ \text{for all} \ x \in U\}$ if $\mathbb{F}=\mathbb{F}_{p^{2m}}$. For a positive integer $s\leq m$, let $g$ be a balanced function from $\{1, 2, \dots, p^m\}$ to $V_{s}$, that is, for any $a \in V_{s}$, $|\{1\leq i\leq p^m: g(i)=a\}|=p^{m-s}$. Suppose $g(i)=\gamma_{i}, 1\leq i \leq p^m$. Let $\gamma_{0}$ be an arbitrary element of $V_{s}$. Define $F: \mathbb{F}\rightarrow V_{s}$ as
\begin{equation*}
  F(z)=\gamma_{i} \ \text{if} \ z \in U_{i}, z \neq 0, 1\leq i \leq p^m, \ \text{and} \ F(z)=\gamma_{0} \ \text{if} \ z \in U_{0},
\end{equation*}
where $U_{0}, U_{1}, \dots, U_{p^m}$ form a spread of $\mathbb{F}$. Then $F$ is called a vectorial partial spread bent function and it is vectorial dual-bent for which for any $c \in V_{s}^{*}$, $\varepsilon_{F_{c}}=1$, $(F_{c})^{*}=(F^{*})_{c}$, where $F^{*}(z)=\gamma_{i}$ if $z \in U_{i}^{\perp}, z \neq 0, 1\leq i \leq p^m$, and $F^{*}(z)=\gamma_{0}$ if $z \in U_{0}^{\perp}$.

Note that the vectorial dual-bent functions defined by (4) and (5) are quadratic. For a general nonsingular quadratic form $F: \mathbb{F}_{p^s}^{m}\rightarrow \mathbb{F}_{p^s}$ defined as $F(x_{1}, \dots, x_{m})=\sum_{i, j=1}^{m}a_{i, j}x_{i}x_{j}$, where $a_{i, j}=a_{j, i}\in \mathbb{F}_{p^s}$, by Theorem 6.21 of \cite{Lidl}, $F$ is linear equivalent to a nonsingular diagonal quadratic form
\begin{equation}\label{6}
  G(x_{1}, \dots, x_{m})=a_{1}x_{1}^{2}+\dots+a_{m}x_{m}^{2},
\end{equation}
where $a_{i} \in \mathbb{F}_{p^s}^{*}, 1\leq i \leq m$. Then $G$ is vectorial dual-bent for which for any $c \in \mathbb{F}_{p^s}^{*}$, $\varepsilon_{G_{c}}=(-1)^{(s-1)m}\epsilon^{sm}\eta_{s}(c^{m}a_{1}\cdots a_{m})$, $(G_{c})^{*}=(G^{*})_{\sigma(c)}$, where $G^{*}(x_{1}, \dots, x_{m})=-x_{1}^{2}/(4a_{1})-\dots-x_{m}^{2}/(4a_{m})$, $\sigma(c)=c^{-1}$, $\epsilon=1$ if $p  \equiv 1 \ (mod \ 4)$ and $\epsilon=\sqrt{-1}$ if $p \equiv 3 \ (mod \ 4)$. Since $F$ and $G$ are linear equivalent, $F$ is also vectorial dual-bent.

For any $F: V_{n}\rightarrow \mathbb{F}_{p^s}$, define
\begin{equation}\label{7}
  \begin{split}
   &D_{0}^{F}=\{x \in V_{n}^{*}: F(x)=0\},\\
   &D_{S}^{F}=\{x \in V_{n}^{*}: F(x) \ \text{is  a  square in} \ \mathbb{F}_{p^s}^{*}\},\\
   &D_{N}^{F}=\{x \in V_{n}^{*}: F(x) \ \text{is  a  non\raisebox{0mm}{-}square in} \ \mathbb{F}_{p^s}^{*}\},\\
   &D_{\beta H_{l}}^{F}=\{x \in V_{n}^{*}: F(x) \in \beta H_{l}\}, \beta \in \mathbb{F}_{p^s}^{*},
  \end{split}
\end{equation}
where $H_{l}$ is a subgroup of $\mathbb{F}_{p^s}^{*}$ defined as $H_{l}=\{x^{l}, x \in \mathbb{F}_{p^s}^{*}\}$ for some positive integer $l$ and $\beta H_{l}$ is the coset of $H_{l}$ containing $\beta$. Note that if $l=2$, then $D_{\beta H_{l}}^{F}=D_{S}^{F}$ if $\beta \in \mathbb{F}_{p^s}^{*}$ is a square, and $D_{\beta H_{l}}^{F}=D_{N}^{F}$ if $\beta \in \mathbb{F}_{p^s}^{*}$ is a non-square.

The following two propositions give the known results of using weakly regular $p$-ary bent functions and vectorial dual-bent functions to construct partial difference sets respectively.
\begin{proposition}{\emph{\cite{Chee,Feng,Hyun,Tan}}}\label{Proposition1}
Let $n$ be a positive even integer. Let $f: V_{n}\rightarrow \mathbb{F}_{p}$ be a weakly regular $p$-ary bent function of $l$-form with $f(0)=0$ and $gcd(l-1, p-1)=1$.

\emph{(i)} If $gcd(l, p-1)\neq 2$, then $f$ must be regular;

\emph{(ii)} Then $D_{0}^{f}$, $D_{S}^{f}$, $D_{N}^{f}$ are partial difference sets and any union of sets from $\{D_{0}^{f}, D_{S}^{f}, D_{N}^{f}\}$ is a partial difference set;

\emph{(iii)} If $l\neq p-1$ when $p>3$, then $D_{\beta H_{l}}^{f}, \beta \in \mathbb{F}_{p}^{*}$ are partial difference sets and any union of sets from $\{D_{0}^{f}, D_{\beta H_{l}}^{f}, \beta \in \mathbb{F}_{p}^{*}\}$ is a partial difference set.
\end{proposition}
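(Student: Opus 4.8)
The plan is to use the character-sum criterion for regular partial difference sets (equivalently, for strongly regular Cayley graphs, as recalled via \cite{Ma}): for $D\subseteq V_{n}$ with $-D=D$ and $0\notin D$, the set $D$ is a regular partial difference set when the character sum $\Lambda_{D}(a):=\sum_{x\in D}\zeta_{p}^{\langle a,x\rangle_{n}}$ takes only two (necessarily rational integer) values as $a$ runs over $V_{n}\setminus\{0\}$. So everything reduces to evaluating $\Lambda_{D}(a)$ for $D\in\{D_{0}^{f},D_{S}^{f},D_{N}^{f},D_{\beta H_{l}}^{f}\}$ and for unions of such sets.

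The first step is to express these character sums through Walsh transforms of the scalar multiples $tf$, $t\in\mathbb{F}_{p}^{*}$. Expanding the indicator of $\{f(x)=0\}$ as $\frac1p\sum_{t\in\mathbb{F}_{p}}\zeta_{p}^{tf(x)}$, the indicator of ``$f(x)$ a (non)square'' by means of $\eta_{1}$ and the quadratic Gauss sum, and the indicator of $\{f(x)\in\beta H_{l}\}$ by means of the $gcd(l,p-1)$ multiplicative characters of $\mathbb{F}_{p}^{*}$ trivial on $H_{l}$ (each paired with its Gauss sum), and using $f(0)=0$ to add back or remove the point $x=0$, every $\Lambda_{D}(a)$ becomes a fixed $\mathbb{Q}$-linear combination, with $a$-independent coefficients, of the sums $\sum_{t\in\mathbb{F}_{p}^{*}}\chi(t)W_{tf}(-a)$ over the relevant multiplicative characters $\chi$ of $\mathbb{F}_{p}^{*}$ (trivial $\chi$ included).

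The crux is a formula for $W_{tf}$. Substituting $x\mapsto cx$ and using that $f$ is an $l$-form yields $W_{tf}(a)=\sum_{y\in V_{n}}\zeta_{p}^{c(tc^{l-1}f(y)-\langle a,y\rangle_{n})}$, so with $c=c(t)$ the unique solution of $c^{l-1}=t^{-1}$ one gets $W_{tf}(a)=\sigma_{c(t)}\big(W_{f}(a)\big)$, where $\sigma_{c}$ denotes the automorphism $\zeta_{p}\mapsto\zeta_{p}^{c}$ of $\mathbb{Q}(\zeta_{p})$. Here $gcd(l-1,p-1)=1$ is used decisively: it makes $t\mapsto c(t)$ a permutation of $\mathbb{F}_{p}^{*}$, and, since $p$ is odd (so $p-1$ is even), it forces $l$ to be even --- whence $f(-x)=f(x)$, so $-D=D$ for all the sets above, and whence $t\mapsto c(t)$ preserves $\eta_{1}$ and every multiplicative character of order dividing $l$, in particular those trivial on $H_{l}$. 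Because $n$ is even, $W_{f}(a)=\varepsilon_{f}p^{n/2}\zeta_{p}^{f^{*}(a)}$ with $\varepsilon_{f}\in\{\pm1\}$, so $W_{tf}(a)=\varepsilon_{f}p^{n/2}\zeta_{p}^{c(t)f^{*}(a)}$. Substituting into the previous step and reindexing the $t$-sums by $c=c(t)$, all Gauss sums cancel and each $\Lambda_{D}(a)$ collapses to an explicit quantity depending on $a$ only through whether $f^{*}(-a)=0$ and, when $f^{*}(-a)\neq0$, through $\eta_{1}(f^{*}(-a))$ or the coset $f^{*}(-a)H_{l}$. A short computation then shows that, of the three values this quantity can take, two always coincide, so $\Lambda_{D}(a)$ takes only two rational integer values; the unions are treated by the same case distinction. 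This proves (ii), and (iii) in the stated range --- the restriction $l\neq p-1$ when $p>3$ is imposed precisely so that the two values obtained are the eigenvalues of a genuine partial difference set, whereas for $p=3$ one has $l=2=p-1$, $H_{l}=\{1\}$ and $D_{\beta H_{l}}^{f}\in\{D_{S}^{f},D_{N}^{f}\}$, already covered by (ii). Part (i) is handled separately: the value counts $|f^{-1}(c)|$, $c\in\mathbb{F}_{p}^{*}$, are constant on the cosets of $(\mathbb{F}_{p}^{*})^{gcd(l,p-1)}$, and requiring these counts and $|f^{-1}(0)|=p^{n-1}\pm\varepsilon_{f}p^{n/2-1}(p-1)$ to be non-negative integers rules out $\varepsilon_{f}=-1$ unless $gcd(l,p-1)=2$.

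The main obstacle is the passage through the Galois scalar $c(t)$: one must verify that, despite the twist by $c(t)$, the quadratic and multiplicative characters reindex exactly as required and that the a priori three-valued character sums are in fact two-valued; and one must pin down for which $l$ --- equivalently, for which value of $gcd(l,p-1)$ --- the resulting pair of values truly realises partial-difference-set parameters, which is the origin of the precise hypotheses in (i) and (iii).
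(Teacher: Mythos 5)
Your treatment of parts (ii) and (iii) is sound and is, in substance, the same argument the paper uses: the identity $W_{tf}(a)=\sigma_{c(t)}(W_f(a))$ with $c(t)^{l-1}=t^{-1}$ is exactly the statement that $f$ is vectorial dual-bent with $(tf)^{*}=c(t)f^{*}$ and $\varepsilon_{tf}=\varepsilon_{f}$ (the paper's Remarks 1, 3, 4 invoke this and then apply Theorems 2, 3 and Corollary 2 with $s=1$), and your observation that $t\mapsto c(t)$ fixes every multiplicative character trivial on $H_{l}$ (because $-u\equiv 1 \bmod{\gcd(l,p-1)}$ when $u(l-1)\equiv 1 \bmod{(p-1)}$) is precisely the paper's hypothesis $\sigma^{-1}(c)H_{l}=cH_{l}$. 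Your route through multiplicative characters and Gauss sums, versus the paper's summation of the additive-character decomposition over the coset, is a cosmetic difference; both collapse to the same two-valued character sum, and the two-valuedness criterion from \cite{Ma} then applies. (Your gloss on why $l\neq p-1$ is imposed in (iii) is not quite right --- when $l=p-1$ the map $c(t)$ is the identity and the computation still yields a partial difference set --- but since the proposition only asserts the restricted statement, this costs you nothing.)

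Part (i), however, has a genuine gap. You propose to rule out $\varepsilon_{f}=-1$ by requiring $|f^{-1}(0)|$ and the common value $|f^{-1}(c)|$, $c\in\mathbb{F}_{p}^{*}$, to be non-negative integers; but for $n$ even one computes $|f^{-1}(0)|=p^{n-1}+\varepsilon_{f}(p-1)p^{n/2-1}$ and $|f^{-1}(c)|=p^{n-1}-\varepsilon_{f}p^{n/2-1}$, and both are non-negative integers for either sign of $\varepsilon_{f}$, so non-negativity eliminates nothing. The constraint that actually works is a divisibility coming from the orbit structure you already have on hand: since $f(ax)=a^{l}f(x)$, the set $D_{\beta H_{l}}^{f}$ is a union of full $\mathbb{F}_{p}^{*}$-orbits of size $p-1$, so $(p-1)$ must divide $|D_{\beta H_{l}}^{f}|=\frac{p-1}{d}\,(p^{n-1}-\varepsilon_{f}p^{n/2-1})$ with $d=\gcd(l,p-1)$, i.e. $d\mid p^{n/2-1}(p^{n/2}-\varepsilon_{f})$, hence $d\mid(p^{n/2}-\varepsilon_{f})$. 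As $p\equiv 1\pmod d$ this forces $d\mid(1-\varepsilon_{f})$; since $d$ is even (you correctly note $l$ must be even), $\varepsilon_{f}=-1$ is possible only when $d=2$, which is (i). You should replace the non-negativity claim by this divisibility argument. Note also that the paper itself never proves Proposition 1 --- it is quoted from the literature, and only (ii) and (iii) are rederived there --- so (i) must be argued from scratch as above.
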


\begin{proposition}{\emph{\cite{Calderbank,Cesmelioglu1,Cesmelioglu2,Feng,Ma}}}\label{Proposition2}
Let $n$ be a positive even integer. Let $F: V_{n}\rightarrow V_{s}$ be a vectorial dual-bent function which satisfies that $F(0)=0, F(-x)=F(x)$ and all component functions are regular or weakly regular but not regular.

\emph{(i)} Then $D_{0}^{F}$ is a partial difference set;

\emph{(ii)} If $F=(f_{1}, \dots, f_{s}): V_{n} \rightarrow \mathbb{F}_{p}^{s}$ and suppose $F_{1}=(f_{1}, \dots, f_{k}): V_{n} \rightarrow \mathbb{F}_{p}^{k}$ is also vectorial dual-bent, then $D_{0}^{F_{1}} \ \backslash \ D_{0}^{F}$ is a partial difference set;

\emph{(iii)} If $F: \mathbb{F}_{p^m} \times \mathbb{F}_{p^m}\rightarrow \mathbb{F}_{p^s}$ is the Maiorana-McFarland vectorial bent function $F(x, y)=Tr_{s}^{m}(xL(y))$, where $s$ is a divisor of $m$ and $L$ is a linear permutation over $\mathbb{F}_{p^m}$, then $D_{S}^{F}, D_{N}^{F}$ are partial difference sets;

\emph{(iv)} If $F: \mathbb{F}_{p^s}^{m} \rightarrow \mathbb{F}_{p^s}$ is a nonsingular quadratic form for which $m$ is even, then $D_{S}^{F}$ is a partial difference set;

\emph{(v)} If $F: \mathbb{F}_{p^s}^{m} \rightarrow \mathbb{F}_{p^s}$ is a nonsingular quadratic form for which $m$ is even, $s=2jr$ for some positive integers $j, r$, and there is $t\geq 2$ such that $t \mid (p^{j}+1)$, where $j$ is the smallest such positive integer, then $D_{\beta H_{t}}^{F}, \beta \in \mathbb{F}_{p^s}^{*}$ are partial difference sets.
\end{proposition}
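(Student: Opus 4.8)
The plan is to derive all five parts from the standard character-sum description of regular partial difference sets \cite{Ma}: a subset $D$ of the additive group $G=(V_{n},+)$ with $-D=D$, $0\notin D$ is a $(v,k,\lambda,\mu)$ partial difference set precisely when the sums $\chi_{a}(D)=\sum_{x\in D}\zeta_{p}^{\langle a,x\rangle_{n}}$ take at most two values as $a$ runs over $V_{n}\setminus\{0\}$ (the two values then fix $\lambda,\mu$). Since $F(0)=0$ and $F(-x)=F(x)$, each of $D_{0}^{F}$, $D_{S}^{F}$, $D_{N}^{F}$, $D_{\beta H_{t}}^{F}$, $D_{0}^{F_{1}}\setminus D_{0}^{F}$ satisfies $-D=D$ and $0\notin D$, so the whole task is to compute $\chi_{a}(D)$ for $a\neq 0$ and see that only two values occur; below $\doteq$ denotes equality up to the (harmless) contribution of the point $0$, legitimate since by \cite{Ma} $D$, $D\cup\{0\}$ and $V_{n}\setminus D$ are partial difference sets together.

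For (i), expand $\mathbf{1}[F(x)=0]=p^{-s}\sum_{c\in V_{s}}\zeta_{p}^{\langle c,F(x)\rangle_{s}}$ and interchange sums to get, for $a\neq0$, $\chi_{a}(D_{0}^{F})\doteq p^{-s}\sum_{c\in V_{s}^{*}}W_{F_{c}}(-a)$. As $n$ is even, each $\varepsilon_{F_{c}}=\pm1$ by $(2)$, and the hypothesis that the components are uniformly regular or uniformly weakly-regular-but-not-regular gives a common value $\varepsilon\in\{\pm1\}$, so $W_{F_{c}}(-a)=\varepsilon\,p^{n/2}\zeta_{p}^{(F_{c})^{*}(-a)}$. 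Because $F(-x)=F(x)$ every $(F_{c})^{*}$ is even, and vectorial dual-bentness gives $(F_{c})^{*}=(F^{*})_{\sigma(c)}$ with $\sigma$ a permutation of $V_{s}^{*}$; hence $\sum_{c\in V_{s}^{*}}\zeta_{p}^{(F_{c})^{*}(a)}=\sum_{d\in V_{s}^{*}}\zeta_{p}^{\langle d,F^{*}(a)\rangle_{s}}=p^{s}\mathbf{1}[F^{*}(a)=0]-1$, so $\chi_{a}(D_{0}^{F})\doteq\varepsilon\,p^{n/2-s}\big(p^{s}\mathbf{1}[F^{*}(a)=0]-1\big)$, two values. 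For (ii), write $\mathbf{1}[F_{1}(x)=0]=p^{-k}\sum_{c\in W}\zeta_{p}^{\langle c,F(x)\rangle_{s}}$ for the $k$-dimensional coordinate subspace $W\leq V_{s}$ attached to $F_{1}=(f_{1},\dots,f_{k})$; since $D_{0}^{F}\subseteq D_{0}^{F_{1}}$ one has $\chi_{a}(D_{0}^{F_{1}}\setminus D_{0}^{F})=\chi_{a}(D_{0}^{F_{1}})-\chi_{a}(D_{0}^{F})$, and the assumed vectorial dual-bentness of $F_{1}$ — via $(F_{1})_{c}=F_{(c,0)}$, hence $\langle\sigma(c,0),F^{*}(a)\rangle_{s}=\langle\sigma_{1}(c),F_{1}^{*}(a)\rangle_{k}$ — gives $\sum_{c\in W^{*}}\zeta_{p}^{(F^{*})_{\sigma(c)}(a)}=p^{k}\mathbf{1}[F_{1}^{*}(a)=0]-1$ and $F^{*}(a)=0\Rightarrow F_{1}^{*}(a)=0$. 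Subtracting, $\chi_{a}(D_{0}^{F_{1}}\setminus D_{0}^{F})\doteq\varepsilon\,p^{n/2}\big(\mathbf{1}[F_{1}^{*}(a)=0]-\mathbf{1}[F^{*}(a)=0]\big)\in\{0,\varepsilon p^{n/2}\}$.

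For (iii)--(v) the indicator of a ``square'' or ``coset'' condition splits into an additive part handled exactly as in (i) and a multiplicative part controlled by Gauss sums. In (iii), $\mathbf{1}[F(x)\text{ is a nonzero square}]=\tfrac12\big(\eta_{s}(F(x))+1\big)-\tfrac12\mathbf{1}[F(x)=0]$; the constant and the $\mathbf{1}[F(x)=0]$ pieces contribute $-\tfrac12-\tfrac12\chi_{a}(D_{0}^{F})$, already known, and the main term $\tfrac12\sum_{x}\eta_{s}(F(x))\zeta_{p}^{\langle a,x\rangle_{n}}$ is evaluated by writing $\eta_{s}$ through the quadratic Gauss sum $g(\eta_{s})$, swapping sums, and using $F(x,y)=Tr_{s}^{m}(xL(y))$: the inner sum over $x$ removes a variable, the $\mathbb{F}_{p^{s}}$-linearity of the $q$-polynomial $L$ ($q=p^{s}$) yields $L^{-1}(bt^{-1})=t^{-1}L^{-1}(b)$ so the Gauss-sum variable $t$ factors out, and the residual $t$-sum is again a quadratic Gauss sum; the main term lies in $\{0,\pm p^{m}\}$, and combining with $\chi_{a}(D_{0}^{F})$ via $F^{*}(x,y)=Tr_{s}^{m}(-L^{-1}(x)y)$, $\sigma(c)=c^{-1}$ the three possibilities collapse to two values of $\chi_{a}(D_{S}^{F})$; $D_{N}^{F}$ is symmetric. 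For (iv), by linear equivalence assume $F=a_{1}x_{1}^{2}+\dots+a_{m}x_{m}^{2}$; then $\sum_{x}\zeta_{p}^{Tr_{1}^{s}(tF(x))+\langle a,x\rangle_{n}}$ factors coordinatewise into elementary quadratic Gauss sums (complete the square), so $\sum_{x}\eta_{s}(F(x))\zeta_{p}^{\langle a,x\rangle_{n}}$ is again computable (cf.\ \cite{Lidl}); $m$ even kills the factor $\eta_{s}(t^{m})$, and after combining with $\chi_{a}(D_{0}^{F})$ (here $G^{*}=-\sum x_{i}^{2}/(4a_{i})$) only two values remain. For (v), expand $\mathbf{1}[F(x)\in\beta H_{t}]=\tfrac1t\sum_{\psi^{t}=1}\overline{\psi(\beta)}\psi(F(x))$ over the multiplicative characters $\psi$ of order dividing $t$ (here $t\mid p^{s}-1$ since $2j\mid s$); the resulting twisted Gauss sums of the diagonal quadratic form are governed by the hypotheses $t\mid p^{j}+1$, $s=2jr$, which put one in the semiprimitive (uniform-cyclotomy) case where every $g(\psi)$ with $\psi^{t}=1$ equals, up to one common root of unity, $\pm p^{s/2}$, so the full character sum again takes only two values (as in \cite{Feng}).

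The Fourier inversions and the accounting of which terms survive for $a\neq0$ are routine. The delicate points are: in (ii), the implication $F^{*}(a)=0\Rightarrow F_{1}^{*}(a)=0$ together with the exact cancellation of the constant terms, which rests on $\sigma$ restricting compatibly to the coordinate subspace $W$; in (iii) and (iv), arranging that the three-valued multiplicative sum interlocks with the two-valued $\chi_{a}(D_{0}^{F})$ to leave only two values, which uses the precise form of the dual; and, above all, part (v), where the whole argument depends on the semiprimitivity condition --- without $t\mid p^{j}+1$ and $s=2jr$ the Gauss sums $g(\psi)$, $\psi^{t}=1$, are not simultaneously pinned down and the coset character sums need not be two-valued. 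I expect (v) --- deploying the uniform-cyclotomy evaluation of these Gauss sums --- to be the main obstacle.
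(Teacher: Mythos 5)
Your outline is essentially correct, but it is worth saying up front that the paper never proves Proposition~2 directly: it is quoted as a known result from \cite{Calderbank,Cesmelioglu1,Cesmelioglu2,Feng,Ma}, and the paper's own contribution is to re-derive parts (iii)--(v) as special cases of its general Theorems~3 and~4 and Corollary~2 (see Remarks~5--7). Your parts (i) and (ii) coincide with the standard argument (Proposition~3 and equation~(18) of the paper, i.e.\ Lemmas~3--4 of \cite{Cesmelioglu2}). For (iii)--(v) your route is the ``classical'' one of the cited references: decompose the indicator of $S$, $N$ or $\beta H_{t}$ into multiplicative characters and evaluate the resulting twisted exponential sums of the specific function by Gauss sums (semiprimitive/uniform cyclotomy for (v)). The paper's machinery is different and buys more: Theorem~3 computes $\chi_{u}(D_{\beta H_{l}}^{F})=\varepsilon p^{\frac{n}{2}-s}\sum_{c\in\mathbb{F}_{p^s}^{*}}\zeta_{p}^{Tr_{1}^{s}(\sigma(c)F^{*}(-u))}\sum_{i\in\beta H_{l}}\zeta_{p}^{Tr_{1}^{s}(-ci)}$ and shows two-valuedness follows from the purely group-theoretic condition $\sigma^{-1}(c)H_{l}=cH_{l}$ on the dual-bent permutation $\sigma$; since the functions in (iii) and (iv) have $\sigma(c)=c^{-1}$, which fixes the subgroup of squares, no Gauss-sum evaluation of the particular function is needed at all, and the same statement comes out for every vectorial dual-bent $F$ with that $\sigma$-property (e.g.\ the non-quadratic functions of Theorem~1). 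Only for (v) does the paper, like you, fall back on the semiprimitive Gaussian-period evaluation (Theorem~1.18 of \cite{Ding}), combined with the weaker hypothesis that $\sigma$ permutes the cosets of $H_{t}$. Your interlocking of the three-valued multiplicative sum with the two-valued $\chi_{a}(D_{0}^{F})$ in (iii)--(iv) is the right bookkeeping and does collapse to two values, as one can check against (16) and (24) in the paper.

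One concrete point to repair: in (iii) the statement allows $L$ to be any $\mathbb{F}_{p}$-linear permutation of $\mathbb{F}_{p^m}$, whereas your factorization $L^{-1}(bt^{-1})=t^{-1}L^{-1}(b)$ needs $L$ to be $\mathbb{F}_{p^s}$-linear (a $q$-polynomial with $q=p^{s}$). The fix is the one you already use in (iv): $(x,y)\mapsto(x,L(y))$ is a linear permutation of $\mathbb{F}_{p^m}\times\mathbb{F}_{p^m}$, so $Tr_{s}^{m}(xL(y))$ is linear equivalent to $Tr_{s}^{m}(xy)$, and linear equivalence preserves the partial-difference-set property of preimage sets (as recorded at the start of Section~2 of the paper); so it suffices to treat $L=\mathrm{id}$.
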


There is an important tool to characterize partial difference sets in terms of characters.

\begin{lemma}{\emph{\cite{Ma}}}\label{Lemma1}
Let $G$ be an abelian group of order $v$. Suppose that $D$ is a subset of $G$ with $k$ elements which satisfies $-D=D$ and $0 \notin D$. Then $D$ is a $(v, k, \lambda, \mu)$ partial difference set if and only if for each non-principal character $\chi$ of $G$,
\begin{equation*}
  \chi(D)=\frac{\beta\pm \sqrt{\Delta}}{2},
\end{equation*}
where $\chi(D):=\sum_{x \in D}\chi(x)$, $\beta=\lambda-\mu, \gamma=k-\mu, \Delta=\beta^{2}+4\gamma$.
\end{lemma}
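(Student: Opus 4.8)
The plan is to pass to the integral group ring $\mathbb{Z}[G]$ and exploit the fact that the characters of $G$ separate its elements. Identify a subset $S\subseteq G$ with the element $\sum_{g\in S}g$ of $\mathbb{Z}[G]$, write $S^{(-1)}=\sum_{g\in S}g^{-1}$, let $1_{G}$ be the identity element and $G=\sum_{g\in G}g$, and recall that each character $\chi$ extends linearly to a ring homomorphism $\mathbb{Z}[G]\to\mathbb{C}$ with $\chi(1_{G})=1$ and $\chi(G)=v$ or $0$ according as $\chi$ is principal or not. First I would record the standard dictionary: since $0\notin D$ and $-D=D$ (so $D^{(-1)}=D$), reading off the coefficient of $g$ in $DD^{(-1)}=D^{2}$ (which counts the representations $g=d_{1}-d_{2}$ with $d_{i}\in D$) shows that $D$ is a $(v,k,\lambda,\mu)$ partial difference set \emph{if and only if}
\begin{equation*}
  D^{2}=\gamma\,1_{G}+\beta D+\mu G \quad\text{in } \mathbb{Z}[G],
\end{equation*}
where $\beta=\lambda-\mu$ and $\gamma=k-\mu$. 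I would also note that $-D=D$ forces $\overline{\chi(D)}=\chi(D^{(-1)})=\chi(D)$, so $\chi(D)\in\mathbb{R}$ for every $\chi$.

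For the ``only if'' direction I would apply a non-principal character $\chi$ to the displayed identity: since $\chi(G)=0$ this gives $\chi(D)^{2}=\beta\chi(D)+\gamma$, i.e.\ $\chi(D)$ is a root of $X^{2}-\beta X-\gamma$, which is exactly $\chi(D)=(\beta\pm\sqrt{\Delta})/2$ with $\Delta=\beta^{2}+4\gamma$.

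For the ``if'' direction, set $E=D^{2}-\beta D-\gamma\,1_{G}-\mu G\in\mathbb{Z}[G]$; it suffices to prove $E=0$, since then reading off coefficients returns precisely the partial difference set conditions. By hypothesis $\chi(D)^{2}-\beta\chi(D)-\gamma=0$ for every non-principal $\chi$, hence $\chi(E)=0$ for all such $\chi$. The only real point is the principal character $\chi_{0}$, for which $\chi_{0}(E)=k^{2}-\beta k-\gamma-\mu v$; this must be shown to vanish, equivalently the numerical relation $k^{2}-\beta k-\gamma=\mu v$ must be derived from the two-valued character condition. I would get it from Parseval: using $\overline{\chi(D)}=\chi(D)$ one has $\sum_{\chi}\chi(D)^{2}=\sum_{\chi}\chi(DD^{(-1)})=v\cdot(\text{coefficient of } 1_{G}\text{ in } DD^{(-1)})=vk$ and $\sum_{\chi}\chi(D)=v\cdot(\text{coefficient of } 1_{G}\text{ in } D)=0$, so splitting off $\chi_{0}$ yields $\sum_{\chi\ne\chi_{0}}\chi(D)^{2}=vk-k^{2}$ and $\sum_{\chi\ne\chi_{0}}\chi(D)=-k$. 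Substituting $\chi(D)^{2}=\beta\chi(D)+\gamma$ into the first sum gives $vk-k^{2}=\beta(-k)+\gamma(v-1)$, which rearranges (using $\mu=k-\gamma$) to $k^{2}-\beta k-\gamma=\mu v$, i.e.\ $\chi_{0}(E)=0$. Since every character of $G$ kills $E$ and the characters form a basis of the space of $\mathbb{C}$-valued functions on $G$ (so $x\mapsto(\chi(x))_{\chi}$ is injective on $\mathbb{Z}[G]$), we conclude $E=0$, which finishes the proof.

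I expect the only genuine obstacle to be the verification $\chi_{0}(E)=0$ in the converse, i.e.\ the fact that the numerical identity forced by the partial difference set structure is automatically implied by the two-valued character hypothesis; the Parseval computation above is exactly what makes this go through, and it is the step worth isolating. Everything else is a routine unwinding of the group-ring dictionary together with the orthogonality relations for characters.
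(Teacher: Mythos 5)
Your proof is correct, and since the paper only quotes this lemma from the cited survey of Ma without reproving it, there is no in-paper argument to diverge from; your group-ring reduction $D^{2}=\gamma\,1_{G}+\beta D+\mu G$ together with the orthogonality/Parseval computation at the principal character is exactly the standard proof given in that reference. The one step genuinely worth isolating — verifying $\chi_{0}(E)=k^{2}-\beta k-\gamma-\mu v=0$ from the two-valued hypothesis via $\sum_{\chi}\chi(D)=0$ and $\sum_{\chi}\chi(D)^{2}=vk$ — is handled correctly.
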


\section{A new explicit construction of vectorial dual-bent functions}
\label{sec:3}
In this section, we provide a new explicit construction of vectorial dual-bent functions.

\begin{theorem}\label{Theorem1}
Let $s, m, n$ be positive integers with $s \mid m$ and $s \mid n$. Define $F_{i} \ (i \in \mathbb{F}_{p^s}): \mathbb{F}_{p^n}\rightarrow \mathbb{F}_{p^s}$ as
$F_{0}(x)=Tr_{s}^{n}(\alpha_{1}x^{2})$, $F_{i}(x)=Tr_{s}^{n}(\alpha_{2}x^{2})$ if $i$ is a square in $\mathbb{F}_{p^s}^{*}$, $F_{i}(x)=Tr_{s}^{n}(\alpha_{3}x^{2})$ if $i$ is a non-square in $\mathbb{F}_{p^s}^{*}$, where $\alpha_{j} \in \mathbb{F}_{p^n}^{*}, 1\leq j\leq 3$. Define $G: \mathbb{F}_{p^m} \times \mathbb{F}_{p^m}\rightarrow \mathbb{F}_{p^s}$ as $G(y_{1}, y_{2})=Tr_{s}^{m}(\beta y_{1}L(y_{2}))$, where $\beta \in \mathbb{F}_{p^m}^{*}$ and $L(x)=\sum a_{i}x^{q^{i}} \ (q=p^s)$ is a $q$-polynomial over $\mathbb{F}_{p^m}$ inducing a permutation of $\mathbb{F}_{p^m}$. Let $\gamma$ be a nonzero element of $\mathbb{F}_{p^m}$. Then $H: \mathbb{F}_{p^n} \times \mathbb{F}_{p^m} \times \mathbb{F}_{p^m}\rightarrow \mathbb{F}_{p^s}$ defined as
\begin{equation}\label{8}
  H(x, y_{1}, y_{2})=F_{Tr_{s}^{m}(\gamma y_{2}^{2})}(x)+G(y_{1}, y_{2})
\end{equation}
is a vectorial dual-bent function for which for any $c \in \mathbb{F}_{p^s}^{*}$, $(H_{c})^{*}=(H^{*})_{\sigma(c)}$, where $\sigma(c)=c^{-1}$, $H^{*}(x, y_{1}, y_{2})=R_{Tr_{s}^{m}(\gamma(L^{-1}(\beta^{-1}y_{1}))^{2})}(x)-Tr_{s}^{m}(L^{-1}(\beta^{-1}y_{1})$ $y_{2})$, $R_{0}(x)=Tr_{s}^{n}(-4^{-1}\alpha_{1}^{-1}x^{2})$, $R_{i}(x)=Tr_{s}^{n}(-4^{-1}\alpha_{2}^{-1}x^{2})$ if $i$ is a square in $\mathbb{F}_{p^s}^{*}$, $R_{i}(x)=Tr_{s}^{n}(-4^{-1}\alpha_{3}^{-1}x^{2})$ if $i$ is a non-square in $\mathbb{F}_{p^s}^{*}$.
\end{theorem}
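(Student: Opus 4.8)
The plan is to verify the definition of a vectorial dual-bent function directly. Fix an arbitrary $c \in \mathbb{F}_{p^s}^{*}$; I will compute the Walsh transform of the component function $H_{c}$, show that it has constant modulus $p^{(n+2m)/2}$ (so that $H$ is vectorial bent), and read off the dual $(H_{c})^{*}$, checking that it equals $(H^{*})_{c^{-1}}$ with $H^{*}$ as in the statement. Since $\sigma(c)=c^{-1}$ is a permutation of $\mathbb{F}_{p^s}^{*}$, and every component function of $H^{*}$ then arises as the dual of the bent function $H_{c}$ for a suitable $c$, this simultaneously shows that $H^{*}$ is vectorial bent, and hence that $H$ is vectorial dual-bent with $\sigma(c)=c^{-1}$.

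For $(a,b_{1},b_{2})\in\mathbb{F}_{p^n}\times\mathbb{F}_{p^m}\times\mathbb{F}_{p^m}$, I would write out $W_{H_{c}}(a,b_{1},b_{2})$ and use transitivity of the trace to replace $Tr_{1}^{s}(c\,Tr_{s}^{m}(\beta y_{1}L(y_{2})))$ by $Tr_{1}^{m}(c\beta y_{1}L(y_{2}))$. Summing over $y_{1}\in\mathbb{F}_{p^m}$ first then gives $p^{m}$ precisely when $c\beta L(y_{2})=b_{1}$, and $0$ otherwise; as $L$ is a $q$-polynomial with $q=p^{s}$ it is $\mathbb{F}_{p^s}$-linear, and since $c^{-1}\in\mathbb{F}_{p^s}$ this condition reads $y_{2}=y_{2}^{0}:=c^{-1}L^{-1}(\beta^{-1}b_{1})$. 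Hence
\[
 W_{H_{c}}(a,b_{1},b_{2})=p^{m}\,\zeta_{p}^{-Tr_{1}^{m}(b_{2}y_{2}^{0})}\sum_{x\in\mathbb{F}_{p^n}}\zeta_{p}^{Tr_{1}^{n}(c\alpha_{j}x^{2})-Tr_{1}^{n}(ax)},
\]
where $\alpha_{j}\in\{\alpha_{1},\alpha_{2},\alpha_{3}\}$ is selected according to whether $i_{0}:=Tr_{s}^{m}(\gamma (y_{2}^{0})^{2})$ is $0$, a nonzero square, or a non-square in $\mathbb{F}_{p^s}$.

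The inner sum is a classical quadratic exponential sum over $\mathbb{F}_{p^n}$: completing the square and using $\sum_{x}\zeta_{p}^{Tr_{1}^{n}(\delta x^{2})}=\eta_{n}(\delta)\,G_{n}$, where $G_{n}:=\sum_{x}\zeta_{p}^{Tr_{1}^{n}(x^{2})}$ has modulus $p^{n/2}$ (equivalently, applying the known evaluation for $Tr_{s}^{n}(\alpha_{j}x^{2})$ recalled in Section 2), it equals $\eta_{n}(c\alpha_{j})\,G_{n}\,\zeta_{p}^{Tr_{1}^{n}(-a^{2}/(4c\alpha_{j}))}$. Therefore $|W_{H_{c}}(a,b_{1},b_{2})|=p^{m}\cdot p^{n/2}=p^{(n+2m)/2}$, so $H_{c}$ is bent (note that the sign $\eta_{n}(c\alpha_{j})$ generally depends on $b_{1}$, so $H_{c}$ need not be weakly regular, which is why the statement records only the dual and not $\varepsilon_{H_{c}}$), and its dual is $(H_{c})^{*}(a,b_{1},b_{2})=-Tr_{1}^{m}(b_{2}y_{2}^{0})+Tr_{1}^{n}(-a^{2}/(4c\alpha_{j}))$. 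The crucial point for matching this with $(H^{*})_{c^{-1}}$ is that $i_{0}=c^{-2}\,Tr_{s}^{m}(\gamma(L^{-1}(\beta^{-1}b_{1}))^{2})$ and $c^{-2}$ is a nonzero square, so $i_{0}$ is $0$, a square, or a non-square exactly when $i_{0}':=Tr_{s}^{m}(\gamma(L^{-1}(\beta^{-1}b_{1}))^{2})$ is; thus $\alpha_{j}$ is the coefficient dictated by $i_{0}'$, and $-a^{2}/(4c\alpha_{j})=c^{-1}\cdot(-4^{-1}\alpha_{j}^{-1}a^{2})$ matches the coefficient defining $R_{i_{0}'}$. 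Substituting $y_{2}^{0}=c^{-1}L^{-1}(\beta^{-1}b_{1})$ back in and pushing everything through the trace identities $Tr_{1}^{n}(c^{-1}z)=Tr_{1}^{s}(c^{-1}Tr_{s}^{n}(z))$ and $Tr_{1}^{m}(c^{-1}w)=Tr_{1}^{s}(c^{-1}Tr_{s}^{m}(w))$ then yields $(H_{c})^{*}(a,b_{1},b_{2})=Tr_{1}^{s}(c^{-1}H^{*}(a,b_{1},b_{2}))=(H^{*})_{c^{-1}}(a,b_{1},b_{2})$ with $H^{*}$ exactly as stated; since $c$ was arbitrary, this completes the argument.

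The main obstacle I expect is the bookkeeping in this last step: keeping the three-way case distinction ($i_{0}=0$ / square / non-square) consistent while checking that the selector $Tr_{s}^{m}(\gamma y_{2}^{2})$ of the piecewise function transforms correctly under the dual --- in particular that rescaling the selector by the square $c^{-2}$ leaves the chosen branch unchanged, so that $\alpha_{1},\alpha_{2},\alpha_{3}$ correspond to $-4^{-1}\alpha_{1}^{-1},-4^{-1}\alpha_{2}^{-1},-4^{-1}\alpha_{3}^{-1}$ branch by branch. Everything else is routine trace manipulation together with the textbook Gauss-sum evaluation.
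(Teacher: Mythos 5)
Your computation of $W_{H_{c}}$ is the same as the paper's: sum over $y_{1}$ first to pin down $y_{2}=c^{-1}L^{-1}(\beta^{-1}b_{1})$, evaluate the quadratic Gauss sum over $\mathbb{F}_{p^n}$, and read off the dual. Your explicit observation that the selector rescales by the square $c^{-2}$, so the branch $(0/\text{square}/\text{non-square})$ is unchanged and $\alpha_{j}$ pairs with $-4^{-1}\alpha_{j}^{-1}$ branch by branch, is a point the paper leaves implicit, and it is correct.

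However, there is a genuine gap in your final step. The definition of vectorial dual-bent requires the candidate dual $H^{*}$ to be a \emph{vectorial bent} function, and you conclude this solely from the fact that each component $(H^{*})_{c^{-1}}$ arises as the dual of the bent function $H_{c}$. For $p$-ary bent functions the dual of a bent function is \emph{not} automatically bent: this implication holds for weakly regular bent functions, but fails in general for non-weakly regular ones (there are known examples of bent functions whose duals are not bent). You yourself note that $H_{c}$ need not be weakly regular, since the sign $\eta_{n}(c\alpha_{j})$ depends on $b_{1}$ whenever the $\eta_{n}(\alpha_{j})$ are not all equal --- which Theorem 1 permits. So the bentness of $(H_{c})^{*}$ cannot be taken for granted. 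The paper closes this gap by observing that $H^{*}$ has the same structural form as $H$ (a piecewise quadratic in $x$ selected by a quadratic function of $y_{1}$, plus a Maiorana--McFarland part) and repeating the Walsh transform computation for $(H^{*})_{c}$ to verify directly that $|W_{(H^{*})_{c}}(a,b_{1},b_{2})|=p^{m+\frac{n}{2}}$ for all arguments. You need to add this second computation (or an equivalent argument) to complete the proof.
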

\begin{proof}
For any $c \in \mathbb{F}_{p^s}^{*}$,
\begin{equation*}
  \begin{split}
  H_{c}(x, y_{1}, y_{2})
  &\triangleq Tr_{1}^{s}(c H(x, y_{1}, y_{2}))\\
  &=Tr_{1}^{s}(c F_{Tr_{s}^{m}(\gamma y_{2}^{2})}(x))+Tr_{1}^{m}(c\beta y_{1}L(y_{2})).
  \end{split}
\end{equation*}
Let $S$ and $N$ denote the sets of squares and non-squares in $\mathbb{F}_{p^s}^{*}$ respectively. For any $c \in \mathbb{F}_{p^s}^{*}$ and  $(a, b_{1}, b_{2}) \in \mathbb{F}_{p^n} \times \mathbb{F}_{p^m} \times \mathbb{F}_{p^m}$, we have
\begin{equation*}
  \begin{split}
  &W_{H_{c}}(a, b_{1}, b_{2})\\
  &=\sum_{x \in \mathbb{F}_{p^n}}\sum_{y_{1} \in \mathbb{F}_{p^m}}\sum_{y_{2} \in \mathbb{F}_{p^m}}\zeta_{p}^{Tr_{1}^{s}(c F_{Tr_{s}^{m}(\gamma y_{2}^{2})}(x))+Tr_{1}^{m}(c\beta y_{1}L(y_{2}))-Tr_{1}^{n}(ax)-Tr_{1}^{m}(b_{1}y_{1}+b_{2}y_{2})}\\
  &=\sum_{y_{2}: Tr_{s}^{m}(\gamma y_{2}^{2})=0}\zeta_{p}^{-Tr_{1}^{m}(b_{2}y_{2})}\sum_{x \in \mathbb{F}_{p^n}}\zeta_{p}^{Tr_{1}^{n}(c\alpha_{1}x^{2})-Tr_{1}^{n}(ax)}\sum_{y_{1} \in \mathbb{F}_{p^m}}\zeta_{p}^{Tr_{1}^{m}((c\beta L(y_{2})-b_{1})y_{1})}\\
  \end{split}
  \end{equation*}
  \begin{equation}\label{9}
  \begin{split}
  &+\sum_{y_{2}: Tr_{s}^{m}(\gamma y_{2}^{2})\in S}\zeta_{p}^{-Tr_{1}^{m}(b_{2}y_{2})}\sum_{x \in \mathbb{F}_{p^n}}\zeta_{p}^{Tr_{1}^{n}(c\alpha_{2}x^{2})-Tr_{1}^{n}(ax)}\sum_{y_{1} \in \mathbb{F}_{p^m}}\zeta_{p}^{Tr_{1}^{m}((c\beta L(y_{2})-b_{1})y_{1})}\\
  &+\sum_{y_{2}: Tr_{s}^{m}(\gamma y_{2}^{2})\in N}\zeta_{p}^{-Tr_{1}^{m}(b_{2}y_{2})}\sum_{x \in \mathbb{F}_{p^n}}\zeta_{p}^{Tr_{1}^{n}(c\alpha_{3}x^{2})-Tr_{1}^{n}(ax)}\sum_{y_{1} \in \mathbb{F}_{p^m}}\zeta_{p}^{Tr_{1}^{m}((c\beta L(y_{2})-b_{1})y_{1})}.
  \end{split}
\end{equation}
Note that $\sum_{y_{1} \in \mathbb{F}_{p^m}}\zeta_{p}^{Tr_{1}^{m}((c\beta L(y_{2})-b_{1})y_{1})}\neq 0$ if and only if $y_{2}=L^{-1}(c^{-1}\beta^{-1}b_{1})$ $=c^{-1}L^{-1}(\beta^{-1}b_{1})$. Then by (5) and (9), we have
\begin{equation}\label{10}
  \begin{split}
   &W_{H_{c}}(a, b_{1}, b_{2})\\
   &=\delta_{\{0\}}(\theta)\kappa_{1}p^{m+\frac{n}{2}}
   \zeta_{p}^{-Tr_{1}^{n}(c^{-1}4^{-1}\alpha_{1}^{-1}a^{2})-Tr_{1}^{m}(c^{-1}L^{-1}(\beta^{-1}b_{1})b_{2})}\\
   &\ \ +\delta_{S}(\theta)\kappa_{2}p^{m+\frac{n}{2}}
   \zeta_{p}^{-Tr_{1}^{n}(c^{-1}4^{-1}\alpha_{2}^{-1}a^{2})-Tr_{1}^{m}(c^{-1}L^{-1}(\beta^{-1}b_{1})b_{2})}\\
   &\ \ +\delta_{N}(\theta)\kappa_{3}p^{m+\frac{n}{2}}
   \zeta_{p}^{-Tr_{1}^{n}(c^{-1}4^{-1}\alpha_{3}^{-1}a^{2})-Tr_{1}^{m}(c^{-1}L^{-1}(\beta^{-1}b_{1})b_{2})},
  \end{split}
\end{equation}
where $\theta=Tr_{s}^{m}(\gamma(L^{-1}(\beta^{-1}b_{1}))^{2})$, $\kappa_{j}=(-1)^{n-1}\epsilon^{n}\eta_{n}(c\alpha_{j}), 1\leq j\leq 3$, $\epsilon=1$ if $p\equiv 1 \ (mod \ 4)$ and $\epsilon=\sqrt{-1}$ if $p\equiv 3 \ (mod \ 4)$, and for any set $A$, $\delta_{A}: A\rightarrow \{0, 1\}$ is defined by
\begin{equation*}
  \delta_{A}(a)=\left\{
  \begin{split}
  & 1, \ \text{if} \ a \in A,\\
  & 0, \ \text{if} \ a \notin A.
  \end{split}\right.
\end{equation*}
Therefore, by (10), we can see that for any $c \in \mathbb{F}_{p^s}^{*}$, $H_{c}$ is a bent function with dual
\begin{equation}\label{11}
  (H_{c})^{*}=J_{c^{-1}},
\end{equation}
where $J(x, y_{1}, y_{2})=R_{Tr_{s}^{m}(\gamma(L^{-1}(\beta^{-1}y_{1}))^{2})}(x)-Tr_{s}^{m}(L^{-1}(\beta^{-1}y_{1})y_{2})$, $R_{0}(x)=Tr_{s}^{n}(-4^{-1}\alpha_{1}^{-1}x^{2})$, $R_{i}(x)=Tr_{s}^{n}(-4^{-1}\alpha_{2}^{-1}x^{2})$ if $i$ is a square in $\mathbb{F}_{p^s}^{*}$, $R_{i}(x)=Tr_{s}^{n}(-4^{-1}\alpha_{3}^{-1}x^{2})$ if $i$ is a non-square in $\mathbb{F}_{p^s}^{*}$.

Since $J$ has the similar form as $H$, with the similar argument as above, for any $c \in \mathbb{F}_{p^s}^{*}$, $(a, b_{1}, b_{2})\in \mathbb{F}_{p^n}\times \mathbb{F}_{p^m}\times \mathbb{F}_{p^m}$ we obtain
\begin{equation}\label{12}
  \begin{split}
   &W_{J_{c}}(a, b_{1}, b_{2})\\
   &=\delta_{\{0\}}(\theta')\kappa_{1}'p^{m+\frac{n}{2}}
   \zeta_{p}^{Tr_{1}^{n}(c^{-1}\alpha_{1}a^{2})+Tr_{1}^{m}(c^{-1}\beta b_{1}L(b_{2}))}\\
   &\ \ +\delta_{S}(\theta')\kappa_{2}'p^{m+\frac{n}{2}}
    \zeta_{p}^{Tr_{1}^{n}(c^{-1}\alpha_{2}a^{2})+Tr_{1}^{m}(c^{-1}\beta b_{1}L(b_{2}))}\\
   &\ \ +\delta_{N}(\theta')\kappa_{3}'p^{m+\frac{n}{2}}
    \zeta_{p}^{Tr_{1}^{n}(c^{-1}\alpha_{3}a^{2})+Tr_{1}^{m}(c^{-1}\beta b_{1}L(b_{2}))},
  \end{split}
\end{equation}
where $\theta'=Tr_{s}^{m}(\gamma b_{2}^{2})$, $\kappa_{j}'=(-1)^{n-1}\epsilon^{n}\eta_{n}(-c4^{-1}\alpha_{j}^{-1}), 1\leq j\leq 3$. By (12), $|W_{J_{c}}(a, b_{1}, b_{2})|=p^{m+\frac{n}{2}}$ for any $c \in \mathbb{F}_{p^s}^{*}$, $(a, b_{1}, b_{2})\in \mathbb{F}_{p^n}\times \mathbb{F}_{p^m}\times \mathbb{F}_{p^m}$, hence $J$ is vectorial bent. By (11), we have that $H$ is vectorial dual-bent for which $J$ is a vectorial dual of $H$.
\qed
\end{proof}

We give an example of vectorial dual-bent function by using Theorem 1.

\begin{example}\label{Example1}
Let $w$ be a primitive element of $\mathbb{F}_{3^6}$. Let $F_{i} \ (i \in \mathbb{F}_{3^2}): \mathbb{F}_{3^6}\rightarrow \mathbb{F}_{3^2}$ be defined as $F_{0}(x)=Tr_{2}^{6}(x^{2})$, $F_{i}(x)=Tr_{2}^{6}(wx^{2})$ if $i$ is a square in $\mathbb{F}_{3^2}^{*}$, $F_{i}(x)=Tr_{2}^{6}(w^{2}x^{2})$ if $i$ is a non-square in $\mathbb{F}_{3^2}^{*}$. Then by Theorem 1, $H(x, y_{1}, y_{2})=F_{Tr_{2}^{4}(y_{2}^{2})}(x)+Tr_{2}^{4}(y_{1}y_{2})$ is a vectorial dual-bent function from $\mathbb{F}_{3^6}\times \mathbb{F}_{3^4} \times \mathbb{F}_{3^4}$ to $\mathbb{F}_{3^2}$.
\end{example}
\section{New results on vectorial dual-bent functions and partial difference sets}
\label{sec:4}
In this section, we provide new results on vectorial dual-bent functions and partial difference sets, and we illustrate that the known results on constructing partial difference sets by using bent functions and vectorial bent functions given in Proposition 1 (ii), (iii) and Proposition 2 (iii)-(v) can be obtained by our results.

In the sequel, for a given vectorial function $F: V_{n}\rightarrow V_{s}$, let
\begin{equation}\label{13}
  D_{i}=\{x \in V_{n}: F(x)=i\}, i \in V_{s}.
\end{equation}
Note that if $F(0)=0$, then $D_{0}^{F}=D_{0} \backslash \{0\}$. The following proposition gives a formula to compute $\chi_{u}(D_{i})$ for any $u \in V_{n}$ and $i \in V_{s}$, where $\chi_{u}$ denotes the character $\chi_{u}(x)=\zeta_{p}^{\langle u, x\rangle_{n}}, x \in V_{n}$.
\begin{proposition}\label{Proposition3}
Let $F: V_{n}\rightarrow V_{s}$ and for any $c \in V_{s}^{*}$, let $F_{c}$ denote the component function of $F$, that is, $F_{c}(x)=\langle c, F(x)\rangle_{s}$. Then
\begin{equation*}
  |D_{i}|=\chi_{0}(D_{i})=p^{n-s}+p^{-s}\sum_{c \in V_{s}^{*}}W_{F_{c}}(0)\zeta_{p}^{-\langle c, i\rangle_{s}}, \ i \in V_{s},
\end{equation*}
and
\begin{equation*}
  \chi_{u}(D_{i})=p^{-s}\sum_{c \in V_{s}^{*}}W_{F_{c}}(-u)\zeta_{p}^{-\langle c, i\rangle_{s}}, \ u \in  V_{n}^{*}, i \in V_{s}.
\end{equation*}
\end{proposition}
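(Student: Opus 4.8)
The plan is to compute $\chi_u(D_i)$ directly from the definition by expressing the indicator of the event $F(x) = i$ as a character sum over $V_s$. First I would write, for each $x \in V_n$, the standard orthogonality relation
\begin{equation*}
  \frac{1}{p^s}\sum_{c \in V_s}\zeta_p^{\langle c,\, F(x) - i\rangle_s} = \begin{cases} 1 & \text{if } F(x) = i,\\ 0 & \text{otherwise,}\end{cases}
\end{equation*}
which is valid because $\langle \cdot, \cdot\rangle_s$ is non-degenerate, so $c \mapsto \langle c, F(x)-i\rangle_s$ is the trivial character of $V_s$ precisely when $F(x) = i$. Then
\begin{equation*}
  \chi_u(D_i) = \sum_{x \in V_n} \zeta_p^{\langle u, x\rangle_n}\,\delta_{\{i\}}(F(x)) = \frac{1}{p^s}\sum_{x \in V_n}\sum_{c \in V_s}\zeta_p^{\langle u, x\rangle_n + \langle c, F(x)\rangle_s - \langle c, i\rangle_s}.
\end{equation*}

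Next I would interchange the two finite sums and split off the $c = 0$ term. For $c = 0$ the inner sum over $x$ is $\sum_{x \in V_n}\zeta_p^{\langle u,x\rangle_n}$, which equals $p^n$ if $u = 0$ and $0$ otherwise. For $c \neq 0$, recognizing $\langle c, F(x)\rangle_s = F_c(x)$ and comparing with the definition \eqref{1} of the Walsh transform, the inner sum is $\sum_{x \in V_n}\zeta_p^{F_c(x) + \langle u, x\rangle_n} = \sum_{x \in V_n}\zeta_p^{F_c(x) - \langle -u, x\rangle_n} = W_{F_c}(-u)$. Collecting terms gives
\begin{equation*}
  \chi_u(D_i) = \frac{1}{p^s}\Big(\delta_{\{0\}}(u)\,p^n + \sum_{c \in V_s^{*}} W_{F_c}(-u)\,\zeta_p^{-\langle c, i\rangle_s}\Big),
\end{equation*}
which is exactly the claimed formula once one separates the case $u = 0$ (where the $p^n/p^s = p^{n-s}$ term survives) from the case $u \neq 0$ (where it vanishes). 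The identity $|D_i| = \chi_0(D_i)$ is immediate since $\chi_0$ is the trivial character.

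There is essentially no serious obstacle here; the only point requiring a little care is the bookkeeping around the inner product conventions — making sure that $\langle u, x\rangle_n$ appears with the correct sign so that the $c \neq 0$ inner sum matches $W_{F_c}(-u)$ rather than $W_{F_c}(u)$, and that the $c = 0$ contribution is handled via the non-degeneracy of $\langle \cdot,\cdot\rangle_n$ on $V_n$. One should also note that the argument uses non-degeneracy of the inner product on $V_s$ (for the orthogonality relation) and on $V_n$ (to evaluate the $c=0$ term); both are guaranteed by the standing conventions in Section~\ref{intro}. I would present the computation as a single displayed chain of equalities with the $c=0$/$c\neq0$ split made explicit, then state the two formulas as the $u=0$ and $u \in V_n^{*}$ specializations.
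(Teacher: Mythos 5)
Your proposal is correct and follows essentially the same route as the paper's proof: expand the indicator of $F(x)=i$ via orthogonality over $V_s$, swap the sums, peel off the $c=0$ term (giving $p^{n-s}\delta(u,0)$), and identify the $c\neq 0$ inner sums with $W_{F_c}(-u)$. The sign bookkeeping you flag is handled exactly as you describe, so nothing further is needed.
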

\begin{proof}
For any $u \in V_{n}$ and $i \in V_{s}$, we have
\begin{equation*}
\begin{split}
   \chi_{u}(D_{i}) & = \sum_{x \in D_{i}}\zeta_{p}^{\langle u, x\rangle_{n}}\\
                   & =p^{-s}\sum_{x \in V_{n}}\zeta_{p}^{\langle u, x \rangle_{n}}\sum_{c \in V_{s}}\zeta_{p}^{\langle c, F(x)-i\rangle_{s}}\\
                   & =p^{-s}\sum_{c \in V_{s}}\sum_{x \in V_{n}}\zeta_{p}^{\langle c, F(x)\rangle_{s}+\langle u, x \rangle_{n}}\zeta_{p}^{-\langle c,i\rangle_{s}}\\
                   & =p^{n-s}\delta(u, 0)+p^{-s}\sum_{c \in V_{s}^{*}}W_{F_{c}}(-u)\zeta_{p}^{-\langle c,i\rangle_{s}},
                  \end{split}
\end{equation*}
where $\delta (u, 0)=1$ if $u=0$ and $\delta (u, 0)=0$ if $u\neq 0$.
\qed
\end{proof}

Lemma 3 of \cite{Cesmelioglu2} is a special case of Proposition 3 with $i=0$. By Proposition 3, we give the following corollary.
\begin{corollary}\label{Corollary1}
Let $n$ be a positive even integer. Let $F: V_{n}\rightarrow \mathbb{F}_{p^s}$ be a vectorial dual-bent function which satisfies that $F(0)=0, F(-x)=F(x)$ and all component functions are regular (that is, $\varepsilon_{F_{c}}=1$ for all $c \in \mathbb{F}_{p^s}^{*}$) or weakly regular but not regular (that is, $\varepsilon_{F_{c}}=-1$ for all $c \in \mathbb{F}_{p^s}^{*}$). Suppose $\varepsilon_{F_{c}}=\varepsilon, c \in \mathbb{F}_{p^s}^{*}$. Then
\begin{equation*}
  |D_{0}|=p^{n-s}+\varepsilon(p^s-1)p^{\frac{n}{2}-s}, |D_{i}|=p^{n-s}-\varepsilon p^{\frac{n}{2}-s}, i \in \mathbb{F}_{p^s}^{*}.
\end{equation*}
\end{corollary}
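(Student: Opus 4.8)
The plan is to apply Proposition~\ref{Proposition3} with $u = 0$ (equivalently, to use the first displayed formula there, which computes $|D_i| = \chi_0(D_i)$), and then exploit the hypotheses on $F$ to evaluate the Walsh coefficients $W_{F_c}(0)$ explicitly. Since every component function $F_c$, $c \in \mathbb{F}_{p^s}^*$, is a $p$-ary bent function on $V_n$ with $n$ even, equation~(\ref{2}) gives $W_{F_c}(0) = \pm p^{n/2}\zeta_p^{(F_c)^*(0)}$. The regularity (or weak-regularity-but-not-regular) assumption says the sign is a global constant: $W_{F_c}(a) = \varepsilon\, p^{n/2}\zeta_p^{(F_c)^*(a)}$ with $\varepsilon = \varepsilon_{F_c}$ independent of both $c$ and $a$.

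First I would pin down $(F_c)^*(0)$. Because $F(0) = 0$, each $F_c(0) = \langle c, F(0)\rangle_s = 0$. A standard fact for weakly regular bent functions is that $f(0) = 0$ forces $f^*(0) = 0$ as well: summing the Walsh transform, $\sum_{a \in V_n} W_f(a) = p^n \zeta_p^{f(0)} = p^n$, while $\sum_a W_f(a) = \varepsilon p^{n/2}\sum_a \zeta_p^{f^*(a)}$, and comparing with the inverse relation $\sum_a W_f(a)\zeta_p^{\langle a, x\rangle} \cdot (\text{normalization})$ — more directly, one uses $W_{f^*}(0)$-type identities or simply that $f^*$ is itself (weakly regular) bent with $\varepsilon_{f^*} = \varepsilon_f^{-1}$ and the value at $0$ is forced. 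I would cite or reprove that $W_{F_c}(0) = \varepsilon p^{n/2}$ for every $c \in \mathbb{F}_{p^s}^*$. (The hypothesis $F(-x) = F(x)$, i.e. each $F_c$ is even, is what guarantees $D_i$ satisfies $-D_i = D_i$ so the later partial-difference-set machinery applies; it is not strictly needed for the cardinality count, though it is consistent with $F_c(0)=0$ and real Walsh spectrum at $0$.)

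With $W_{F_c}(0) = \varepsilon p^{n/2}$ in hand, the formula from Proposition~\ref{Proposition3} becomes
\begin{equation*}
  |D_i| = p^{n-s} + p^{-s}\sum_{c \in \mathbb{F}_{p^s}^*} \varepsilon p^{n/2} \zeta_p^{-\langle c, i\rangle_s}
        = p^{n-s} + \varepsilon p^{n/2 - s} \sum_{c \in \mathbb{F}_{p^s}^*} \zeta_p^{-\mathrm{Tr}_1^s(ci)}.
\end{equation*}
The remaining step is the elementary character-sum evaluation: $\sum_{c \in \mathbb{F}_{p^s}} \zeta_p^{\mathrm{Tr}_1^s(ci)} = p^s$ if $i = 0$ and $= 0$ if $i \neq 0$, so $\sum_{c \in \mathbb{F}_{p^s}^*}\zeta_p^{-\mathrm{Tr}_1^s(ci)}$ equals $p^s - 1$ when $i = 0$ and $-1$ when $i \neq 0$. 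Substituting gives $|D_0| = p^{n-s} + \varepsilon(p^s-1)p^{n/2-s}$ and $|D_i| = p^{n-s} - \varepsilon p^{n/2-s}$ for $i \in \mathbb{F}_{p^s}^*$, as claimed.

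The only genuine obstacle is the intermediate claim $(F_c)^*(0) = 0$; everything else is bookkeeping. I would handle it cleanly by invoking the inversion formula for the Walsh transform: for bent $f$ on $V_n$, $f$ weakly regular with constant $\varepsilon_f$ and dual $f^*$, one has $\varepsilon_f^{-1} p^{-n/2} W_f(x) = \zeta_p^{f^*(x)}$ and also, dualizing, $\varepsilon_{f^*}^{-1} p^{-n/2} W_{f^*}(y) = \zeta_p^{f^{**}(y)} = \zeta_p^{f(-y)}$ (since $f^{**}(y) = f(-y)$ for $p$-ary bent functions, $n$ even). Evaluating the latter at $y = 0$ and using $f(0) = 0$ gives $W_{f^*}(0) = \varepsilon_{f^*} p^{n/2}$, a positive real multiple of $\varepsilon_{f^*}$; but also $W_{f^*}(0) = \varepsilon_{f^*} p^{n/2}\zeta_p^{f^{**}(0)}$ directly, forcing nothing new there, so instead I use: $W_f(0) = \sum_{x} \zeta_p^{f(x)}$ and the relation $\sum_x \zeta_p^{f(x)} = \varepsilon_f p^{n/2}\zeta_p^{f^*(0)}$ together with the Fourier inversion $p^n \zeta_p^{f(0)} = \sum_{a}W_f(a) \cdot 1$ evaluated against $\chi_0$ — cleanest is the known lemma that for weakly regular bent $f$ with $f(0)=0$, also $f^*(0)=0$ (this appears in the literature on dual bent functions, e.g. in the works cited for Proposition~\ref{Proposition1}). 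I will state it as a cited fact and move directly to the character-sum computation.
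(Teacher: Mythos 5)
Your overall route is the same as the paper's: evaluate $|D_i|=\chi_0(D_i)$ via Proposition~3, reduce to $W_{F_c}(0)=\varepsilon p^{n/2}$, and finish with the character sum $\sum_{c\in\mathbb{F}_{p^s}^*}\zeta_p^{-Tr_1^s(ci)}=p^s\delta(i,0)-1$. The bookkeeping is fine. The gap is exactly at the step you yourself flag as ``the only genuine obstacle'': the claim $(F_c)^*(0)=0$. The ``standard fact'' you propose to cite --- that $f(0)=0$ alone forces $f^*(0)=0$ for a weakly regular bent function --- is false. Take $f(x)=Tr_1^n(x^2+bx)$ on $\mathbb{F}_{p^n}$: it is weakly regular bent with $f(0)=0$, but $f^*(u)=-Tr_1^n((u-b)^2/4)$, so $f^*(0)=-Tr_1^n(b^2/4)\neq 0$ in general. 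Worse, you explicitly assert that the hypothesis $F(-x)=F(x)$ ``is not strictly needed for the cardinality count''; this counterexample shows it is precisely the evenness that rescues the step, and your several attempted derivations of the fact (via $W_{f^*}(0)$ and Fourier inversion) all circle back without closing the argument, as you half-acknowledge.

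The paper closes this step by citing Corollary~2 and Proposition~5 of \cite{Cesmelioglu2}, which give $F^*(0)=0$ for a vectorial dual-bent $F$ satisfying \emph{all} the stated hypotheses (including $F(-x)=F(x)$), whence $(F_c)^*(0)=\langle\sigma(c),F^*(0)\rangle_s=0$. If you want a self-contained argument instead, use the evenness directly: since $F_c(-x)=F_c(x)$ and $F_c(0)=0$, the nonzero points pair off and
\begin{equation*}
W_{F_c}(0)=\sum_{x\in V_n}\zeta_p^{F_c(x)}=1+2\sum_{x\in R}\zeta_p^{F_c(x)}\equiv 1 \pmod{2\mathbb{Z}[\zeta_p]},
\end{equation*}
where $R$ contains one point from each pair $\{x,-x\}$. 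Comparing with $W_{F_c}(0)=\varepsilon p^{n/2}\zeta_p^{(F_c)^*(0)}$ and noting $\varepsilon\equiv 1$, $p^{n/2}\equiv 1 \pmod 2$, you get $\zeta_p^{(F_c)^*(0)}\equiv 1\pmod{2\mathbb{Z}[\zeta_p]}$; since $\zeta_p^{j}-1$ has norm $\pm p$ for $j\neq 0$, this forces $(F_c)^*(0)=0$. With that repaired, the rest of your computation is correct.
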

\begin{proof}
Let $F^{*}$ be a vectorial dual of $F$, then $(F_{c})^{*}=(F^{*})_{\sigma(c)}, c \in \mathbb{F}_{p^s}^{*}$ for a permutation $\sigma$ over $\mathbb{F}_{p^s}^{*}$. By the assumption for $F$, for any $c \in \mathbb{F}_{p^s}^{*}$ we have
 \begin{equation*}
   W_{F_{c}}(0)=\varepsilon p^{\frac{n}{2}}\zeta_{p}^{(F_{c})^{*}(0)}=\varepsilon p^{\frac{n}{2}}\zeta_{p}^{(F^{*})_{\sigma(c)}(0)},
\end{equation*}
where $\varepsilon \in \{\pm 1\}$ is a constant independent of $c$. By Corollary 2 and Proposition 5 of \cite{Cesmelioglu2}, $F^{*}(0)=0$. Therefore, by Proposition 3, we have
\begin{equation*}
  \begin{split}
    |D_{i}| &=p^{n-s}+p^{-s}\sum_{c \in \mathbb{F}_{p^s}^{*}}W_{F_{c}}(0)\zeta_{p}^{-Tr_{1}^{s}(ci)}\\
            &=p^{n-s}+\varepsilon p^{\frac{n}{2}-s}\sum_{c \in \mathbb{F}_{p^s}^{*}}\zeta_{p}^{-Tr_{1}^{s}(ci)}\\
            &=p^{n-s}+\varepsilon p^{\frac{n}{2}-s}(p^s \delta(i, 0)-1),
   \end{split}
\end{equation*}
where $\delta (i, 0)=1$ if $i=0$ and $\delta (i, 0)=0$ if $i\neq 0$.
\qed
\end{proof}

For a vectorial dual-bent function $F: V_{n}\rightarrow \mathbb{F}_{p^s}$ satisfying the condition of Corollary 1, if the permutation $\sigma$ of $\mathbb{F}_{p^s}^{*}$ given by $(F_{c})^{*}=(F^{*})_{\sigma(c)}, c \in \mathbb{F}_{p^s}^{*}$ (where $F^{*}$ is a vectorial dual of $F$) is an identity map, we show that the preimage set of any subset of $\mathbb{F}_{p^s}$ for $F$ forms a partial difference set.

\begin{theorem}\label{Theorem2}
Let $n$ be a positive even integer. Let $F: V_{n}\rightarrow \mathbb{F}_{p^s}$ be a vectorial dual-bent function which satisfies that $F(0)=0, F(-x)=F(x)$ and all component functions are regular (that is, $\varepsilon_{F_{c}}=1$ for all $c \in \mathbb{F}_{p^s}^{*}$) or weakly regular but not regular (that is, $\varepsilon_{F_{c}}=-1$ for all $c \in \mathbb{F}_{p^s}^{*}$). Suppose $\varepsilon_{F_{c}}=\varepsilon, c \in \mathbb{F}_{p^s}^{*}$. Let $F^{*}$ be a vectorial dual of $F$ and $(F_{c})^{*}=(F^{*})_{\sigma(c)}, c \in \mathbb{F}_{p^s}^{*}$ for a permutation $\sigma$ over $\mathbb{F}_{p^s}^{*}$. If $\sigma$ is an identity map, that is, $\sigma(c)=c, c \in \mathbb{F}_{p^s}^{*}$, then for any subset $A$ of $\mathbb{F}_{p^s}$, $D_{A}^{F}=\{x \in V_{n}^{*}: F(x) \in A\}$ is a $(p^n, k, \lambda, \mu)$ partial difference set, where

\emph{(i)} if $0 \in A$,
\begin{equation*}
  \begin{split}
      & k=|A|p^{n-s}+\varepsilon (p^s-|A|) p^{\frac{n}{2}-s}-1,\\
      & \lambda=p^{n-2s}|A|^{2}+\varepsilon(p^s-|A|)p^{\frac{n}{2}-s}-2,\\
      & \mu=p^{n-2s}|A|^{2}+\varepsilon|A|p^{\frac{n}{2}-s};
  \end{split}
\end{equation*}

\emph{(ii)} if $0 \notin A$,
\begin{equation*}
  \begin{split}
       & k=|A|p^{n-s}-\varepsilon|A|p^{\frac{n}{2}-s}, \\
       & \lambda=p^{n-2s}|A|^{2}+\varepsilon(p^s-3|A|)p^{\frac{n}{2}-s},\\
       & \mu=p^{n-2s}|A|^{2}-\varepsilon|A|p^{\frac{n}{2}-s}.
   \end{split}
\end{equation*}
\end{theorem}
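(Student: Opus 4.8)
The plan is to verify the character-theoretic criterion of Lemma \ref{Lemma1}. Since $F(-x)=F(x)$ we have $-D_{A}^{F}=D_{A}^{F}$, and $0\notin D_{A}^{F}$ because $D_{A}^{F}\subseteq V_{n}^{*}$; so the real work is to show that $\chi_{u}(D_{A}^{F})$ takes only two (real) values as $u$ ranges over $V_{n}^{*}$, and then to read off $k,\lambda,\mu$. Since $D_{i}=\{x\in V_{n}:F(x)=i\}$ and $F(0)=0$, the set $\bigcup_{i\in A}D_{i}=\{x\in V_{n}:F(x)\in A\}$ equals $D_{A}^{F}$ if $0\notin A$ and equals $D_{A}^{F}\cup\{0\}$ if $0\in A$. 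As $\chi_{u}(0)=1$ for $u\in V_{n}^{*}$, this yields
\begin{equation*}
  \chi_{u}(D_{A}^{F})=\sum_{i\in A}\chi_{u}(D_{i})-\delta_{A}(0).
\end{equation*}

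First I would compute $\sum_{i\in A}\chi_{u}(D_{i})$ via Proposition \ref{Proposition3}. Let $F^{*}$ be a vectorial dual of $F$. Because $\sigma$ is the identity, $(F_{c})^{*}=(F^{*})_{c}$ for all $c\in\mathbb{F}_{p^{s}}^{*}$, and since $\varepsilon_{F_{c}}=\varepsilon$ we obtain $W_{F_{c}}(-u)=\varepsilon p^{\frac{n}{2}}\zeta_{p}^{(F^{*})_{c}(-u)}=\varepsilon p^{\frac{n}{2}}\zeta_{p}^{\langle c,F^{*}(-u)\rangle_{s}}$. Substituting this into the second formula of Proposition \ref{Proposition3} and using $\sum_{c\in\mathbb{F}_{p^{s}}^{*}}\zeta_{p}^{\langle c,z\rangle_{s}}=p^{s}\delta(z,0)-1$ gives
\begin{equation*}
  \chi_{u}(D_{i})=\varepsilon p^{\frac{n}{2}-s}\sum_{c\in\mathbb{F}_{p^{s}}^{*}}\zeta_{p}^{\langle c,\,F^{*}(-u)-i\rangle_{s}}=\varepsilon p^{\frac{n}{2}}\delta(F^{*}(-u)-i,0)-\varepsilon p^{\frac{n}{2}-s},
\end{equation*}
hence $\sum_{i\in A}\chi_{u}(D_{i})=\varepsilon p^{\frac{n}{2}}\delta_{A}(F^{*}(-u))-\varepsilon|A|p^{\frac{n}{2}-s}$. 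Therefore $\chi_{u}(D_{A}^{F})$ equals $\varepsilon p^{\frac{n}{2}}-\varepsilon|A|p^{\frac{n}{2}-s}-\delta_{A}(0)$ when $F^{*}(-u)\in A$, and equals $-\varepsilon|A|p^{\frac{n}{2}-s}-\delta_{A}(0)$ when $F^{*}(-u)\notin A$. These are two real numbers differing by $\varepsilon p^{\frac{n}{2}}$, so $\Delta=p^{n}$ and Lemma \ref{Lemma1} applies, showing $D_{A}^{F}$ is a partial difference set.

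It then remains to pin down the parameters. The size $k=|D_{A}^{F}|=\sum_{i\in A}|D_{i}|-\delta_{A}(0)$ comes directly from Corollary \ref{Corollary1} (i.e. $|D_{0}|=p^{n-s}+\varepsilon(p^{s}-1)p^{\frac{n}{2}-s}$ and $|D_{i}|=p^{n-s}-\varepsilon p^{\frac{n}{2}-s}$ for $i\neq 0$), treating the subcases $0\in A$ and $0\notin A$ separately. Since $\chi_{u}(D_{A}^{F})$ is a root of $t^{2}-(\lambda-\mu)t-(k-\mu)=0$ by Lemma \ref{Lemma1}, $\lambda-\mu$ is the sum and $\mu-k$ is the product of the two values just obtained; solving for $\mu$ and then $\lambda$ and simplifying (using $\varepsilon^{2}=1$ and $p^{\frac{n}{2}}=p^{s}\cdot p^{\frac{n}{2}-s}$) produces the stated closed forms in both cases (i) and (ii).

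I expect the only delicate point to be the bookkeeping of the $\delta_{A}(0)$ correction and keeping the subcases $0\in A$, $0\notin A$ cleanly apart in the cardinality and parameter computations; the essential mechanism — and the place where the identity-permutation hypothesis on $\sigma$ is used in a crucial way — is that it collapses the sum over the component parameter $c$ into a single Kronecker delta in $F^{*}(-u)$, after which there is no genuine analytic obstacle, only elementary algebra.
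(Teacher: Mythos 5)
Your proposal is correct and follows essentially the same route as the paper's proof: apply Proposition \ref{Proposition3} with the identity permutation to collapse the inner sum over $c$ into a Kronecker delta in $F^{*}(-u)$, obtain the two-valued character sum, get $k$ from Corollary \ref{Corollary1}, and invoke Lemma \ref{Lemma1}; the stated sum/product relations do recover the listed $\lambda,\mu$ in both subcases. The only stylistic difference is that you carry the $\delta_{A}(0)$ correction uniformly rather than splitting into the cases $0\in A$ and $0\notin A$ from the start, which is harmless.
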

\begin{proof}
Let $D=\{x \in V_{n}: F(x)\in A\}$. For any $u \in V_{n}^{*}$, by Proposition 3 and the assumption for $F$,
\begin{equation*}
\begin{split}
    \chi_{u}(D) & =\sum_{i \in A}\chi_{u}(D_{i}) \\
                & =p^{-s}\sum_{i \in A}\sum_{c \in \mathbb{F}_{p^s}^{*}}W_{F_{c}}(-u)\zeta_{p}^{Tr_{1}^{s}(-ci)}\\
                & =\varepsilon p^{\frac{n}{2}-s}\sum_{i \in A}\sum_{c \in \mathbb{F}_{p^s}^{*}}\zeta_{p}^{Tr_{1}^{s}(cF^{*}(-u)-ci)}\\
                & =\left\{\begin{split}
                               \varepsilon p^{\frac{n}{2}}-\varepsilon p^{\frac{n}{2}-s}|A|, & \ \ \text{if} \ F^{*}(-u) \in A, \\
                               -\varepsilon p^{\frac{n}{2}-s}|A|, & \ \ \text{if} \ F^{*}(-u) \notin A.
                          \end{split}\right.
              \end{split}
\end{equation*}
(i)  If $0 \in A$, then $D_{A}^{F}=D \backslash \{0\}$ and $\chi_{u}(D_{A}^{F})=\chi_{u}(D)-1$. By Corollary 1, $|D_{A}^{F}|=(|A|-1)(p^{n-s}-\varepsilon p^{\frac{n}{2}-s})+(p^{n-s}+\varepsilon(p^s-1)p^{\frac{n}{2}-s}-1)=|A|p^{n-s}+\varepsilon(p^s-|A|)p^{\frac{n}{2}-s}-1$. By Lemma 1, $D_{A}^{F}$ is a $(p^n, k, \lambda, \mu)$ partial difference set with
\begin{equation*}
  \begin{split}
      & k=|A|p^{n-s}+\varepsilon (p^s-|A|) p^{\frac{n}{2}-s}-1,\\
      & \lambda=p^{n-2s}|A|^{2}+\varepsilon(p^s-|A|)p^{\frac{n}{2}-s}-2,\\
      & \mu=p^{n-2s}|A|^{2}+\varepsilon|A|p^{\frac{n}{2}-s}.
  \end{split}
\end{equation*}
(ii) If $0 \notin A$, then $D_{A}^{F}=D$ and $\chi_{u}(D_{A}^{F})=\chi_{u}(D)$. By Corollary 1, $|D_{A}^{F}|=|A|(p^{n-s}-\varepsilon p^{\frac{n}{2}-s})$. By Lemma 1, $D_{A}^{F}$ is a $(p^n, k, \lambda, \mu)$ partial difference set with
\begin{equation*}
  \begin{split}
       & k=|A|p^{n-s}-\varepsilon |A|p^{\frac{n}{2}-s}, \\
       & \lambda=p^{n-2s}|A|^{2}+\varepsilon (p^s-3|A|)p^{\frac{n}{2}-s},\\
       & \mu=p^{n-2s}|A|^{2}-\varepsilon |A|p^{\frac{n}{2}-s}.
   \end{split}
\end{equation*}
\qed
\end{proof}

\begin{remark}\label{Remark1}
Let $n$ be a positive even integer. Let $f: V_{n}\rightarrow \mathbb{F}_{p}$ be a weakly regular bent function of $l$-form with $f(0)=0$ and $gcd(l-1, p-1)=1$ (which implies that $f(-x)=f(x)$ and $cf$ are weakly regular bent with $\varepsilon_{cf}=\varepsilon_{f}$ for all $c \in \mathbb{F}_{p}^{*}$). Then by Proposition 2 and its proof of \cite{Cesmelioglu1}, $f$ (seen as a vectorial bent function from $V_{n}$ to $V_{1}$) is vectorial dual-bent with $(cf)^{*}=\sigma(c)f^{*}, \sigma(c)=c^{1-t}, c \in \mathbb{F}_{p}^{*}$, where $(l-1)(t-1)\equiv 1 \ mod \ (p-1)$. When $p=3$ and $l=2$, then $\sigma(c)=c^{-1}, c \in \mathbb{F}_{3}^{*}$. Note that $c^{-1}=c, c \in \mathbb{F}_{3}^{*}$, that is, $\sigma$ is an identity map over $\mathbb{F}_{3}^{*}$. Hence, in this case, $D_{0} \ \backslash \ \{0\}, D_{1}, D_{2}$ for $f$ are all partial difference sets by Theorem 2, which is the main result given in \cite{Tan}.
\end{remark}
\begin{remark}\label{Remark2}
When $s\geq 2$, to the best of our knowledge, the known vectorial dual-bent functions $F: V_{n}\rightarrow V_{s}$ such that the permutation $\sigma$ of $V_{s}^{*}$ given by $(F_{c})^{*}=(F^{*})_{\sigma(c)}, c \in V_{s}^{*}$ (where $F^{*}$ is a vectorial dual of $F$) is an identity map are vectorial partial spread bent functions. For vectorial partial spread bent functions, the above theorem is clearly true by Theorem 2.2 of \cite{Ma}. An interesting open problem is that whether there exist vectorial dual-bent functions $F$ which are not linear equivalent to vectorial partial spread bent functions such that the permutation $\sigma$ is an identity map.
\end{remark}

For a vectorial dual-bent function $F: V_{n}\rightarrow \mathbb{F}_{p^s}$ satisfying the condition of Corollary 1, if the permutation $\sigma$ of $\mathbb{F}_{p^s}^{*}$ given by $(F_{c})^{*}=(F^{*})_{\sigma(c)}, c \in \mathbb{F}_{p^s}^{*}$ (where $F^{*}$ is a vectorial dual of $F$) is not an identity map, but it satisfies $\sigma^{-1}(c)H_{l}=cH_{l}, c \in \mathbb{F}_{p^s}^{*}$ for some subgroup $H_{l}=\{x^{l}, x \in \mathbb{F}_{p^s}^{*}\}\subseteq \mathbb{F}_{p^s}^{*}$, we show that any union of sets from $\{D_{0}^{F}, D_{\beta H_{l}}^{F}, \beta \in \mathbb{F}_{p^s}^{*}\}$ is a partial difference set.

\begin{theorem}\label{Theorem3}
Let $n$ be a positive even integer. Let $F: V_{n}\rightarrow \mathbb{F}_{p^s}$ be a vectorial dual-bent function which satisfies that $F(0)=0, F(-x)=F(x)$ and all component functions are regular (that is, $\varepsilon_{F_{c}}=1$ for all $c \in \mathbb{F}_{p^s}^{*}$) or weakly regular but not regular (that is, $\varepsilon_{F_{c}}=-1$ for all $c \in \mathbb{F}_{p^s}^{*}$). Suppose $\varepsilon_{F_{c}}=\varepsilon, c \in \mathbb{F}_{p^s}^{*}$. Let $F^{*}$ be a vectorial dual of $F$ and $(F_{c})^{*}=(F^{*})_{\sigma(c)}, c \in \mathbb{F}_{p^s}^{*}$ for a permutation $\sigma$ over $\mathbb{F}_{p^s}^{*}$. If $\sigma$ satisfies that $\sigma^{-1}(c)H_{l}=cH_{l}, c \in \mathbb{F}_{p^s}^{*}$ for some subgroup $H_{l}=\{x^{l}, x \in \mathbb{F}_{p^s}^{*}\}\subseteq \mathbb{F}_{p^s}^{*}$, then for any $\beta \in \mathbb{F}_{p^s}^{*}$, $D_{\beta H_{l}}^{F}=\{x \in V_{n}^{*}: F(x) \in \beta H_{l}\}$ is a $(p^n, k, \lambda, \mu)$ partial difference set, where
\begin{equation*}
  \begin{split}
      & k=\frac{p^s-1}{gcd(l, p^s-1)}p^{n-s}-\varepsilon p^{\frac{n}{2}-s}\frac{p^s-1}{gcd(l, p^s-1)},\\
      & \lambda=(\frac{p^s-1}{gcd(l, p^s-1)})^{2}p^{n-2s}+\varepsilon p^{\frac{n}{2}-s}(p^s-3\frac{p^s-1}{gcd(l, p^s-1)}),\\
  \end{split}
\end{equation*}
  \begin{equation}\label{14}
  \begin{split}
      \mu=(\frac{p^s-1}{gcd(l, p^s-1)})^{2}p^{n-2s}-\varepsilon p^{\frac{n}{2}-s}\frac{p^s-1}{gcd(l, p^s-1)}.
  \end{split}
\end{equation}
Furthermore, any union of sets from $\{D_{0}^{F}, D_{\beta H_{l}}^{F}, \beta \in \mathbb{F}_{p^s}^{*}\}$, denoted by $A$, is a $(p^n, k, \lambda, \mu)$ partial difference set, where
\begin{equation}\label{15}
  \begin{split}
      & k=(m_{1}\frac{p^s-1}{gcd(l, p^s-1)}+m_{0})p^{n-s}+\varepsilon p^{\frac{n}{2}-s}(m_{0}p^s-(m_{1}\frac{p^s-1}{gcd(l, p^s-1)}+m_{0}))\\
      & \ \ \ \ \ -m_{0},\\
      & \lambda=(m_{1}\frac{p^s-1}{gcd(l, p^s-1)}+m_{0})^{2}p^{n-2s}+\varepsilon p^{\frac{n}{2}-s}(p^s+(2m_{0}-3)m_{1}\frac{p^s-1}{gcd(l, p^s-1)}\\
      & \ \ \ \ \ -m_{0})-2m_{0},\\
      & \mu=(m_{1}\frac{p^s-1}{gcd(l, p^s-1)}+m_{0})^{2}p^{n-2s}+\varepsilon p^{\frac{n}{2}-s}((2m_{0}-1)m_{1}\frac{p^s-1}{gcd(l, p^s-1)}\\
      & \ \ \ \ \ +m_{0}),
  \end{split}
\end{equation}
$0\leq m_{1}\leq gcd(l, p^s-1)$ is the number of distinct $D_{\beta H_{l}}^{F}$ in $A$, $m_{0}=1$ if $ D_{0}^{F} \subseteq A$ and $m_{0}=0$ if $D_{0}^{F} \not \subseteq A$. In particular, if $\sigma(c)=c^{-t}, c \in \mathbb{F}_{p^s}^{*}$ for some $t$ with $gcd(t, p^s-1)=1$, let $l$ satisfy $gcd(l, p^s-1)\mid (1+r)$, where $tr\equiv 1 \ mod \ (p^s-1)$, then $D_{\beta H_{l}}^{F}, \beta \in \mathbb{F}_{p^s}^{*}$ are $(p^n, k, \lambda, \mu)$ partial difference sets, where $k, \lambda, \mu$ are given in (14), and any union of sets from $\{D_{0}^{F}, D_{\beta H_{l}}^{F}, \beta \in \mathbb{F}_{p^s}^{*}\}$ is a $(p^n, k, \lambda, \mu)$ partial difference set, where $k, \lambda, \mu$ are given in (15).
\end{theorem}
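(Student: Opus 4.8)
The plan is to verify, via Lemma~\ref{Lemma1}, that as $u$ ranges over $V_{n}^{*}$ the character sums $\chi_{u}(D_{\beta H_{l}}^{F})$ (respectively $\chi_{u}(A)$) take only two values, and then to read off $k,\lambda,\mu$ from those two values. The computational engine is Proposition~\ref{Proposition3}: for $u\in V_{n}^{*}$ one has $\chi_{u}(D_{i})=p^{-s}\sum_{c\in\mathbb{F}_{p^{s}}^{*}}W_{F_{c}}(-u)\zeta_{p}^{-Tr_{1}^{s}(ci)}$. Since $F$ is vectorial dual-bent with $\varepsilon_{F_{c}}=\varepsilon$ for all $c$, we have $W_{F_{c}}(-u)=\varepsilon p^{\frac{n}{2}}\zeta_{p}^{(F_{c})^{*}(-u)}=\varepsilon p^{\frac{n}{2}}\zeta_{p}^{Tr_{1}^{s}(\sigma(c)F^{*}(-u))}$, so writing $w:=F^{*}(-u)$ we get $\chi_{u}(D_{i})=\varepsilon p^{\frac{n}{2}-s}\sum_{c\in\mathbb{F}_{p^{s}}^{*}}\zeta_{p}^{Tr_{1}^{s}(\sigma(c)w-ci)}$. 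The structural observation I would isolate first is that the hypothesis $\sigma^{-1}(c)H_{l}=cH_{l}$ for all $c$ forces $\sigma^{-1}$ (hence $\sigma$) to map each coset of $H_{l}$ onto itself, so that $\sum_{c\in C}\zeta_{p}^{Tr_{1}^{s}(\sigma(c)w)}=\sum_{c\in C}\zeta_{p}^{Tr_{1}^{s}(cw)}$ for every coset $C$ of $H_{l}$ and every $w\in\mathbb{F}_{p^{s}}$.

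For $D_{\beta H_{l}}^{F}$, I would expand $\chi_{u}(D_{\beta H_{l}}^{F})=\sum_{i\in\beta H_{l}}\chi_{u}(D_{i})$ and regroup the $c$-summation over the cosets of $H_{l}$. Because $\beta H_{l}$ is stable under multiplication by $H_{l}$, the inner sum $\sum_{i\in\beta H_{l}}\zeta_{p}^{-Tr_{1}^{s}(ci)}$ depends only on the coset $cH_{l}$, so the $\sigma$-erasure identity above applies coset by coset; swapping the order of summation and using orthogonality of additive characters of $\mathbb{F}_{p^{s}}$ then collapses the double sum to $\chi_{u}(D_{\beta H_{l}}^{F})=\varepsilon p^{\frac{n}{2}-s}\bigl(p^{s}\delta_{\beta H_{l}}(w)-N\bigr)$ with $N=\tfrac{p^{s}-1}{gcd(l,p^{s}-1)}=|H_{l}|$. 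Thus $\chi_{u}(D_{\beta H_{l}}^{F})$ equals $v_{1}:=\varepsilon p^{\frac{n}{2}}-\varepsilon p^{\frac{n}{2}-s}N$ when $F^{*}(-u)\in\beta H_{l}$ and $v_{2}:=-\varepsilon p^{\frac{n}{2}-s}N$ otherwise. Noting $-D_{\beta H_{l}}^{F}=D_{\beta H_{l}}^{F}$ (from $F(-x)=F(x)$) and $0\notin D_{\beta H_{l}}^{F}$, and using Corollary~\ref{Corollary1} to obtain $|D_{\beta H_{l}}^{F}|=N(p^{n-s}-\varepsilon p^{\frac{n}{2}-s})$, Lemma~\ref{Lemma1} applies: since $v_{1},v_{2}$ are precisely the two roots of $t^{2}-(v_{1}+v_{2})t+v_{1}v_{2}=0$, the system $\lambda-\mu=v_{1}+v_{2}$, $\mu-k=v_{1}v_{2}$ determines $\lambda,\mu$ and certifies the partial difference set property; simplifying (using $\varepsilon^{2}=1$) yields exactly the parameters in (14).

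For the union, write $A=\{x\in V_{n}^{*}:F(x)\in B\}$ where $B$ is the union of the $m_{1}$ distinct cosets $\beta H_{l}$ appearing in $A$, together with $\{0\}$ precisely when $D_{0}^{F}\subseteq A$; then $m_{0}\in\{0,1\}$ and $|B|=m_{1}N+m_{0}$. Since $B$ is again $H_{l}$-stable, the identical coset regrouping gives $\chi_{u}(\{x\in V_{n}:F(x)\in B\})=\varepsilon p^{\frac{n}{2}-s}\bigl(p^{s}\delta_{B}(F^{*}(-u))-|B|\bigr)$, and subtracting the $x=0$ contribution when $m_{0}=1$ produces $\chi_{u}(A)\in\{a_{1},a_{2}\}$ with $a_{1}=\varepsilon p^{\frac{n}{2}-s}(p^{s}-|B|)-m_{0}$ and $a_{2}=-\varepsilon p^{\frac{n}{2}-s}|B|-m_{0}$. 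With $|A|=k$ computed from Corollary~\ref{Corollary1}, Lemma~\ref{Lemma1} again forces $\lambda-\mu=a_{1}+a_{2}$ and $\mu-k=a_{1}a_{2}$; expanding and repeatedly using $m_{0}^{2}=m_{0}$ and $\varepsilon^{2}=1$ reproduces (15). Finally, for the special case $\sigma(c)=c^{-t}$ with $gcd(t,p^{s}-1)=1$: here $\sigma^{-1}(c)=c^{-r}$ where $tr\equiv1 \ mod \ (p^{s}-1)$, so the coset condition $\sigma^{-1}(c)H_{l}=cH_{l}$ is equivalent to $c^{-(1+r)}\in H_{l}$ for all $c$, i.e. $\{c^{1+r}:c\in\mathbb{F}_{p^{s}}^{*}\}\subseteq H_{l}$; the hypothesis $gcd(l,p^{s}-1)\mid(1+r)$ gives $gcd(l,p^{s}-1)\mid gcd(1+r,p^{s}-1)$, and since $\mathbb{F}_{p^{s}}^{*}$ is cyclic this inclusion of subgroups holds, so the general statement applies with the stated $l$.

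The main obstacle is not conceptual but bookkeeping: carrying the $\delta$-functions and the $x=0$ correction cleanly through both character-sum computations, and then checking that the elementary symmetric functions $v_{1}+v_{2},v_{1}v_{2}$ (resp.\ $a_{1}+a_{2},a_{1}a_{2}$) collapse to the exact closed forms listed in (14) and (15). The one genuinely new ingredient beyond the techniques already in the paper (Proposition~\ref{Proposition3}, Corollary~\ref{Corollary1}, Lemma~\ref{Lemma1}) is the observation that the coset condition on $\sigma$ is exactly what allows one to delete $\sigma$ inside each $H_{l}$-coset sub-sum; everything else is a routine, if lengthy, application of additive-character orthogonality.
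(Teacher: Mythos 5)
Your proposal is correct and follows essentially the same route as the paper's proof: Proposition~3 plus the dual-bentness to reduce $\chi_{u}(D_{\beta H_{l}}^{F})$ to a double character sum, the coset condition on $\sigma$ to erase $\sigma$ from that sum (your per-coset re-indexing is just a repackaging of the paper's global substitution $c\mapsto\sigma^{-1}(c)$ followed by $\sigma^{-1}(c)H_{l}=cH_{l}$), and then Corollary~1 and Lemma~1 to read off the parameters, with the same reduction of the divisibility condition in the special case $\sigma(c)=c^{-t}$. The two-valued character sums and the resulting $k,\lambda,\mu$ all match (14) and (15).
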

\begin{proof}
For any $\beta \in \mathbb{F}_{p^s}^{*}$, $u \in V_{n}^{*}$, by Proposition 3 and the assumption for $F$,
\begin{equation*}
\begin{split}
    \chi_{u}(D_{\beta H_{l}}^{F}) & =\sum_{i \in \beta H_{l}}\chi_{u}(D_{i}) \\
                & =p^{-s}\sum_{i \in \beta H_{l}}\sum_{c \in \mathbb{F}_{p^s}^{*}}W_{F_{c}}(-u)\zeta_{p}^{Tr_{1}^{s}(-ci)}\\
                & =\varepsilon p^{\frac{n}{2}-s}\sum_{i \in \beta H_{l}}\sum_{c \in \mathbb{F}_{p^s}^{*}}\zeta_{p}^{Tr_{1}^{s}(\sigma(c)F^{*}(-u)-ci)}\\
                & =\varepsilon p^{\frac{n}{2}-s}\sum_{c \in \mathbb{F}_{p^s}^{*}}\zeta_{p}^{Tr_{1}^{s}(\sigma(c)F^{*}(-u))}\sum_{i \in \beta H_{l}}\zeta_{p}^{Tr_{1}^{s}(-ci)}\\
                & =\varepsilon p^{\frac{n}{2}-s}\sum_{c \in \mathbb{F}_{p^s}^{*}}\zeta_{p}^{Tr_{1}^{s}(\sigma(c)F^{*}(-u))}\sum_{i \in H_{l}}\zeta_{p}^{Tr_{1}^{s}(-c \beta i)}\\
                & =\varepsilon p^{\frac{n}{2}-s}\sum_{c \in \mathbb{F}_{p^s}^{*}}\zeta_{p}^{Tr_{1}^{s}(c F^{*}(-u))}\sum_{i \in H_{l}}\zeta_{p}^{Tr_{1}^{s}(-\sigma^{-1}(c) \beta i)}\\
                & =\varepsilon p^{\frac{n}{2}-s}\sum_{c \in \mathbb{F}_{p^s}^{*}}\zeta_{p}^{Tr_{1}^{s}(c F^{*}(-u))}\sum_{i \in \sigma^{-1}(c)H_{l}}\zeta_{p}^{Tr_{1}^{s}(-\beta i)}\\
                & =\varepsilon p^{\frac{n}{2}-s}\sum_{c \in \mathbb{F}_{p^s}^{*}}\zeta_{p}^{Tr_{1}^{s}(c F^{*}(-u))}\sum_{i \in cH_{l}}\zeta_{p}^{Tr_{1}^{s}(-\beta i)}\\
\end{split}
\end{equation*}
\begin{equation}\label{16}
\begin{split}
                & =\varepsilon p^{\frac{n}{2}-s}\sum_{c \in \mathbb{F}_{p^s}^{*}}\zeta_{p}^{Tr_{1}^{s}(c F^{*}(-u))}\sum_{i \in H_{l}}\zeta_{p}^{Tr_{1}^{s}(-\beta ci)}\\
                & =\varepsilon p^{\frac{n}{2}-s}\sum_{i \in H_{l}}\sum_{c \in \mathbb{F}_{p^s}^{*}}\zeta_{p}^{Tr_{1}^{s}(c(F^{*}(-u)-\beta i))}\\
                & =\left\{\begin{split}
                \varepsilon p^{\frac{n}{2}}-\varepsilon p^{\frac{n}{2}-s}|H_{l}|, & \ \ \text{if} \ F^{*}(-u) \in \beta H_{l}, \\
                               -\varepsilon p^{\frac{n}{2}-s}|H_{l}|, & \ \ \text{if} \ F^{*}(-u) \notin \beta H_{l},
                \end{split}\right.
    \end{split}
\end{equation}
where in the eighth equation we use the property that $\sigma^{-1}(c)H_{l}=cH_{l}, c \in \mathbb{F}_{p^s}^{*}$. By Corollary 1, Lemma 1 and (16), $D_{\beta H_{l}}^{F}$ is a $(p^n, k, \lambda, \mu)$ partial difference set with
\begin{equation*}
  \begin{split}
      & k=|H_{l}|p^{n-s}-\varepsilon p^{\frac{n}{2}-s}|H_{l}|,\\
      & \lambda=|H_{l}|^{2}p^{n-2s}+\varepsilon p^{\frac{n}{2}-s}(p^s-3|H_{l}|),\\
      & \mu=|H_{l}|^{2}p^{n-2s}-\varepsilon p^{\frac{n}{2}-s}|H_{l}|.
  \end{split}
\end{equation*}
Note that $|H_{l}|=\frac{p^s-1}{gcd(l, p^s-1)}$. Let $w$ denote a primitive element of $\mathbb{F}_{p^s}$. For any $u \in V_{n}^{*}$ and $0\leq i_{1}< \dots < i_{m_{1}}\leq [\mathbb{F}_{p^s}^{*}: H_{l}]-1=gcd(l, p^s-1)-1$, by (16),
\begin{equation}\label{17}
\begin{split}
           &\sum_{j=1}^{m_{1}}\chi_{u}(D_{w^{i_{j}}H_{l}}^{F})\\
           & = \left\{\begin{split}
                \varepsilon p^{\frac{n}{2}}-\varepsilon p^{\frac{n}{2}-s}m_{1}|H_{l}|, & \ \ \text{if} \ F^{*}(-u) \in w^{i_{j_{0}}} H_{l} \ \text{for some} \ 1\leq j_{0} \leq m_{1}, \\
                -\varepsilon p^{\frac{n}{2}-s}m_{1}|H_{l}|, & \ \ \text{if} \ F^{*}(-u) \notin \cup_{j=1}^{m_{1}}w^{i_{j}} H_{l}.
                \end{split}\right.
    \end{split}
\end{equation}
By Proposition 4 of \cite{Cesmelioglu2},
\begin{equation}\label{18}
  \chi_{u}(D_{0}^{F})=\left\{
  \begin{split}
  \varepsilon p^{\frac{n}{2}-s}(p^s-1)-1, & \ \ \text{if} \ F^{*}(-u)=0,\\
  -\varepsilon p^{\frac{n}{2}-s}-1, & \ \ \text{if} \ F^{*}(-u)\neq 0.
  \end{split}\right.
\end{equation}
Then
\begin{equation}\label{19}
\begin{split}
  &\sum_{j=1}^{m_{1}}\chi_{u}(D_{w^{i_{j}}H_{l}}^{F})+\chi_{u}(D_{0}^{F})\\
  &=\left\{
  \begin{split}
  \varepsilon p^{\frac{n}{2}-s}(p^s-1-m_{1}|H_{l}|)-1, & \ \ \text{if} \ F^{*}(-u)=0 \ \text{or} \  F^{*}(-u) \in w^{i_{j_{0}}} H_{l}\\
  & \ \ \text{for some} \ 1\leq j_{0} \leq m_{1},\\
  \varepsilon p^{\frac{n}{2}-s}(-1-m_{1}|H_{l}|)-1, & \ \ \text{if} \ F^{*}(-u)\neq 0 \ \text{and} \  F^{*}(-u) \notin \cup_{j=1}^{m_{1}}w^{i_{j}}H_{l}.
  \end{split}\right.
\end{split}
\end{equation}
By Corollary 1, Lemma 1 and (17)-(19), any union of sets from $\{D_{0}^{F}, D_{\beta H_{l}}^{F}, \beta \in \mathbb{F}_{p^s}^{*}\}$, denoted by $A$, is a $(p^n, k, \lambda, \mu)$ partial difference set, where
\begin{equation*}
  \begin{split}
      & k=(m_{1}|H_{l}|+m_{0})p^{n-s}+\varepsilon p^{\frac{n}{2}-s}(m_{0}p^s-(m_{1}|H_{l}|+m_{0}))-m_{0},\\
      & \lambda=(m_{1}|H_{l}|+m_{0})^{2}p^{n-2s}+\varepsilon p^{\frac{n}{2}-s}(p^s+(2m_{0}-3)m_{1}|H_{l}|-m_{0})-2m_{0},\\
      & \mu=(m_{1}|H_{l}|+m_{0})^{2}p^{n-2s}+\varepsilon p^{\frac{n}{2}-s}((2m_{0}-1)m_{1}|H_{l}|+m_{0}),
  \end{split}
\end{equation*}
$0\leq m_{1}\leq gcd(l, p^s-1)$ is the number of distinct $D_{\beta H_{l}}^{F}$ in $A$, $m_{0}=1$ if $ D_{0}^{F} \subseteq A$ and $m_{0}=0$ if $D_{0}^{F} \not \subseteq A$.

In particular, if $\sigma(c)=c^{-t}, c \in \mathbb{F}_{p^s}^{*}$ for some $t$ with $gcd(t, p^s-1)=1$, then $\sigma^{-1}(c)=c^{-r}, c \in \mathbb{F}_{p^s}^{*}$, where $tr\equiv 1 \ mod \ (p^s-1)$. In this case, it is easy to see that
\begin{equation*}
\begin{split}
  \sigma^{-1}(c)H_{l}=cH_{l} \ \text{for any} \ c \in \mathbb{F}_{p^s}^{*}
  &\Leftrightarrow c^{1+r} \in H_{l} \ \text{for any} \ c \in \mathbb{F}_{p^s}^{*}\\
  &\Leftrightarrow lx\equiv 1+r \ mod \ (p^s-1) \ \text{has a solution}\\
  &\Leftrightarrow gcd(l, p^s-1)\mid (1+r).
  \end{split}
\end{equation*}
\qed
\end{proof}
\begin{remark}\label{Remark3}
Let $n$ be a positive even integer. Let $f: V_{n}\rightarrow \mathbb{F}_{p}$ be a weakly regular bent function of $l$-form with $f(0)=0$ and $gcd(l-1, p-1)=1$, where $l\neq p-1$ when $p>3$ (which implies that $f(-x)=f(x)$ and $cf$ are weakly regular bent with $\varepsilon_{cf}=\varepsilon_{f}$ for all $c \in \mathbb{F}_{p}^{*}$). Then $f$ (seen as a vectorial bent function from $V_{n}$ to $V_{1}$) is vectorial dual-bent with $(cf)^{*}=\sigma(c)f^{*}, \sigma(c)=c^{1-t}, c \in \mathbb{F}_{p}^{*}$, where $(l-1)(t-1)\equiv 1 \ mod \ (p-1)$. Since $\sigma$ satisfies the condition of Theorem 3, we have that $D_{\beta H_{l}}^{f}, \beta \in \mathbb{F}_{p}^{*}$ are partial difference sets and any union of sets from $\{D_{0}^{f}, D_{\beta H_{l}}^{f}, \beta \in \mathbb{F}_{p}^{*}\}$ is a partial difference set, which is the result given in Proposition 1 (iii).
\end{remark}

Based on Theorem 3, we obtain the following corollary.

\begin{corollary}\label{Corollary2}
Let $n$ be a positive even integer. Let $F: V_{n}\rightarrow \mathbb{F}_{p^s}$ be a vectorial dual-bent function which satisfies that $F(0)=0, F(-x)=F(x)$ and all component functions are regular (that is, $\varepsilon_{F_{c}}=1$ for all $c \in \mathbb{F}_{p^s}^{*}$) or weakly regular but not regular (that is, $\varepsilon_{F_{c}}=-1$ for all $c \in \mathbb{F}_{p^s}^{*}$). Suppose $\varepsilon_{F_{c}}=\varepsilon, c \in \mathbb{F}_{p^s}^{*}$. Let $F^{*}$ be a vectorial dual of $F$ and $(F_{c})^{*}=(F^{*})_{\sigma(c)}, c \in \mathbb{F}_{p^s}^{*}$ for a permutation $\sigma$ over $\mathbb{F}_{p^s}^{*}$. Let $S$ (respectively, $N$) denote the set of all squares (respectively, non-squares) in $\mathbb{F}_{p^s}^{*}$. If $\sigma$ satisfies that $\sigma(S)=S$, then $D_{S}^{F}=\{x \in V_{n}^{*}: F(x) \in S\}$ and $D_{N}^{F}=\{x \in V_{n}^{*}: F(x) \in N\}$ are both $(p^n, k, \lambda, \mu)$ partial difference sets, where
\begin{equation*}
  \begin{split}
      & k=\frac{p^s-1}{2}(p^{n-s}-\varepsilon p^{\frac{n}{2}-s}),\\
      & \lambda=\frac{(p^s-1)^{2}}{4}p^{n-2s}-\varepsilon p^{\frac{n}{2}-s}\frac{p^s-3}{2},\\
      & \mu=\frac{(p^s-1)^{2}}{4}p^{n-2s}-\varepsilon p^{\frac{n}{2}-s}\frac{p^s-1}{2}.
  \end{split}
\end{equation*}
Furthermore, any union of sets from $\{D_{0}^{F}, D_{S}^{F}, D_{N}^{F}\}$, denoted by $A$, is a $(p^n, k,$ $ \lambda, \mu)$ partial difference set, where
\begin{equation*}
  \begin{split}
      & k=(m_{1}\frac{p^s-1}{2}+m_{0})p^{n-s}+\varepsilon p^{\frac{n}{2}-s}(m_{0}p^s-(m_{1}\frac{p^s-1}{2}+m_{0}))-m_{0},\\
      & \lambda=(m_{1}\frac{p^s-1}{2}+m_{0})^{2}p^{n-2s}+\varepsilon p^{\frac{n}{2}-s}(p^s+(2m_{0}-3)m_{1}\frac{p^s-1}{2}-m_{0})-2m_{0},\\
      & \mu=(m_{1}\frac{p^s-1}{2}+m_{0})^{2}p^{n-2s}+\varepsilon p^{\frac{n}{2}-s}((2m_{0}-1)m_{1}\frac{p^s-1}{2}+m_{0}),
  \end{split}
\end{equation*}
$0\leq m_{1}\leq 2$ is the number of $D_{S}^{F}, D_{N}^{F}$ in $A$, $m_{0}=1$ if $ D_{0}^{F} \subseteq A$ and $m_{0}=0$ if $D_{0}^{F} \not \subseteq A$.
\end{corollary}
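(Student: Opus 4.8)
The plan is to deduce Corollary~2 from Theorem~3 by specializing to $l=2$. The key observation is that, since $p$ is odd, $\gcd(2,p^s-1)=2$, so the subgroup $H_{2}=\{x^{2}:x\in\mathbb{F}_{p^s}^{*}\}$ is exactly the set $S$ of nonzero squares in $\mathbb{F}_{p^s}$; it has index $2$ in $\mathbb{F}_{p^s}^{*}$, its unique nontrivial coset is the set $N$ of non-squares, and $|H_{2}|=\frac{p^s-1}{2}$. Consequently, for $\beta$ a square we have $\beta H_{2}=S$ and $D_{\beta H_{2}}^{F}=D_{S}^{F}$, while for $\beta$ a non-square we have $\beta H_{2}=N$ and $D_{\beta H_{2}}^{F}=D_{N}^{F}$; likewise the family $\{D_{0}^{F}\}\cup\{D_{\beta H_{2}}^{F}:\beta\in\mathbb{F}_{p^s}^{*}\}$ coincides with $\{D_{0}^{F},D_{S}^{F},D_{N}^{F}\}$.

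First I would check that the hypothesis $\sigma(S)=S$ is equivalent to the hypothesis $\sigma^{-1}(c)H_{2}=cH_{2}$ for all $c\in\mathbb{F}_{p^s}^{*}$ that is required in Theorem~3 (taken with $l=2$). Since $\sigma$ is a permutation of $\mathbb{F}_{p^s}^{*}$ fixing the index-$2$ subgroup $S$ setwise, it also fixes the complementary coset $N$ setwise, and the same holds for $\sigma^{-1}$. Hence for $c\in S$ one has $\sigma^{-1}(c)\in S$, so $\sigma^{-1}(c)H_{2}=S=cH_{2}$, and for $c\in N$ one has $\sigma^{-1}(c)\in N$, so $\sigma^{-1}(c)H_{2}=N=cH_{2}$; conversely, $\sigma^{-1}(c)H_{2}=cH_{2}$ says exactly that $\sigma^{-1}(c)$ and $c$ have the same quadratic character, i.e. $\sigma^{-1}(S)=S$, which amounts to $\sigma(S)=S$. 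Thus the hypothesis of Theorem~3 holds with $l=2$.

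Then I would simply invoke Theorem~3 with $l=2$: it yields at once that $D_{S}^{F}$ and $D_{N}^{F}$ are $(p^n,k,\lambda,\mu)$ partial difference sets, and that any union of sets from $\{D_{0}^{F},D_{S}^{F},D_{N}^{F}\}$ is a partial difference set. The parameters stated in the corollary are obtained by substituting $\frac{p^s-1}{\gcd(2,p^s-1)}=|H_{2}|=\frac{p^s-1}{2}$ into the formulas (14) and (15) of Theorem~3 and simplifying; for instance, $p^s-3\cdot\frac{p^s-1}{2}=-\frac{p^s-3}{2}$ produces the expression for $\lambda$, and in the union case $0\le m_{1}\le 2$ counts how many of $D_{S}^{F},D_{N}^{F}$ occur in the union while $m_{0}\in\{0,1\}$ records whether $D_{0}^{F}$ occurs.

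There is no analytic difficulty in this argument, since every character-sum computation has already been carried out in the proof of Theorem~3. The only step requiring care --- and hence the ``main obstacle,'' modest as it is --- is the bookkeeping that translates the transparent condition $\sigma(S)=S$ into the coset condition $\sigma^{-1}(c)H_{2}=cH_{2}$ appearing in Theorem~3; once this is in place, Corollary~2 follows by the plain substitutions $l=2$ and $|H_{2}|=\frac{p^s-1}{2}$.
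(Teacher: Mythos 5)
Your proposal is correct and follows essentially the same route as the paper: the paper's proof of Corollary~2 likewise reduces to Theorem~3 with $l=2$ by proving the equivalence $\sigma(S)=S \Leftrightarrow \sigma^{-1}(c)H_{2}=cH_{2}$ for all $c\in\mathbb{F}_{p^s}^{*}$ (using that a permutation preserving the index-$2$ subgroup $S$ also preserves $N$, and conversely), and then substitutes $|H_{2}|=\frac{p^s-1}{2}$ into the parameter formulas. Your verification of the parameters, including the simplification $p^s-3\cdot\frac{p^s-1}{2}=-\frac{p^s-3}{2}$, matches the paper exactly.
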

\begin{proof}
Note that $S=H_{2}$, where $H_{2}=\{x^{2}: x \in \mathbb{F}_{p^s}^{*}\}$. We claim that
\begin{equation*}
  \sigma^{-1}(c)S=c S \ \text{for any} \ c \in \mathbb{F}_{p^s}^{*}\Leftrightarrow \sigma(S)=S.
\end{equation*}

($\Rightarrow$) For any $x^{2} \in S$, since $x^{2}S=S$, we have $\sigma^{-1}(x^{2})S=S$, which implies that $\sigma^{-1}(x^{2}) \in S$ and thus $\sigma^{-1}(S)\subseteq S$. Further, since $\sigma$ is a permutation, we have $\sigma^{-1}(S)=S$, therefore, $\sigma(S)=S$.

($\Leftarrow$) By $\sigma(S)=S$, we have $\sigma^{-1}(S)=S$, $\sigma^{-1}(N)=N$. For any $c \in \mathbb{F}_{p^s}^{*}$, if $c \in S$, then $c^{-1} \in S$, $\sigma^{-1}(c)\in S$, and thus $c^{-1}\sigma^{-1}(c)\in S$, that is, $\sigma^{-1}(c)S=c S$; if $c \notin S$, then $c^{-1} \in N$, $\sigma^{-1}(c)\in N$, and thus $c^{-1}\sigma^{-1}(c)\in S$, that is, $\sigma^{-1}(c)S=c S$.

By Theorem 3, $D_{S}^{F}=D_{H_{2}}^{F}$ and $D_{N}^{F}=D_{\beta H_{2}}^{F}$ are both partial difference sets, where $\beta \in \mathbb{F}_{p^s}^{*}$ is an arbitrary non-square, and any union of sets from $\{D_{0}^{F}, D_{S}^{F}, D_{N}^{F}\}$ is a partial difference set.
\qed
\end{proof}

\begin{remark}\label{Remark4}
Let $n$ be a positive even integer. Let $f: V_{n}\rightarrow \mathbb{F}_{p}$ be a weakly regular bent function of $l$-form with $f(0)=0$ and $gcd(l-1, p-1)=1$ (which implies that $f(-x)=f(x)$ and $cf$ are weakly regular bent with $\varepsilon_{cf}=\varepsilon_{f}$ for all $c \in \mathbb{F}_{p}^{*}$). Then $f$ (seen as a vectorial bent function from $V_{n}$ to $V_{1}$) is vectorial dual-bent with $(cf)^{*}=\sigma(c)f^{*}, \sigma(c)=c^{1-t}, c \in \mathbb{F}_{p}^{*}$, where $(l-1)(t-1)\equiv 1 \ mod \ (p-1)$. It is easy to see that $\sigma$ satisfies the condition of Corollary 2 and thus $D_{S}^{f}, D_{N}^{f}$ are partial difference sets and any union of sets from $\{D_{0}^{f}, D_{S}^{f}, D_{N}^{f}\}$ is a partial difference set, which is the result given in Proposition 1 (ii).
\end{remark}

\begin{remark}\label{Remark5}
Let $m$ be a positive even integer. By the analysis on nonsingular quadratic forms in Section 2, we can see that any nonsingular diagonal quadratic form $G: \mathbb{F}_{p^s}^{m}\rightarrow \mathbb{F}_{p^s}$ defined as $G(x_{1}, \dots, x_{m})=a_{1}x_{1}^{2}+\dots+a_{m}x_{m}^{2}$ satisfies the corresponding condition of Corollary 2, where $a_{i} \in \mathbb{F}_{p^s}^{*}, 1\leq i \leq m$. Then by Corollary 2, $D_{S}^{G}$ is a partial difference set. As we illustrate in Section 2, any nonsingular quadratic form $F: \mathbb{F}_{p^s}^{m}\rightarrow \mathbb{F}_{p^s}$ is linear equivalent to some nonsingular diagonal quadratic form $G$, thus $D_{S}^{F}$ is a partial difference set, which is the result given in Proposition 2 (iv). Furthermore, by Corollary 2, any union of sets from $\{D_{0}^{F}, D_{S}^{F}, D_{N}^{F}\}$ is a partial difference set.
\end{remark}

By Theorem 1 and its proof, it is easy to see that the (non)-quadratic vectorial dual-bent functions defined by (8) for which $2s \mid n$ and $\eta_{n}(\alpha_{i}), 1\leq i \leq 3$ are the same satisfy the corresponding condition of Corollary 2, thus we obtain the following explicit construction of partial difference sets.

\begin{corollary}\label{Corollary3}
Let $s, m, n$ be positive integers with $s \mid m$ and $2s \mid n$, and let $S$ (respectively, $N$) denote the set of all squares (respectively, non-squares) in $\mathbb{F}_{p^s}^{*}$. Define $F_{i} \ (i \in \mathbb{F}_{p^s}): \mathbb{F}_{p^n}\rightarrow \mathbb{F}_{p^s}$ as $F_{0}(x)=Tr_{s}^{n}(\alpha_{1}x^{2})$, $F_{i}(x)=Tr_{s}^{n}(\alpha_{2}x^{2})$ if $i$ is a square in $\mathbb{F}_{p^s}^{*}$, $F_{i}(x)=Tr_{s}^{n}(\alpha_{3}x^{2})$ if $i$ is a non-square in $\mathbb{F}_{p^s}^{*}$, where $\alpha_{j} \in \mathbb{F}_{p^n}^{*}, 1\leq j\leq 3$ are all squares or non-squares. Define $G: \mathbb{F}_{p^m} \times \mathbb{F}_{p^m}\rightarrow \mathbb{F}_{p^s}$ as $G(y_{1}, y_{2})=Tr_{s}^{m}(\beta y_{1}L(y_{2}))$, where $\beta \in \mathbb{F}_{p^m}^{*}$ and $L(x)=\sum a_{i}x^{q^{i}} \ (q=p^s)$ is a $q$-polynomial over $\mathbb{F}_{p^m}$ inducing a permutation of $\mathbb{F}_{p^m}$. Let $T: \mathbb{F}_{p^n} \times \mathbb{F}_{p^m} \times \mathbb{F}_{p^m}\rightarrow \mathbb{F}_{p^s}$ be defined as $T(x, y_{1}, y_{2})=F_{Tr_{s}^{m}(\gamma y_{2}^{2})}(x)+G(y_{1}, y_{2})$, where $\gamma \in \mathbb{F}_{p^s}^{*}$, and let $\varepsilon=(-1)^{n-1}\epsilon^{n}\eta_{n}(\alpha_{1})$, where $\epsilon=1$ if $p\equiv 1 \ (mod \ 4)$ and $\epsilon=\sqrt{-1}$ if $p\equiv 3 \ (mod \ 4)$, $\eta_{n}$ is the multiplicative quadratic character of $\mathbb{F}_{p^n}$. Then $D_{S}^{T}=\{(x, y_{1}, y_{2}) \in \mathbb{F}_{p^n}\times \mathbb{F}_{p^m} \times \mathbb{F}_{p^m}: T(x, y_{1}, y_{2}) \in S\}$ and $D_{N}^{T}=\{(x, y_{1}, y_{2}) \in \mathbb{F}_{p^n}\times \mathbb{F}_{p^m} \times \mathbb{F}_{p^m}: T(x, y_{1}, y_{2}) \in N\}$ are both $(p^{n+2m}, k, \lambda, \mu)$ partial difference sets, where
\begin{equation*}
\begin{split}
& k=\frac{p^s-1}{2}p^{n+2m-s}-\varepsilon p^{\frac{n}{2}+m-s}\frac{p^s-1}{2},\\
& \lambda=(\frac{p^s-1}{2})^{2}p^{n+2m-2s}-\varepsilon p^{\frac{n}{2}+m-s}\frac{p^s-3}{2},\\
& \mu=(\frac{p^s-1}{2})^{2}p^{n+2m-2s}-\varepsilon p^{\frac{n}{2}+m-s}\frac{p^s-1}{2}.
\end{split}
\end{equation*}
Furthermore, any union of sets from $\{D_{0}^{T}, D_{S}^{T}, D_{N}^{T}\}$, denoted by $A$, is a $(p^{n+2m}, $ $k, \lambda, \mu)$ partial difference set, where
\begin{equation*}
\begin{split}
& k=(m_{1}\frac{p^s-1}{2}+m_{0})p^{n+2m-s}+\varepsilon p^{\frac{n}{2}+m-s}(m_{0}p^s-(m_{1}\frac{p^s-1}{2}+m_{0}))-m_{0},\\
\end{split}
\end{equation*}
\begin{equation*}
\begin{split}
& \lambda=(m_{1}\frac{p^s-1}{2}+m_{0})^{2}p^{n+2m-2s}+\varepsilon p^{\frac{n}{2}+m-s}(p^s+(2m_{0}-3)m_{1}\frac{p^s-1}{2}-m_{0})\\
& \ \ \ \ \ -2m_{0},\\
& \mu=(m_{1}\frac{p^s-1}{2}+m_{0})^{2}p^{n+2m-2s}+\varepsilon p^{\frac{n}{2}+m-s}((2m_{0}-1)m_{1}\frac{p^s-1}{2}+m_{0}),
\end{split}
\end{equation*}
$0\leq m_{1}\leq 2$ is the number of distinct $D_{S}^{T}, D_{N}^{T}$ in $A$, $m_{0}=1$ if $ D_{0}^{T} \subseteq A$ and $m_{0}=0$ if $D_{0}^{T} \not \subseteq A$.
\end{corollary}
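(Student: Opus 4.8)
The plan is to show that $T$ (which is exactly the function $H$ of Theorem 1 with $\gamma\in\mathbb{F}_{p^s}^{*}\subseteq\mathbb{F}_{p^m}^{*}$) satisfies every hypothesis of Corollary 2, with the role of the even integer $n$ there played by the domain dimension $n+2m$, and then to read off $k,\lambda,\mu$ by substituting $\tfrac{n}{2}\mapsto\tfrac{n}{2}+m$, $n-s\mapsto n+2m-s$, $n-2s\mapsto n+2m-2s$ into the parameter formulas of Corollary 2. First I would invoke Theorem 1: since $s\mid m$ and $s\mid n$, $T$ is vectorial dual-bent from $\mathbb{F}_{p^n}\times\mathbb{F}_{p^m}\times\mathbb{F}_{p^m}$ to $\mathbb{F}_{p^s}$ with $(T_{c})^{*}=(T^{*})_{\sigma(c)}$ and $\sigma(c)=c^{-1}$. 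The structural conditions are then quick: $T(0,0,0)=0$ is immediate, and $T(-x,-y_{1},-y_{2})=T(x,y_{1},y_{2})$ because each $F_{i}$ is built from $x^{2}$, the index $Tr_{s}^{m}(\gamma y_{2}^{2})$ is invariant under $y_{2}\mapsto -y_{2}$, and $\beta y_{1}L(y_{2})$ is invariant under $(y_{1},y_{2})\mapsto(-y_{1},-y_{2})$ since the $q$-polynomial $L$ is $\mathbb{F}_{p}$-linear.

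The crucial point, and the one I expect to be the main obstacle, is to verify that every component $T_{c}$, $c\in\mathbb{F}_{p^s}^{*}$, is weakly regular with one and the same constant $\varepsilon$. This I would extract directly from equation (10) in the proof of Theorem 1: there $W_{T_{c}}(a,b_{1},b_{2})=\kappa_{j}\,p^{m+\frac{n}{2}}\,\zeta_{p}^{(\cdot)}$, where $j\in\{1,2,3\}$ is determined by whether $\theta=Tr_{s}^{m}(\gamma(L^{-1}(\beta^{-1}b_{1}))^{2})$ equals $0$, lies in $S$, or lies in $N$, and $\kappa_{j}=(-1)^{n-1}\epsilon^{n}\eta_{n}(c\alpha_{j})$. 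Two observations make this sign constant. First, the hypothesis that $\alpha_{1},\alpha_{2},\alpha_{3}$ are all squares or all non-squares in $\mathbb{F}_{p^n}$ gives $\eta_{n}(\alpha_{1})=\eta_{n}(\alpha_{2})=\eta_{n}(\alpha_{3})$, hence $\kappa_{1}=\kappa_{2}=\kappa_{3}$, so $T_{c}$ takes a single Walsh value up to sign-less phase. Second, because $2s\mid n$, the integer $\frac{p^n-1}{p^s-1}=\sum_{i=0}^{n/s-1}p^{si}$ is a sum of an even number of odd terms, hence even; writing it as $2t$ we get $c^{(p^n-1)/2}=(c^{p^s-1})^{t}=1$ for every $c\in\mathbb{F}_{p^s}^{*}$, i.e. $\eta_{n}(c)=1$. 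Therefore $\varepsilon_{T_{c}}=(-1)^{n-1}\epsilon^{n}\eta_{n}(\alpha_{1})=\varepsilon$ for all $c$, which is either $+1$ (all components regular) or $-1$ (all components weakly regular but not regular), matching the dichotomy in Corollary 2.

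Finally, since $\sigma(c)=c^{-1}$ and an element of $\mathbb{F}_{p^s}^{*}$ has the same quadratic character as its inverse, $\sigma(S)=S$, so the hypothesis of Corollary 2 on the permutation is met. Applying Corollary 2 to $T$ with domain dimension $n+2m$ then yields precisely the claimed parameters for $D_{S}^{T}$ and $D_{N}^{T}$, as well as for any union of sets from $\{D_{0}^{T},D_{S}^{T},D_{N}^{T}\}$ (the quantities $m_{0},m_{1}$ entering exactly as in Corollary 2). The only genuine computation is the sign determination of the preceding paragraph; the remainder is substitution into the formulas already established.
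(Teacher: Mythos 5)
Your proposal is correct and follows essentially the same route as the paper, which simply observes that the functions of Theorem 1 with $2s\mid n$ and $\eta_{n}(\alpha_{1})=\eta_{n}(\alpha_{2})=\eta_{n}(\alpha_{3})$ satisfy the hypotheses of Corollary 2 applied in dimension $n+2m$. The one computation the paper leaves implicit --- that $\eta_{n}(c)=1$ for all $c\in\mathbb{F}_{p^s}^{*}$ because $2s\mid n$ forces $\frac{p^{n}-1}{p^{s}-1}$ to be even, so that all $\kappa_{j}$ coincide and are independent of $c$ --- is exactly the step you supply, and you supply it correctly.
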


We give an example by using Corollary 3.

\begin{example}\label{Example2}
Let $F_{i} \ (i \in \mathbb{F}_{5^{2}}): \mathbb{F}_{5^8} \rightarrow \mathbb{F}_{5^2}$ be defined by $F_{0}(x)=Tr_{2}^{8}(x^{2})$, $F_{i}(x)=Tr_{2}^{8}(w^{2}x^{2})$ if $i \in \mathbb{F}_{5^2}^{*}$, where $w$ is a primitive element of $\mathbb{F}_{5^8}$. Define $T: \mathbb{F}_{5^8} \times \mathbb{F}_{5^4} \times \mathbb{F}_{5^4}\rightarrow \mathbb{F}_{5^2}$ as $T(x, y_{1}, y_{2})=F_{Tr_{2}^{4}(y_{2}^{2})}(x)+Tr_{2}^{4}(y_{1}y_{2})=(Tr_{2}^{4}(y_{2}^{2}))^{24}Tr_{2}^{8}((w^{2}-1)x^{2})+Tr_{2}^{8}(x^{2})+Tr_{2}^{4}(y_{1}y_{2})$.
By Corollary 3, $D_{S}^{T}, D_{N}^{T}$ are both $(152587890625, 73242375000, 35156421875, 35156437500)$ partial difference sets, and $D_{0}^{T}\cup D_{S}^{T}$ is a $(152587890625, 79345515624, 41259578123, $ $41259562500)$ partial difference set.
\end{example}

It is easy to see that the (non)-quadratic vectorial dual-bent functions defined by (3) satisfy the corresponding condition of Theorem 3, thus we obtain the following explicit construction of partial difference sets.

\begin{corollary}\label{Corollary4}
Let $s$ be a divisor of $m$ and $F: \mathbb{F}_{p^m} \times \mathbb{F}_{p^m}\rightarrow \mathbb{F}_{p^s}$ be defined as
 $F(x, y)=Tr_{s}^{m}(axy^{e})$,
where $a \in \mathbb{F}_{p^m}^{*}, gcd(e, p^m-1)=1$. Let $l$ be a positive integer with $gcd(l, p^s-1) \mid (1+e)$ and $H_{l}=\{x^{l}: x \in \mathbb{F}_{p^s}^{*}\}$. Then for any $\beta \in \mathbb{F}_{p^s}^{*}$, $D_{\beta H_{l}}^{F}=\{(x, y) \in \mathbb{F}_{p^m} \times \mathbb{F}_{p^m}: F(x, y) \in \beta H_{l}\}$ is a $(p^{2m}, k, \lambda, \mu)$ partial difference set, where
\begin{equation*}
  \begin{split}
      & k=\frac{p^s-1}{gcd(l, p^s-1)}p^{2m-s}-p^{m-s}\frac{p^s-1}{gcd(l, p^s-1)},\\
      & \lambda=(\frac{p^s-1}{gcd(l, p^s-1)})^{2}p^{2m-2s}+p^{m-s}(p^s-3\frac{p^s-1}{gcd(l, p^s-1)}),\\
      & \mu=(\frac{p^s-1}{gcd(l, p^s-1)})^{2}p^{2m-2s}-p^{m-s}\frac{p^s-1}{gcd(l, p^s-1)}.
  \end{split}
\end{equation*}
Furthermore, any union of sets from $\{D_{0}^{F}, D_{\beta H_{l}}^{F}, \beta \in \mathbb{F}_{p^s}^{*}\}$, denoted by $A$, is a $(p^{2m}, k, \lambda, \mu)$ partial difference set, where
\begin{equation*}
  \begin{split}
      & k=(m_{1}\frac{p^s-1}{gcd(l, p^s-1)}+m_{0})p^{2m-s}+p^{m-s}(m_{0}p^s-(m_{1}\frac{p^s-1}{gcd(l, p^s-1)}+m_{0}))\\
      & \ \ \ \ \ -m_{0},\\
      & \lambda=(m_{1}\frac{p^s-1}{gcd(l, p^s-1)}+m_{0})^{2}p^{2m-2s}+p^{m-s}(p^s+(2m_{0}-3)m_{1}\frac{p^s-1}{gcd(l, p^s-1)}\\
      & \ \ \ \ \ -m_{0})-2m_{0},\\
      & \mu=(m_{1}\frac{p^s-1}{gcd(l, p^s-1)}+m_{0})^{2}p^{2m-2s}+p^{m-s}((2m_{0}-1)m_{1}\frac{p^s-1}{gcd(l, p^s-1)}\\
      & \ \ \ \ \ +m_{0}),
  \end{split}
\end{equation*}
$0\leq m_{1}\leq gcd(l, p^s-1)$ is the number of distinct $D_{\beta H_{l}}^{F}$ in $A$, $m_{0}=1$ if $ D_{0}^{F} \subseteq A$ and $m_{0}=0$ if $D_{0}^{F} \not \subseteq A$.
\end{corollary}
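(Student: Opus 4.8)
The plan is to show that the map $F(x,y)=Tr_{s}^{m}(axy^{e})$ fits every hypothesis of Theorem \ref{Theorem3}, so that Corollary \ref{Corollary4} follows by reading off the parameters with $\varepsilon=1$ and $n=2m$. First I would invoke item (i) of the list of vectorial dual-bent functions in Section \ref{sec:2}: $F$ is vectorial dual-bent with $\varepsilon_{F_{c}}=1$ for every $c\in\mathbb{F}_{p^s}^{*}$ (so all component functions are regular, and $\varepsilon=1$) and with $(F_{c})^{*}=(F^{*})_{\sigma(c)}$, $\sigma(c)=c^{-u}$ where $eu\equiv 1 \ mod \ (p^m-1)$. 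Clearly $F(0,0)=0$. For the symmetry requirement $F(-z)=F(z)$, note that $\gcd(e,p^m-1)=1$ together with $p$ odd forces $e$ to be odd, since $p^m-1$ is even; hence $(-1)^{1+e}=1$ and $F(-x,-y)=Tr_{s}^{m}\!\bigl(a(-1)^{1+e}xy^{e}\bigr)=F(x,y)$.

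Next I would determine how $\sigma$ acts on $\mathbb{F}_{p^s}^{*}$. Because $s\mid m$ we have $p^s-1\mid p^m-1$, so reducing $eu\equiv 1 \ mod \ (p^m-1)$ modulo $p^s-1$ gives $eu\equiv 1 \ mod \ (p^s-1)$; in particular $\gcd(u,p^s-1)=1$, and on $\mathbb{F}_{p^s}^{*}$ the permutation $\sigma$ is exactly the map $c\mapsto c^{-t}$ with $t=u$ treated in the final part of Theorem \ref{Theorem3}. The auxiliary integer $r$ there, defined by $tr\equiv 1 \ mod \ (p^s-1)$, therefore satisfies $r\equiv e \ mod \ (p^s-1)$; and since $\gcd(l,p^s-1)$ divides $p^s-1$, the divisibility $\gcd(l,p^s-1)\mid(1+r)$ is equivalent to $\gcd(l,p^s-1)\mid(1+e)$, which is precisely the hypothesis assumed in Corollary \ref{Corollary4}.

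Then I would apply Theorem \ref{Theorem3} with $V_{n}=\mathbb{F}_{p^m}\times\mathbb{F}_{p^m}$, i.e.\ $n=2m$ (which is even), codomain $\mathbb{F}_{p^s}$, $\varepsilon=1$, and the subgroup $H_{l}$. This immediately yields that each $D_{\beta H_{l}}^{F}$, $\beta\in\mathbb{F}_{p^s}^{*}$, is a partial difference set and that any union of sets from $\{D_{0}^{F},D_{\beta H_{l}}^{F}:\beta\in\mathbb{F}_{p^s}^{*}\}$ is a partial difference set. The explicit parameter sets then come out by substituting $p^{n}=p^{2m}$, $p^{n-s}=p^{2m-s}$, $p^{\frac{n}{2}-s}=p^{m-s}$ and $\varepsilon=1$ into formulas (14) and (15) of Theorem \ref{Theorem3}.

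I do not anticipate a genuine obstacle: the whole argument is the verification of the hypotheses of Theorem \ref{Theorem3} for this concrete family, after which the parameters are a direct substitution. The only point that needs a little care is the reduction step identifying the action of $\sigma$ on $\mathbb{F}_{p^s}^{*}$ as $c\mapsto c^{-u}$ with the exponent read modulo $p^s-1$, and thereby matching the integers $t,r$ of Theorem \ref{Theorem3} with $u,e$; once that is settled, everything else is routine.
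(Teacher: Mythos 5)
Your proposal is correct and follows essentially the same route as the paper, which proves this corollary simply by observing that the function in (3) satisfies the hypotheses of Theorem~\ref{Theorem3} (with $n=2m$, $\varepsilon=1$, $\sigma(c)=c^{-u}$) and substituting into (14) and (15). Your additional details — that $e$ must be odd so $F(-x,-y)=F(x,y)$, and that reducing $eu\equiv 1 \bmod (p^m-1)$ modulo $p^s-1$ identifies $r\equiv e$ so that $\gcd(l,p^s-1)\mid(1+r)$ becomes $\gcd(l,p^s-1)\mid(1+e)$ — are exactly the verifications the paper leaves implicit.
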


\begin{remark}\label{Remark6}
Let $F: \mathbb{F}_{p^m} \times \mathbb{F}_{p^m}\rightarrow \mathbb{F}_{p^s}$ be the Maiorana-McFarland vectorial bent function $F(x, y)=Tr_{s}^{m}(xy)$. Then by Corollary 4, the preimage set of all squares (respectively, non-squares) of $\mathbb{F}_{p^s}^{*}$ for $F$ is a partial difference set. Further, for $G(x, y)=Tr_{s}^{m}(xL(y))$, where $L$ is an arbitrary linear permutation over $\mathbb{F}_{p^m}$, since $F$ and $G$ are linear equivalent, we have that the preimage set of all squares (respectively, non-squares) of $\mathbb{F}_{p^s}^{*}$ for $G$ is a partial difference set, which is the result given in Proposition 2 (iii).
\end{remark}

We give an example by using Corollary 4.

\begin{example}\label{Example3}
Let $F: \mathbb{F}_{7^4} \times \mathbb{F}_{7^4}\rightarrow \mathbb{F}_{7^2}$ be defined as $F(x_{1}, x_{2})=Tr_{2}^{4}(x_{1}x_{2}^{11})$. Let $l=3$, then $gcd(l, 7^2-1)=3 \mid (1+11)$. By Corollary 4, $D_{\beta H_{3}}^{F}$ is a $(5764801, 1881600, 614705, 613872)$ partial difference set for any $\beta \in \mathbb{F}_{7^2}^{*}$, and $D_{0}^{F} \cup D_{H_{3}}^{F}$ is a $(5764801, 2001600, 695455, 694722)$ partial difference set.
\end{example}

We can see that if $F: V_{n}\rightarrow \mathbb{F}_{p^s}$ is vectorial dual-bent satisfying the corresponding condition of Corollary 1 and $(F_{c})^{*}=(F^{*})_{\sigma(c)}, \sigma(c)=c^{-1}, c \in \mathbb{F}_{p^s}^{*}$ (where $F^{*}$ is a vectorial dual of $F$), then $\sigma^{-1}(c)H_{l}=c H_{l}$ if and only if $H_{l}=H_{2}$ or $H_{l}=\mathbb{F}_{p^s}^{*}$ and the partial different sets constructed by Theorem 3 are the unions of sets from $\{D_{0}^{F}, D_{H_{2}}^{F}, D_{\beta H_{2}}^{F}\}$ (where $\beta \in \mathbb{F}_{p^s}^{*}$ is a non-square) or $\{D_{0}^{F}, D_{\mathbb{F}_{p^s}^{*}}^{F}\}$. For such a vectorial dual-bent function, we want to find other subgroup $H_{t}$ of $\mathbb{F}_{p^s}^{*}$ different from $H_{2}$ and $\mathbb{F}_{p^s}^{*}$ such that $D_{H_{t}}^{F}$ is a partial difference set. The following theorem shows that if $s=2jr$ and $t \mid (p^{j}+1)$, where $j$ is the smallest such positive integer, then any union of sets from $\{D_{0}^{F}, D_{\beta H_{t}}^{F}, \beta \in \mathbb{F}_{p^s}^{*}\}$ is a partial difference set. Note that in this case, $H_{t}$ can be different from $H_{2}$ and $\mathbb{F}_{p^s}^{*}$.

\begin{theorem}\label{Theorem4}
Let $p$ be an odd prime, $n, s, t$ be positive integers which satisfy that $n$ is even, $s=2jr$ for some positive integers $j, r$, $t\geq 2$ and $t \mid (p^{j}+1)$, where $j$ is the smallest such positive integer. Let $F: V_{n}\rightarrow \mathbb{F}_{p^s}$ be a vectorial dual-bent function which satisfies that $F(0)=0, F(-x)=F(x)$ and all component functions are regular (that is, $\varepsilon_{F_{c}}=1$ for all $c \in \mathbb{F}_{p^s}^{*}$) or weakly regular but not regular (that is, $\varepsilon_{F_{c}}=-1$ for all $c \in \mathbb{F}_{p^s}^{*}$). Suppose $\varepsilon_{F_{c}}=\varepsilon, c \in \mathbb{F}_{p^s}^{*}$. Let $F^{*}$ be a vectorial dual of $F$ and $(F_{c})^{*}=(F^{*})_{\sigma(c)}, c \in \mathbb{F}_{p^s}^{*}$ for a permutation $\sigma$ over $\mathbb{F}_{p^s}^{*}$. If $\sigma$ satisfies that for any $\beta \in \mathbb{F}_{p^s}^{*}$, there exists $\beta' \in \mathbb{F}_{p^s}^{*}$ such that $\sigma(\beta H_{t})=\beta'H_{t}$, where $H_{t}=\{x^{t}: x \in \mathbb{F}_{p^s}^{*}\}$, then for any $\beta \in \mathbb{F}_{p^s}^{*}$, $D_{\beta H_{t}}^{F}=\{x \in V_{n}^{*}: F(x) \in \beta H_{t}\}$ is a $(p^n, k, \lambda, \mu)$ partial difference set, where
\begin{equation*}
  \begin{split}
      & k=\frac{p^s-1}{t}p^{n-s}-\varepsilon p^{\frac{n}{2}-s}\frac{p^s-1}{t},\\
      & \lambda=(\frac{p^s-1}{t})^{2}p^{n-2s}+\varepsilon p^{\frac{n}{2}-s}(p^s-3\frac{p^s-1}{t}),\\
      & \mu=(\frac{p^s-1}{t})^{2}p^{n-2s}-\varepsilon p^{\frac{n}{2}-s}\frac{p^s-1}{t}.
  \end{split}
\end{equation*}
Furthermore, any union of sets from $\{D_{0}^{F}, D_{\beta H_{t}}^{F}, \beta \in \mathbb{F}_{p^s}^{*}\}$, denoted by $A$, is a $(p^n, k, \lambda, \mu)$ partial difference set, where
\begin{equation*}
  \begin{split}
      & k=(m_{1}\frac{p^s-1}{t}+m_{0})p^{n-s}+\varepsilon p^{\frac{n}{2}-s}(m_{0}p^s-(m_{1}\frac{p^s-1}{t}+m_{0}))-m_{0},\\
      & \lambda=(m_{1}\frac{p^s-1}{t}+m_{0})^{2}p^{n-2s}+\varepsilon p^{\frac{n}{2}-s}(p^s+(2m_{0}-3)m_{1}\frac{p^s-1}{t}-m_{0})-2m_{0},\\
      & \mu=(m_{1}\frac{p^s-1}{t}+m_{0})^{2}p^{n-2s}+\varepsilon p^{\frac{n}{2}-s}((2m_{0}-1)m_{1}\frac{p^s-1}{t}+m_{0}),
  \end{split}
\end{equation*}
$0\leq m_{1}\leq t$ is the number of distinct $D_{\beta H_{t}}^{F}$ in $A$, $m_{0}=1$ if $ D_{0}^{F} \subseteq A$ and $m_{0}=0$ if $D_{0}^{F} \not \subseteq A$.
\end{theorem}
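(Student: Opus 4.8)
The plan is to rerun the character-sum computation used in the proof of Theorem~\ref{Theorem3}, replacing the algebraic identity $\sigma^{-1}(c)H_{l}=cH_{l}$ exploited there by the classical evaluation of Gaussian periods of order $t$ in the \emph{semi-primitive} case, which is exactly what the hypotheses $s=2jr$, $t\geq2$ and $t\mid(p^{j}+1)$ with $j$ minimal deliver. Observe first that these hypotheses force $\mathrm{ord}_{t}(p)=2j\mid s$, hence $t\mid p^{s}-1$ and $|H_{t}|=\frac{p^{s}-1}{t}$, already matching the set size in the claimed parameters. First I would fix $\beta\in\mathbb{F}_{p^{s}}^{*}$ and $u\in V_{n}^{*}$ and, just as in the proofs of Corollary~\ref{Corollary1} and Theorem~\ref{Theorem3}, combine Proposition~\ref{Proposition3}, the hypothesis that every component function is regular or weakly regular but not regular with common sign $\varepsilon$ (so that $W_{F_{c}}(-u)=\varepsilon p^{n/2}\zeta_{p}^{Tr_{1}^{s}(\sigma(c)F^{*}(-u))}$), and the fact $F^{*}(0)=0$, to obtain $\chi_{u}(D_{\beta H_{t}}^{F})=\varepsilon p^{\frac{n}{2}-s}\sum_{c\in\mathbb{F}_{p^{s}}^{*}}\zeta_{p}^{Tr_{1}^{s}(\sigma(c)F^{*}(-u))}\sum_{x\in(-c\beta)H_{t}}\zeta_{p}^{Tr_{1}^{s}(x)}$.

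The inner sum is a Gaussian period of order $t$ over $\mathbb{F}_{p^{s}}$; since we are in the semi-primitive case, its standard evaluation (see \cite{Feng,Ma} and the references therein) produces a single coset $C_{0}$ of $H_{t}$ and a sign $\epsilon_{0}\in\{\pm1\}$, depending only on $p,s,t$, such that $\sum_{x\in\gamma H_{t}}\zeta_{p}^{Tr_{1}^{s}(x)}$ equals $A:=\frac{-1+(t-1)\epsilon_{0}p^{s/2}}{t}$ when $\gamma H_{t}=C_{0}$ and equals $B:=\frac{-1-\epsilon_{0}p^{s/2}}{t}$ otherwise; note $A-B=\epsilon_{0}p^{s/2}$, $(A-B)^{2}=p^{s}$ and $tB=-1-\epsilon_{0}p^{s/2}$. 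I would substitute this two-valued expression and split the $c$-sum into a ``$B$-part'' and an ``$(A-B)$-part''. In the $B$-part, reindexing by $d=\sigma(c)$ reduces $\sum_{c}\zeta_{p}^{Tr_{1}^{s}(\sigma(c)F^{*}(-u))}$ to $p^{s}\delta(F^{*}(-u),0)-1$. In the $(A-B)$-part the summation variable runs over the single coset $-\beta^{-1}C_{0}$, which $\sigma$ carries bijectively onto one coset $\delta_{0}H_{t}$ (this is where the hypothesis that $\sigma$ permutes the cosets of $H_{t}$ is used); hence this part becomes the Gaussian period $\sum_{x\in\delta_{0}F^{*}(-u)H_{t}}\zeta_{p}^{Tr_{1}^{s}(x)}$ when $F^{*}(-u)\neq0$, and $|H_{t}|$ when $F^{*}(-u)=0$.

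Collecting terms and using the three identities displayed above, the auxiliary sign $\epsilon_{0}$ cancels and one is left with $\chi_{u}(D_{\beta H_{t}}^{F})=\varepsilon p^{\frac{n}{2}-s}\bigl(p^{s}-\tfrac{p^{s}-1}{t}\bigr)$ if $F^{*}(-u)\in\delta_{0}^{-1}C_{0}$ and $\chi_{u}(D_{\beta H_{t}}^{F})=-\varepsilon p^{\frac{n}{2}-s}\tfrac{p^{s}-1}{t}$ in every other case (in particular when $F^{*}(-u)=0$). Thus $\chi_{u}(D_{\beta H_{t}}^{F})$ takes exactly two values as $u$ ranges over $V_{n}^{*}$. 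Together with $|D_{\beta H_{t}}^{F}|=|H_{t}|(p^{n-s}-\varepsilon p^{n/2-s})$ from Corollary~\ref{Corollary1} and Lemma~\ref{Lemma1} (so that $\lambda-\mu$ is the sum and $k-\mu$ is minus the product of the two values), this gives the first set of parameters, with $\frac{p^{s}-1}{t}$ playing the role of $|H_{l}|$.

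For the union statement I would imitate the end of the proof of Theorem~\ref{Theorem3}: write $A$ as a union of $m_{1}$ sets $D_{w^{i_{j}}H_{t}}^{F}$ with pairwise distinct cosets $w^{i_{j}}H_{t}$, together with $D_{0}^{F}$ if $m_{0}=1$, and note that the assignment $\beta H_{t}\mapsto\delta_{0}^{-1}C_{0}$ is a composition of bijections of the set of cosets of $H_{t}$, hence itself a bijection; so for a fixed $u$ with $F^{*}(-u)\neq0$ the element $F^{*}(-u)$ belongs to at most one of the relevant special cosets, and $\sum_{j}\chi_{u}(D_{w^{i_{j}}H_{t}}^{F})$ has the shape of \eqref{17}. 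Adding $\chi_{u}(D_{0}^{F})$ from Proposition 4 of \cite{Cesmelioglu2} (see \eqref{18}) and using that the gap between the two values of $\chi_{u}(D_{\beta H_{t}}^{F})$ equals $\varepsilon p^{n/2}$ merges the case $F^{*}(-u)=0$ with the case in which a special coset is hit, yielding \eqref{19} with $|H_{l}|$ replaced by $|H_{t}|=\frac{p^{s}-1}{t}$; a final application of Lemma~\ref{Lemma1} produces the parameters in the statement. I expect the principal obstacle to be precisely this semi-primitive Gaussian-period input together with the coset bookkeeping through $\sigma$: one has to verify that the inner sums genuinely collapse to two values and that neither $\epsilon_{0}$ nor the exact location of $C_{0}$ survives into $k,\lambda,\mu$ --- which, fortunately, it does not, so the formulas are literally those of Theorem~\ref{Theorem3} with $\gcd(l,p^{s}-1)$ replaced by $t$.
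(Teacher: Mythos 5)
Your proposal is correct and follows essentially the same route as the paper's proof: reduce $\chi_{u}(D_{\beta H_{t}}^{F})$ to a sum of semiprimitive Gaussian periods of order $t$, use the hypothesis that $\sigma$ permutes the cosets of $H_{t}$ to collapse the restricted part of the sum to a single period, verify that the character sum is two-valued, and finish with Corollary~\ref{Corollary1}, Lemma~\ref{Lemma1} and formula (18) for the union statement. The only difference is cosmetic: you package the two semiprimitive sub-cases ($r$ and $(p^{j}+1)/t$ both odd versus not) into one formula via a distinguished coset $C_{0}$ and a sign $\epsilon_{0}$, while the paper treats them separately in (21)--(29); the cancellation of $\epsilon_{0}$ you anticipate does occur, yielding exactly the paper's two values.
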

\begin{proof}
First by $s=2jr$ and $t \mid (p^{j}+1)$, we have $t \mid (p^s-1)$ and thus $gcd(t, p^s-1)=t$. For any $\beta \in \mathbb{F}_{p^s}^{*}$, $u \in V_{n}^{*}$, as given in the proof of Theorem 3, we have
\begin{equation*}
\begin{split}
    \chi_{u}(D_{\beta H_{t}}^{F})
                & =\varepsilon p^{\frac{n}{2}-s}\sum_{c \in \mathbb{F}_{p^s}^{*}}\zeta_{p}^{Tr_{1}^{s}(\sigma(c)F^{*}(-u))}\sum_{i \in H_{t}}\zeta_{p}^{Tr_{1}^{s}(-c \beta i)}\\
\end{split}
\end{equation*}
\begin{equation}\label{20}
\begin{split}
               & =\varepsilon p^{\frac{n}{2}-s}\sum_{c \in \mathbb{F}_{p^s}^{*}}\zeta_{p}^{Tr_{1}^{s}(\sigma(c)F^{*}(-u))}\eta_{-c\beta}^{(t, p^s)},
    \end{split}
\end{equation}
where for any $a \in \mathbb{F}_{p^s}^{*}$, $\eta_{a}^{(t, p^s)}\triangleq \sum_{x \in H_{t}}\zeta_{p}^{Tr_{1}^{s}(a x)}$.

In the case that $s=2jr$, $t\geq 2$ and $t \mid (p^{j}+1)$, where $j$ is the smallest such positive integer, by the well-known Gaussian periods in the semiprimitive case (see for example Theorem 1.18 of \cite{Ding}), we have
\begin{equation}\label{21}
  \eta_{a}^{(t, p^s)}=\left\{
  \begin{split}
  &\delta_{w^{\frac{t}{2}}H_{t}}(a)p^{\frac{s}{2}}-\frac{p^\frac{s}{2}+1}{t}, \ \ \ \ \ \ \ \ \ \ \ \ \ \ \ \ \text{if} \ r \ \text{and} \ \frac{p^j+1}{t} \ \text{are both odd},\\
  &\delta_{H_{t}}(a)(-1)^{r+1}p^{\frac{s}{2}}+\frac{(-1)^{r}p^{\frac{s}{2}}-1}{t}, \ \ \text{otherwise},
  \end{split}
  \right.
\end{equation}
where $w$ is an arbitrary fixed primitive element of $\mathbb{F}_{p^s}$ (Note that when $r$ and $\frac{p^j+1}{t}$ are both odd, $w^{\frac{t}{2}}H_{t}$ is independent of $w$) and for any set $A$, $\delta_{A}: A\rightarrow \{0, 1\}$ is defined by
\begin{equation*}
  \delta_{A}(a)=\left\{
  \begin{split}
  & 1, \ \text{if} \ a \in A,\\
  & 0, \ \text{if} \ a \notin A.
  \end{split}\right.
\end{equation*}
(i) If $r$ and $\frac{p^j+1}{t}$ are both odd, then by (20) and (21), we have
\begin{equation*}
  \begin{split}
    \chi_{u}(D_{\beta H_{t}}^{F})
                & =\varepsilon p^{\frac{n}{2}-s}\sum_{c \in \mathbb{F}_{p^s}^{*}}\zeta_{p}^{Tr_{1}^{s}(\sigma(c)F^{*}(-u))}(-\frac{p^{\frac{s}{2}}+1}{t})\\
                & \ \ \ +\varepsilon p^{\frac{n-s}{2}}\sum_{-c\beta \in w^{\frac{t}{2}}H_{t}}\zeta_{p}^{Tr_{1}^{s}(\sigma(c)F^{*}(-u))}\\
                &=\varepsilon p^{\frac{n}{2}-s}\sum_{c \in \mathbb{F}_{p^s}^{*}}\zeta_{p}^{Tr_{1}^{s}(cF^{*}(-u))}(-\frac{p^{\frac{s}{2}}+1}{t})\\
                & \ \ \ +\varepsilon p^{\frac{n-s}{2}}\sum_{c\in H_{t}}\zeta_{p}^{Tr_{1}^{s}(\sigma(-\beta^{-1}w^{\frac{t}{2}}c)F^{*}(-u))}\\
                &=\varepsilon p^{\frac{n}{2}-s}\frac{p^{\frac{s}{2}}+1}{t}(-\delta_{\{0\}}(F^{*}(-u))p^s+1)\\
                & \ \ \ +\varepsilon p^{\frac{n-s}{2}}\sum_{c\in H_{t}}\zeta_{p}^{Tr_{1}^{s}(\sigma(-\beta^{-1}w^{\frac{t}{2}}c)F^{*}(-u))}.\\
    \end{split}
\end{equation*}
If for any $\gamma \in \mathbb{F}_{p^s}^{*}$, there exists $\gamma' \in \mathbb{F}_{p^s}^{*}$ such that $\sigma(\gamma H_{t})=\gamma'H_{t}$, then $\sigma(-\beta^{-1}w^{\frac{t}{2}}H_{t})=w^{i}H_{t}$ for some $0\leq i \leq t-1$ and we have
\begin{equation*}
  \begin{split}
    \chi_{u}(D_{\beta H_{t}}^{F})
                &=\varepsilon p^{\frac{n}{2}-s}\frac{p^{\frac{s}{2}}+1}{t}(-\delta_{\{0\}}(F^{*}(-u))p^s+1)\\
                & \ \ \ +\varepsilon p^{\frac{n-s}{2}}\sum_{c\in H_{t}}\zeta_{p}^{Tr_{1}^{s}(cw^{i}F^{*}(-u))}.\\
    \end{split}
\end{equation*}
When $F^{*}(-u)=0$, we have
\begin{equation}\label{22}
  \begin{split}
    \chi_{u}(D_{\beta H_{t}}^{F})
                &=\varepsilon p^{\frac{n}{2}-s}\frac{p^{\frac{s}{2}}+1}{t}(-p^s+1)+\varepsilon p^{\frac{n-s}{2}}|H_{t}|\\
                &=\varepsilon p^{\frac{n}{2}-s}\frac{p^{\frac{s}{2}}+1}{t}(-p^s+1)+\varepsilon p^{\frac{n-s}{2}}\frac{p^s-1}{t}\\
                &=-\varepsilon p^{\frac{n}{2}-s}\frac{p^s-1}{t}.
    \end{split}
\end{equation}
When $F^{*}(-u)\neq 0$, we have
\begin{equation}\label{23}
  \begin{split}
    \chi_{u}(D_{\beta H_{t}}^{F})
                &=\varepsilon p^{\frac{n}{2}-s}\frac{p^{\frac{s}{2}}+1}{t}+\varepsilon p^{\frac{n-s}{2}}\eta_{w^{i}F^{*}(-u)}^{(t, p^s)}\\
                &=\varepsilon p^{\frac{n}{2}-s}\frac{p^{\frac{s}{2}}+1}{t}+\varepsilon p^{\frac{n-s}{2}}(\delta_{w^{\frac{t}{2}}H_{t}}(w^{i}F^{*}(-u))
                p^{\frac{s}{2}}-\frac{p^{\frac{s}{2}}+1}{t}).\\
    \end{split}
\end{equation}
By (22) and (23), we have
\begin{equation}\label{24}
  \chi_{u}(D_{\beta H_{t}}^{F})=\left\{
  \begin{split}
  -\varepsilon p^{\frac{n}{2}-s}\frac{p^s-1}{t}, & \ \ \text{if} \ w^{i}F^{*}(-u) \notin w^{\frac{t}{2}}H_{t},\\
  \varepsilon p^{\frac{n}{2}}-\varepsilon p^{\frac{n}{2}-s}\frac{p^s-1}{t}, & \ \ \text{if} \ w^{i}F^{*}(-u) \in w^{\frac{t}{2}}H_{t}.
  \end{split}
  \right.
\end{equation}
Then by Corollary 1 and Lemma 1, $D_{\beta H_{t}}^{F}=\{x \in V_{n}^{*}: F(x) \in \beta H_{t}\}$ is a $(p^n, k, \lambda, \mu)$ partial difference set, where
\begin{equation*}
  \begin{split}
      & k=\frac{p^s-1}{t}p^{n-s}-\varepsilon p^{\frac{n}{2}-s}\frac{p^s-1}{t},\\
      & \lambda=(\frac{p^s-1}{t})^{2}p^{n-2s}+\varepsilon p^{\frac{n}{2}-s}(p^s-3\frac{p^s-1}{t}),\\
      & \mu=(\frac{p^s-1}{t})^{2}p^{n-2s}-\varepsilon p^{\frac{n}{2}-s}\frac{p^s-1}{t}.
  \end{split}
\end{equation*}
Note that by the assumption on $\sigma$, for any $0\leq i_{1}< \dots < i_{m_{1}}\leq [\mathbb{F}_{p^s}^{*}: H_{t}]-1=t-1$, there exist distinct $0\leq z_{1}, \dots, z_{m_{1}}\leq t-1$ such that $\sigma(-w^{-i_{j}+\frac{t}{2}}H_{t})=w^{z_{j}}H_{t}$ for any $1\leq j \leq m_{1}$. Then for any $u \in V_{n}^{*}$, by (24) we have
\begin{equation}\label{25}
\begin{split}
  &\sum_{j=1}^{m_{1}}\chi_{u}(D_{w^{i_{j}} H_{t}}^{F})\\
  &=\left\{
  \begin{split}
  &-\varepsilon m_{1}p^{\frac{n}{2}-s}\frac{p^s-1}{t}, \ \ \ \ \ \ \ \text{if} \ F^{*}(-u) \notin \cup_{j=1}^{m_{1}}w^{-z_{j}+\frac{t}{2}}H_{t},\\
  &\varepsilon p^{\frac{n}{2}}-\varepsilon m_{1}p^{\frac{n}{2}-s}\frac{p^s-1}{t}, \ \ \text{if} \ F^{*}(-u) \in w^{-z_{j_{0}}+\frac{t}{2}}H_{t} \ \text{for some} \ 1\leq j_{0}\leq m_{1}.\\
  \end{split}\right.
  \end{split}
\end{equation}
By (18) and (25), we have
\begin{equation}\label{26}
\begin{split}
  &\sum_{j=1}^{m_{1}}\chi_{u}(D_{w^{i_{j}} H_{t}}^{F})+\chi_{u}(D_{0}^{F})\\
  &=\left\{
  \begin{split}
  & \varepsilon p^{\frac{n}{2}-s}\frac{(p^s-1)(t-m_{1})}{t}-1,  \ \text{if} \ F^{*}(-u)=0 \ \text{or} \ F^{*}(-u) \in w^{-z_{j_{0}}+\frac{t}{2}}H_{t}\\
  & \ \ \ \ \ \ \ \ \ \ \ \ \ \ \ \ \ \ \ \ \ \ \ \ \ \ \ \ \ \ \ \ \ \ \ \ \ \ \text{for some} \ 1\leq j_{0}\leq m_{1},\\
  & \varepsilon p^{\frac{n}{2}-s}(-m_{1}\frac{p^s-1}{t}-1)-1, \ \text{if} \ F^{*}(-u)\neq 0 \\
  & \ \ \ \ \ \ \ \ \ \ \ \ \ \ \ \ \ \ \ \ \ \ \ \ \ \ \ \ \ \ \ \ \ \ \ \ \ \ \text{and} \ F^{*}(-u) \notin \cup_{j=1}^{m_{1}}w^{-z_{j}+\frac{t}{2}}H_{t}. \\
  \end{split}\right.
  \end{split}
\end{equation}
By Corollary 1, Lemma 1 and (18), (25), (26), any union of sets from $\{D_{0}^{F},$ $ D_{\beta H_{t}}^{F}, \beta \in \mathbb{F}_{p^s}^{*}\}$, denoted by $A$, is a $(p^n, k, \lambda, \mu)$ partial difference set, where
\begin{equation*}
  \begin{split}
      & k=(m_{1}\frac{p^s-1}{t}+m_{0})p^{n-s}+\varepsilon p^{\frac{n}{2}-s}(m_{0}p^s-(m_{1}\frac{p^s-1}{t}+m_{0}))-m_{0},\\
      & \lambda=(m_{1}\frac{p^s-1}{t}+m_{0})^{2}p^{n-2s}+\varepsilon p^{\frac{n}{2}-s}(p^s+(2m_{0}-3)m_{1}\frac{p^s-1}{t}-m_{0})-2m_{0},\\
      & \mu=(m_{1}\frac{p^s-1}{t}+m_{0})^{2}p^{n-2s}+\varepsilon p^{\frac{n}{2}-s}((2m_{0}-1)m_{1}\frac{p^s-1}{t}+m_{0}),
  \end{split}
\end{equation*}
$0\leq m_{1}\leq t$ is the number of distinct $D_{\beta H_{t}}^{F}$ in $A$, $m_{0}=1$ if $ D_{0}^{F} \subseteq A$ and $m_{0}=0$ if $D_{0}^{F} \not \subseteq A$.

(ii) If $r$ or $\frac{p^j+1}{t}$ is even, then by (20) and (21), we have
\begin{equation*}
  \begin{split}
    \chi_{u}(D_{\beta H_{t}}^{F})
                & =\varepsilon p^{\frac{n}{2}-s}\sum_{c \in \mathbb{F}_{p^s}^{*}}\zeta_{p}^{Tr_{1}^{s}(\sigma(c)F^{*}(-u))}\frac{(-1)^{r}p^{\frac{s}{2}}-1}{t}\\
                & \ \ \ +\varepsilon (-1)^{r+1}p^{\frac{n-s}{2}}\sum_{-c\beta \in H_{t}}\zeta_{p}^{Tr_{1}^{s}(\sigma(c)F^{*}(-u))}\\
                &=\varepsilon p^{\frac{n}{2}-s}\sum_{c \in \mathbb{F}_{p^s}^{*}}\zeta_{p}^{Tr_{1}^{s}(cF^{*}(-u))}\frac{(-1)^{r}p^{\frac{s}{2}}-1}{t}\\
                &\ \ +\varepsilon (-1)^{r+1}p^{\frac{n-s}{2}}\sum_{c\in H_{t}}\zeta_{p}^{Tr_{1}^{s}(\sigma(-\beta^{-1}c)F^{*}(-u))}\\
                &=\varepsilon p^{\frac{n}{2}-s}\frac{(-1)^{r}p^{\frac{s}{2}}-1}{t}(\delta_{\{0\}}(F^{*}(-u))p^s-1)\\
                &\ \ +\varepsilon (-1)^{r+1}p^{\frac{n-s}{2}}\sum_{c\in H_{t}}\zeta_{p}^{Tr_{1}^{s}(\sigma(-\beta^{-1}c)F^{*}(-u))}.\\
    \end{split}
\end{equation*}
If for any $\gamma \in \mathbb{F}_{p^s}^{*}$, there exists $\gamma' \in \mathbb{F}_{p^s}^{*}$ such that $\sigma(\gamma H_{t})=\gamma'H_{t}$, then $\sigma(-\beta^{-1}H_{t})=w^{i}H_{t}$ for some $0\leq i \leq t-1$ and we have
\begin{equation*}
  \begin{split}
    \chi_{u}(D_{\beta H_{t}}^{F})
                &=\varepsilon p^{\frac{n}{2}-s}\frac{(-1)^{r}p^{\frac{s}{2}}-1}{t}(\delta_{\{0\}}(F^{*}(-u))p^s-1)\\
                & \ \ \ +\varepsilon(-1)^{r+1}p^{\frac{n-s}{2}}\sum_{c\in H_{t}}\zeta_{p}^{Tr_{1}^{s}(cw^{i}F^{*}(-u))}.\\
    \end{split}
\end{equation*}
When $F^{*}(-u)=0$, we have
\begin{equation}\label{27}
  \begin{split}
    \chi_{u}(D_{\beta H_{t}}^{F})
                &=\varepsilon p^{\frac{n}{2}-s}\frac{(-1)^{r}p^{\frac{s}{2}}-1}{t}(p^s-1)+\varepsilon(-1)^{r+1}p^{\frac{n-s}{2}}|H_{t}|\\
                &=\varepsilon p^{\frac{n}{2}-s}\frac{(-1)^{r}p^{\frac{s}{2}}-1}{t}(p^s-1)+\varepsilon(-1)^{r+1}p^{\frac{n-s}{2}}\frac{p^s-1}{t}\\
                &=-\varepsilon p^{\frac{n}{2}-s}\frac{p^s-1}{t}.\\
    \end{split}
\end{equation}
When $F^{*}(-u)\neq 0$, we have
\begin{equation}\label{28}
  \begin{split}
    \chi_{u}(D_{\beta H_{t}}^{F})
                &=-\varepsilon p^{\frac{n}{2}-s}\frac{(-1)^{r}p^{\frac{s}{2}}-1}{t}+\varepsilon(-1)^{r+1}p^{\frac{n-s}{2}}\eta_{w^{i}F^{*}(-u)}^{(t, p^s)}\\
                &=-\varepsilon p^{\frac{n}{2}-s}\frac{(-1)^{r}p^{\frac{s}{2}}-1}{t}\\
                & \ \ \ +\varepsilon p^{\frac{n-s}{2}}(\delta_{H_{t}}(w^{i}F^{*}(-u))
                p^{\frac{s}{2}}+(-1)^{r+1}\frac{(-1)^{r}p^{\frac{s}{2}}-1}{t}).\\
    \end{split}
\end{equation}
By (27) and (28), we have
\begin{equation}\label{29}
  \chi_{u}(D_{\beta H_{t}}^{F})=\left\{
  \begin{split}
  -\varepsilon p^{\frac{n}{2}-s}\frac{p^s-1}{t}, & \ \ \text{if} \ w^{i}F^{*}(-u) \notin H_{t},\\
  \varepsilon p^{\frac{n}{2}}-\varepsilon p^{\frac{n}{2}-s}\frac{p^s-1}{t},  & \ \ \text{if} \ w^{i}F^{*}(-u) \in H_{t}.
  \end{split}
  \right.
\end{equation}
Then by Corollary 1 and Lemma 1, $D_{\beta H_{t}}^{F}=\{x \in V_{n}^{*}: F(x) \in \beta H_{t}\}$ is a $(p^n, k, \lambda, \mu)$ partial difference set, where
\begin{equation*}
  \begin{split}
      & k=\frac{p^s-1}{t}p^{n-s}-\varepsilon p^{\frac{n}{2}-s}\frac{p^s-1}{t},\\
      & \lambda=(\frac{p^s-1}{t})^{2}p^{n-2s}+\varepsilon p^{\frac{n}{2}-s}(p^s-3\frac{p^s-1}{t}),\\
      & \mu=(\frac{p^s-1}{t})^{2}p^{n-2s}-\varepsilon p^{\frac{n}{2}-s}\frac{p^s-1}{t}.
  \end{split}
\end{equation*}
By the assumption on $\sigma$, for any $0\leq i_{1}< \dots < i_{m_{1}}\leq[\mathbb{F}_{p^s}^{*}: H_{t}]-1=t-1$, there exist distinct $0\leq z_{1}, \dots, z_{m_{1}}\leq t-1$ such that $\sigma(-w^{-i_{j}}H_{t})=w^{z_{j}}H_{t}$ for any $1\leq j \leq m_{1}$. Then for any $u \in V_{n}^{*}$, by (29) we have
\begin{equation}\label{30}
\begin{split}
  &\sum_{j=1}^{m_{1}}\chi_{u}(D_{w^{i_{j}} H_{t}}^{F})\\
  &=\left\{
  \begin{split}
  & -\varepsilon m_{1}p^{\frac{n}{2}-s}\frac{p^s-1}{t}, \ \ \ \ \ \ \text{if} \ F^{*}(-u) \notin \cup_{j=1}^{m_{1}}w^{-z_{j}}H_{t},\\
  &\varepsilon p^{\frac{n}{2}}-\varepsilon m_{1}p^{\frac{n}{2}-s}\frac{p^s-1}{t}, \ \text{if} \ F^{*}(-u) \in w^{-z_{j_{0}}}H_{t} \ \  \text{for some} \ 1\leq j_{0}\leq m_{1}.
  \end{split}
  \right.
  \end{split}
\end{equation}
By (18) and (30),
\begin{equation}\label{31}
\begin{split}
  &\sum_{j=1}^{m_{1}}\chi_{u}(D_{w^{i_{j}} H_{t}}^{F})+\chi_{u}(D_{0}^{F})\\
  &=\left\{
  \begin{split}
  & \varepsilon p^{\frac{n}{2}-s}\frac{(p^s-1)(t-m_{1})}{t}-1, \ \text{if} \ F^{*}(-u)=0 \ \text{or} \ F^{*}(-u) \in w^{-z_{j_{0}}}H_{t} \\
  & \ \ \ \ \ \ \ \ \ \ \ \ \ \ \ \ \ \ \ \ \ \ \ \ \ \ \ \ \ \ \ \ \ \ \ \ \ \ \text{for some} \ 1\leq j_{0}\leq m_{1},\\
  &  \varepsilon p^{\frac{n}{2}-s}(-m_{1}\frac{p^s-1}{t}-1)-1,  \ \text{if} \ F^{*}(-u)\neq 0 \\
  & \ \ \ \ \ \ \ \ \ \ \ \ \ \ \ \ \ \ \ \ \ \ \ \ \ \ \ \ \ \ \ \ \ \ \ \ \ \ \text{and} \ F^{*}(-u) \notin \cup_{j=1}^{m_{1}}w^{-z_{j}}H_{t}.\\
  \end{split}
  \right.
  \end{split}
\end{equation}
By Corollary 1, Lemma 1 and (18), (30), (31), any union of sets from $\{D_{0}^{F}$ $, D_{\beta H_{t}}^{F}, \beta \in \mathbb{F}_{p^s}^{*}\}$, denoted by $A$, is a $(p^n, k, \lambda, \mu)$ partial difference set, where
\begin{equation*}
  \begin{split}
      & k=(m_{1}\frac{p^s-1}{t}+m_{0})p^{n-s}+\varepsilon p^{\frac{n}{2}-s}(m_{0}p^s-(m_{1}\frac{p^s-1}{t}+m_{0}))-m_{0},\\
      & \lambda=(m_{1}\frac{p^s-1}{t}+m_{0})^{2}p^{n-2s}+\varepsilon p^{\frac{n}{2}-s}(p^s+(2m_{0}-3)m_{1}\frac{p^s-1}{t}-m_{0})-2m_{0},\\
      & \mu=(m_{1}\frac{p^s-1}{t}+m_{0})^{2}p^{n-2s}+\varepsilon p^{\frac{n}{2}-s}((2m_{0}-1)m_{1}\frac{p^s-1}{t}+m_{0}),
  \end{split}
\end{equation*}
$0\leq m_{1}\leq t$ is the number of distinct $D_{\beta H_{t}}^{F}$ in $A$, $m_{0}=1$ if $ D_{0}^{F} \subseteq A$ and $m_{0}=0$ if $D_{0}^{F} \not \subseteq A$.
\qed
\end{proof}

\begin{remark}\label{Remark7}
Let $m$ be a positive even integer, $s=2jr$ for some positive integers $j, r$, $t\geq 2$ and $t \mid (p^{j}+1)$, where $j$ is the smallest such positive integer. By the analysis on nonsingular quadratic forms in Section 2, we can see that any nonsingular diagonal quadratic form $G: \mathbb{F}_{p^s}^{m}\rightarrow \mathbb{F}_{p^s}$ defined as $G(x_{1}, \dots, x_{m})=a_{1}x_{1}^{2}+\dots+a_{m}x_{m}^{2}$ satisfies the corresponding condition of Theorem 4, where $a_{i} \in \mathbb{F}_{p^s}^{*}, 1\leq i \leq m$. Then by Theorem 4, the preimage set of any coset of $H_{t}$ for $G$ is a partial difference set. As we illustrate in Section 2, any nonsingular quadratic form $F: \mathbb{F}_{p^s}^{m}\rightarrow \mathbb{F}_{p^s}$ is linear equivalent to some nonsingular diagonal quadratic form $G$, thus the preimage set of any coset of $H_{t}$ for $F$ is a partial difference set, which is the result given in Proposition 2 (v). Furthermore, by Theorem 4, any union of sets from $\{D_{0}^{F}, D_{\beta H_{t}}^{F}, \beta \in \mathbb{F}_{p^s}^{*}\}$ is a partial difference set.
\end{remark}

When $s=2jr$ for some positive integers $j, r$, $t\geq 2$ and $t \mid (p^{j}+1)$, where $j$ is the smallest such positive integer, it is easy to see that the (non)-quadratic vectorial dual-bent functions defined by (3), and the (non)-quadratic vectorial dual-bent functions defined by (8) for which $2s \mid n$ and $\eta_{n}(\alpha_{i}), 1\leq i \leq 3$ are the same, all satisfy the corresponding condition of Theorem 4, thus we obtain the following explicit constructions of partial difference sets.

\begin{corollary}\label{Corollary5}
Let $p$ be an odd prime, $m, s, t$ be positive integers which satisfy that $s$ is a divisor of $m$, $s=2jr$ for some positive integers $j, r$, $t\geq 2$ and $t \mid (p^{j}+1)$, where $j$ is the smallest such positive integer. Let $H_{t}=\{x^{t}: x \in \mathbb{F}_{p^s}^{*}\}$. Define $F: \mathbb{F}_{p^m} \times \mathbb{F}_{p^m}\rightarrow \mathbb{F}_{p^s}$ as
 $F(x, y)=Tr_{s}^{m}(axy^{e})$,
where $a \in \mathbb{F}_{p^m}^{*}, gcd(e, p^m-1)=1$. Then for any $\beta \in \mathbb{F}_{p^s}^{*}$, $D_{\beta H_{t}}^{F}=\{(x, y) \in \mathbb{F}_{p^m} \times \mathbb{F}_{p^m}: F(x, y) \in \beta H_{t}\}$ is a $(p^{2m}, k, \lambda, \mu)$ partial difference set, where
\begin{equation*}
  \begin{split}
      & k=\frac{p^s-1}{t}p^{2m-s}-p^{m-s}\frac{p^s-1}{t},\\
      & \lambda=(\frac{p^s-1}{t})^{2}p^{2m-2s}+p^{m-s}(p^s-3\frac{p^s-1}{t}),\\
      & \mu=(\frac{p^s-1}{t})^{2}p^{2m-2s}-p^{m-s}\frac{p^s-1}{t}.
  \end{split}
\end{equation*}
Furthermore, any union of sets from $\{D_{0}^{F}, D_{\beta H_{t}}^{F}, \beta \in \mathbb{F}_{p^s}^{*}\}$, denoted by $A$, is a $(p^{2m}, k, \lambda, \mu)$ partial difference set, where
\begin{equation*}
  \begin{split}
      & k=(m_{1}\frac{p^s-1}{t}+m_{0})p^{2m-s}+p^{m-s}(m_{0}p^s-(m_{1}\frac{p^s-1}{t}+m_{0}))-m_{0},\\
      & \lambda=(m_{1}\frac{p^s-1}{t}+m_{0})^{2}p^{2m-2s}+p^{m-s}(p^s+(2m_{0}-3)m_{1}\frac{p^s-1}{t}-m_{0})-2m_{0},\\
      & \mu=(m_{1}\frac{p^s-1}{t}+m_{0})^{2}p^{2m-2s}+p^{m-s}((2m_{0}-1)m_{1}\frac{p^s-1}{t}+m_{0}),
  \end{split}
\end{equation*}
$0\leq m_{1}\leq t$ is the number of distinct $D_{\beta H_{t}}^{F}$ in $A$, $m_{0}=1$ if $ D_{0}^{F} \subseteq A$ and $m_{0}=0$ if $D_{0}^{F} \not \subseteq A$.
\end{corollary}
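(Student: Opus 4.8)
\emph{Proof proposal.} The plan is to deduce the corollary directly from Theorem~\ref{Theorem4}, applied to $F(x,y)=Tr_{s}^{m}(axy^{e})$ viewed as a map from the $2m$-dimensional $\mathbb{F}_{p}$-vector space $\mathbb{F}_{p^m}\times\mathbb{F}_{p^m}$ to $\mathbb{F}_{p^s}$, with $n=2m$ and $\varepsilon=1$. First I would recall from item~(i) of the list of known vectorial dual-bent functions in Section~\ref{sec:2} that $F$ is vectorial dual-bent, that $\varepsilon_{F_{c}}=1$ for every $c\in\mathbb{F}_{p^s}^{*}$ so that all component functions are regular, that a vectorial dual is $F^{*}(x,y)=Tr_{s}^{m}(-a^{-u}x^{u}y)$, and that the permutation determined by $(F_{c})^{*}=(F^{*})_{\sigma(c)}$ is $\sigma(c)=c^{-u}$, where $eu\equiv 1\pmod{p^m-1}$. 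Since $n=2m$ is even, the parity requirement on the dimension in Theorem~\ref{Theorem4} is automatic, while the arithmetic hypotheses $s=2jr$, $t\geq 2$, $t\mid(p^j+1)$ with $j$ minimal are exactly those assumed in the corollary.

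Next I would check the two symmetry conditions. That $F(0)=0$ is immediate. For $F(-z)=F(z)$ with $z=(x,y)$, observe that $p^m-1$ is even because $p$ is odd, so $\gcd(e,p^m-1)=1$ forces $e$ to be odd; hence $(-y)^{e}=-y^{e}$ and $F(-x,-y)=Tr_{s}^{m}(a(-x)(-y)^{e})=Tr_{s}^{m}(axy^{e})=F(x,y)$, as required.

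The key step is to verify that $\sigma$ meets the coset hypothesis of Theorem~\ref{Theorem4}: for each $\beta\in\mathbb{F}_{p^s}^{*}$ there is $\beta'\in\mathbb{F}_{p^s}^{*}$ with $\sigma(\beta H_{t})=\beta'H_{t}$. Since $s\mid m$ we have $(p^s-1)\mid(p^m-1)$, and since $\gcd(u,p^m-1)=1$ it follows that $\gcd(u,p^s-1)=1$. Hence $x\mapsto x^{-u}$ is a permutation of $\mathbb{F}_{p^s}^{*}$, so $\{x^{-ut}:x\in\mathbb{F}_{p^s}^{*}\}=\{y^{t}:y\in\mathbb{F}_{p^s}^{*}\}=H_{t}$, and therefore $\sigma(\beta H_{t})=\{(\beta x^{t})^{-u}:x\in\mathbb{F}_{p^s}^{*}\}=\beta^{-u}H_{t}$. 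Thus the hypothesis holds with $\beta'=\beta^{-u}$; one also notes $t\mid(p^s-1)$ because $s=2jr$ gives $(p^j+1)\mid(p^{2j}-1)\mid(p^s-1)$, so $\gcd(t,p^s-1)=t$ as in Theorem~\ref{Theorem4}.

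With all hypotheses confirmed, Theorem~\ref{Theorem4} applies, and I would finish by substituting $n=2m$ and $\varepsilon=1$ into its parameter formulas; this produces precisely the stated values of $(k,\lambda,\mu)$ for $D_{\beta H_{t}}^{F}$ and for every union of sets from $\{D_{0}^{F},D_{\beta H_{t}}^{F},\beta\in\mathbb{F}_{p^s}^{*}\}$. I do not anticipate a real obstacle here: the whole proof is a matching of hypotheses, the only mildly delicate points being the parity argument that $e$ is odd and the reduction $\gcd(u,p^s-1)=1$, both routine consequences of $p$ being odd and $s\mid m$.
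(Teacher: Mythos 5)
Your proposal is correct and follows exactly the route the paper intends: the paper justifies Corollary~5 only by the one-line remark that the functions defined by (3) satisfy the hypotheses of Theorem~4, and your write-up supplies precisely the omitted verifications (regularity of all components with $\varepsilon=1$, $F(0)=0$, $F(-z)=F(z)$ via $e$ odd, and $\sigma(\beta H_{t})=\beta^{-u}H_{t}$ from $\gcd(u,p^{s}-1)=1$), followed by the substitution $n=2m$, $\varepsilon=1$ into the parameter formulas. No gaps.
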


\begin{corollary}\label{Corollary6}
Let $p$ be an odd prime, $m, n, s, t$ be positive integers which satisfy that $s \mid m$, $2s \mid n$, $s=2jr$ for some positive integers $j, r$, $t\geq 2$ and $t \mid (p^{j}+1)$, where $j$ is the smallest such positive integer. Let $H_{t}=\{x^{t}: x \in \mathbb{F}_{p^s}^{*}\}$. Define $F_{i} \ (i \in \mathbb{F}_{p^s}): \mathbb{F}_{p^n}\rightarrow \mathbb{F}_{p^s}$ as $F_{0}(x)=Tr_{s}^{n}(\alpha_{1}x^{2})$, $F_{i}(x)=Tr_{s}^{n}(\alpha_{2}x^{2})$ if $i$ is a square in $\mathbb{F}_{p^s}^{*}$, $F_{i}(x)=Tr_{s}^{n}(\alpha_{3}x^{2})$ if $i$ is a non-square in $\mathbb{F}_{p^s}^{*}$, where $\alpha_{j} \in \mathbb{F}_{p^n}^{*}, 1\leq j\leq 3$ are all squares or non-squares. Define $G: \mathbb{F}_{p^m} \times \mathbb{F}_{p^m}\rightarrow \mathbb{F}_{p^s}$ as $G(y_{1}, y_{2})=Tr_{s}^{m}(\beta y_{1}L(y_{2}))$, where $\beta \in \mathbb{F}_{p^m}^{*}$ and $L(x)=\sum a_{i}x^{q^{i}} \ (q=p^s)$ is a $q$-polynomial over $\mathbb{F}_{p^m}$ inducing a permutation of $\mathbb{F}_{p^m}$. Let $T: \mathbb{F}_{p^n} \times \mathbb{F}_{p^m} \times \mathbb{F}_{p^m}\rightarrow \mathbb{F}_{p^s}$ be defined as $T(x, y_{1}, y_{2})=F_{Tr_{s}^{m}(\gamma y_{2}^{2})}(x)+G(y_{1}, y_{2})$, where $\gamma \in \mathbb{F}_{p^s}^{*}$, and let $\varepsilon=(-1)^{n-1}\epsilon^{n}\eta_{n}(\alpha_{1})$, where $\epsilon=1$ if $p\equiv 1 \ (mod \ 4)$ and $\epsilon=\sqrt{-1}$ if $p\equiv 3 \ (mod \ 4)$, $\eta_{n}$ is the multiplicative quadratic character of $\mathbb{F}_{p^n}$. Then for any $\beta \in \mathbb{F}_{p^s}^{*}$, $D_{\beta H_{t}}^{T}=\{(x, y_{1}, y_{2}) \in \mathbb{F}_{p^n} \times \mathbb{F}_{p^m} \times \mathbb{F}_{p^m}: T(x, y_{1}, y_{2}) \in \beta H_{t}\}$ is a $(p^{n+2m}, k, \lambda, \mu)$ partial difference set, where
\begin{equation*}
  \begin{split}
      & k=\frac{p^s-1}{t}p^{n+2m-s}-\varepsilon p^{\frac{n}{2}+m-s}\frac{p^s-1}{t},\\
      & \lambda=(\frac{p^s-1}{t})^{2}p^{n+2m-2s}+\varepsilon p^{\frac{n}{2}+m-s}(p^s-3\frac{p^s-1}{t}),\\
      & \mu=(\frac{p^s-1}{t})^{2}p^{n+2m-2s}-\varepsilon p^{\frac{n}{2}+m-s}\frac{p^s-1}{t}.
  \end{split}
\end{equation*}
Furthermore, any union of sets from $\{D_{0}^{T}, D_{\beta H_{t}}^{T}, \beta \in \mathbb{F}_{p^s}^{*}\}$, denoted by $A$, is a $(p^{n+2m}, k, \lambda, \mu)$ partial difference set, where
\begin{equation*}
  \begin{split}
      & k=(m_{1}\frac{p^s-1}{t}+m_{0})p^{n+2m-s}+\varepsilon p^{\frac{n}{2}+m-s}(m_{0}p^s-(m_{1}\frac{p^s-1}{t}+m_{0}))-m_{0},\\
      & \lambda=(m_{1}\frac{p^s-1}{t}+m_{0})^{2}p^{n+2m-2s}+\varepsilon p^{\frac{n}{2}+m-s}(p^s+(2m_{0}-3)m_{1}\frac{p^s-1}{t}-m_{0})\\
      & \ \ \ \ \ -2m_{0},\\
      & \mu=(m_{1}\frac{p^s-1}{t}+m_{0})^{2}p^{n+2m-2s}+\varepsilon p^{\frac{n}{2}+m-s}((2m_{0}-1)m_{1}\frac{p^s-1}{t}+m_{0}),
  \end{split}
\end{equation*}
$0\leq m_{1}\leq t$ is the number of distinct $D_{\beta H_{t}}^{T}$ in $A$, $m_{0}=1$ if $ D_{0}^{T} \subseteq A$ and $m_{0}=0$ if $D_{0}^{T} \not \subseteq A$.
\end{corollary}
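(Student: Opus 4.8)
The plan is to observe that $T$ is precisely the function $H$ of Theorem~\ref{Theorem1} (with the same data $F_i,G,\gamma,\beta,L$), so $T$ is \emph{a priori} vectorial dual-bent, its dual $T^{*}$ is given explicitly, and the associated permutation is $\sigma(c)=c^{-1}$, $c\in\mathbb{F}_{p^s}^{*}$. It then suffices to check that $T$ satisfies every hypothesis of Theorem~\ref{Theorem4} with ambient dimension $N=n+2m$ and subgroup $H_t$, and to read off the parameters by substituting $N$ for $n$ in the conclusion of Theorem~\ref{Theorem4} (so $N/2=\tfrac n2+m$, $N-s=n+2m-s$, $N-2s=n+2m-2s$), which reproduces the stated $k,\lambda,\mu$ verbatim.

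First I would dispose of the structural hypotheses. Since $s\mid m$ and $2s\mid n$, both $n$ and $2m$ are even, so $N=n+2m$ is even; the divisibility $s=2jr$, $t\geq 2$, $t\mid(p^j+1)$ with $j$ minimal is part of the hypothesis. Because each $F_i$ has the form $Tr_s^{n}(\alpha_j x^{2})$ and $G(y_1,y_2)=Tr_s^{m}(\beta y_1 L(y_2))$ is bilinear, $T$ is unchanged under $(x,y_1,y_2)\mapsto(-x,-y_1,-y_2)$, so $T(-z)=T(z)$, and $T(0)=F_0(0)+G(0,0)=0$.

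The key step is the regularity/sign condition. From (\ref{10}) in the proof of Theorem~\ref{Theorem1}, for every $c\in\mathbb{F}_{p^s}^{*}$ and every argument exactly one of the three indicator terms survives, so $W_{T_c}$ equals a single term with coefficient $\kappa_j=(-1)^{n-1}\epsilon^{n}\eta_n(c\alpha_j)=(-1)^{n-1}\epsilon^{n}\eta_n(c)\eta_n(\alpha_j)$ for the appropriate $j\in\{1,2,3\}$. Here the hypothesis $2s\mid n$ is essential: with $k=n/s$, the index $(p^n-1)/(p^s-1)=1+p^{s}+\dots+p^{(k-1)s}\equiv k\pmod 2$ is even, hence $\mathbb{F}_{p^s}^{*}\subseteq(\mathbb{F}_{p^n}^{*})^{2}$ and $\eta_n(c)=1$ for all $c\in\mathbb{F}_{p^s}^{*}$. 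Combined with the standing hypothesis $\eta_n(\alpha_1)=\eta_n(\alpha_2)=\eta_n(\alpha_3)$, this forces $\kappa_j=(-1)^{n-1}\epsilon^{n}\eta_n(\alpha_1)=\varepsilon\in\{\pm1\}$ for all three values of $j$, so every component function $T_c$ is weakly regular with the same constant $\varepsilon_{T_c}=\varepsilon$; thus $T$ is either all-regular (if $\varepsilon=1$) or all weakly-regular-but-not-regular (if $\varepsilon=-1$), exactly as Theorem~\ref{Theorem4} requires.

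It remains to verify the condition on $\sigma$ and conclude. By Theorem~\ref{Theorem1}, $(T_c)^{*}=(T^{*})_{\sigma(c)}$ with $\sigma(c)=c^{-1}$, so for any $\beta\in\mathbb{F}_{p^s}^{*}$ we have $\sigma(\beta H_t)=\{(\beta x^{t})^{-1}:x\in\mathbb{F}_{p^s}^{*}\}=\beta^{-1}H_t$, i.e. one may take $\beta'=\beta^{-1}$; moreover $s=2jr$ with $t\mid(p^j+1)$ gives $t\mid(p^s-1)$, so $|H_t|=(p^s-1)/t$. All hypotheses of Theorem~\ref{Theorem4} now hold for $T$ with ambient dimension $N=n+2m$, and invoking it yields that each $D_{\beta H_t}^{T}$ and each union of sets from $\{D_0^{T},D_{\beta H_t}^{T}:\beta\in\mathbb{F}_{p^s}^{*}\}$ is a partial difference set with the claimed parameters. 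The only genuinely non-routine point is the constancy of $\varepsilon$ across the three pieces $\alpha_1,\alpha_2,\alpha_3$ defining $T$; this is precisely what the two hypotheses $2s\mid n$ and $\eta_n(\alpha_1)=\eta_n(\alpha_2)=\eta_n(\alpha_3)$ are designed to guarantee, and everything else is direct substitution into Theorem~\ref{Theorem4}.
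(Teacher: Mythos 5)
Your proposal is correct and follows the same route as the paper, which derives Corollary 6 by checking that $T$ (the function $H$ of Theorem 1) satisfies the hypotheses of Theorem 4 with ambient dimension $n+2m$ and then reading off the parameters. Your explicit verification that $2s\mid n$ forces $\eta_n(c)=1$ for $c\in\mathbb{F}_{p^s}^{*}$, so that the common quadratic character of $\alpha_1,\alpha_2,\alpha_3$ makes all $\varepsilon_{T_c}$ equal, together with $\sigma(c)=c^{-1}$ permuting the cosets of $H_t$, is exactly the content the paper leaves as "easy to see."
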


We give two examples by using Corollaries 5 and 6.

\begin{example}\label{Example3}
Let $p=3, m=8, s=4, t=5$, then $p, m, s, t$ satisfy the condition in Corollary 5. Let $F: \mathbb{F}_{3^8} \times \mathbb{F}_{3^8}\rightarrow \mathbb{F}_{3^4}$ be defined by $F(x, y)=Tr_{4}^{8}(xy^{7})$. By Corollary 5, for any $\beta \in \mathbb{F}_{3^4}^{*}$, $D_{\beta H_{5}}^{F}$ is a $(43046721, 8501760, 1682289, 1678320)$ partial difference set and $D_{0}^{F}\cup D_{wH_{5}}^{F}\cup D_{w^{2}H_{5}}^{F}$ is a $(43046721, 17541440, 7148815,$ $7147602)$ partial difference set, where $w$ is a primitive element of $\mathbb{F}_{3^4}$.
\end{example}

\begin{example}\label{Example4}
Let $p=5, m=4, n=8, s=2, t=3$, then $p, m, n, s, t$ satisfy the condition in Corollary 6. Let $F_{i} \ (i \in \mathbb{F}_{5^2}): \mathbb{F}_{5^8} \rightarrow \mathbb{F}_{5^2}$ be defined by $F_{0}(x)=Tr_{2}^{8}(x^{2})$, $F_{i}(x)=Tr_{2}^{8}(w^{2}x^{2})$ if $i \in \mathbb{F}_{5^2}^{*}$, where $w$ is a primitive element of $\mathbb{F}_{5^8}$. Define $T: \mathbb{F}_{5^8} \times \mathbb{F}_{5^4} \times \mathbb{F}_{5^4}\rightarrow \mathbb{F}_{5^2}$ as $T(x, y_{1}, y_{2})=F_{Tr_{2}^{4}(y_{2}^{2})}(x)+Tr_{2}^{4}(y_{1}y_{2})=(Tr_{2}^{4}(y_{2}^{2}))^{24}Tr_{2}^{8}((w^{2}-1)x^{2})+Tr_{2}^{8}(x^{2})+Tr_{2}^{4}(y_{1}y_{2})$. By Corollary 6, for any $\beta \in \mathbb{F}_{5^2}^{*}$, $D_{\beta H_{3}}^{T}$ is a $(152587890625, 48828250000, 15624984375, 15625125000)$ partial difference set and $D_{wH_{3}}^{T}\cup D_{w^{2}H_{3}}^{T}$ is a $(152587890625, 97656500000, $ $62500359375, 62500250000)$ partial difference set.
\end{example}

In \cite{Cesmelioglu1}, \c{C}e\c{s}melio\u{g}lu \emph{et al.} showed that for a weakly regular $p$-ary bent function $f$ with $f(0)=0$, if $f$ is an $l$-form with $gcd(l-1, p-1)=1$ for some integer $1\leq l\leq p-1$, then $f$ (seen as a vectorial bent function from $V_{n}$ to $V_{1}$) is vectorial dual-bent. In the rest of this section, we prove that the converse also holds, which answers one open problem proposed in \cite{Cesmelioglu2}.

\begin{theorem}\label{Theorem5}
Let $f: V_{n}\rightarrow \mathbb{F}_{p}$ be a weakly regular vectorial dual-bent function with $f(0)=0$. Then $f$ is an $l$-form with $gcd(l-1, p-1)=1$ for some integer $1\leq l\leq p-1$.
\end{theorem}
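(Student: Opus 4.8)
The plan is to combine the weak-regularity normal form of $W_{f}$ with the rigidity forced by vectorial dual-bentness: first normalize $\sigma$ and record the relevant symmetries, then prove $f$ is an $l$-form, and finally extract $\gcd(l-1,p-1)=1$ from the fact that $\sigma$ is a permutation.

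\emph{Setup.} Since $f$ is vectorial dual-bent there are a vectorial bent $G$ and a permutation $\sigma$ of $\mathbb{F}_{p}^{*}$ with $(cf)^{*}=G_{\sigma(c)}=\sigma(c)G$; evaluating at $c=1$ gives $G=\sigma(1)^{-1}f^{*}$, so after rescaling $G$ I may assume $(cf)^{*}=\sigma(c)f^{*}$ with $\sigma(1)=1$. From $f$ weakly regular and $f(0)=0$, equation (2) gives $W_{f}(a)=\varepsilon_{f}p^{n/2}\zeta_{p}^{f^{*}(a)}$ with $\varepsilon_{f}$ independent of $a$; the inversion identity $\sum_{a}W_{f}(a)\zeta_{p}^{\langle a,x\rangle_{n}}=p^{n}\zeta_{p}^{f(x)}$ then shows $f^{*}$ is weakly regular with $f^{**}(x)=f(-x)$, and $f^{*}(0)=0$ by Corollary 2 and Proposition 5 of \cite{Cesmelioglu2}. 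Applying $*$ to $(cf)^{*}=\sigma(c)f^{*}$ shows $f^{*}$ is vectorial dual-bent with permutation $\sigma^{-1}$, so $f$ and $f^{*}$ are interchangeable. Finally, comparing $W_{cf}(a)$ with $\overline{W_{cf}(a)}=W_{(-c)f}(-a)$ and using that a ratio of $p$-th roots of unity is never $-1$ (as $p$ is odd) forces $\sigma(-c)f^{*}(-a)=-\sigma(c)f^{*}(a)$ for all $c,a$; with $c=1$ this gives $f^{*}(-a)=-\sigma(-1)^{-1}f^{*}(a)$ and $\sigma(-1)^{2}=1$, and $\sigma(-1)=1=\sigma(1)$ being impossible we get $\sigma(-1)=-1$, $f^{*}(-a)=f^{*}(a)$, and then (from $W_{f}(-a)=W_{f}(a)$) $f(-x)=f(x)$; consequently $\sigma(-c)=-\sigma(c)$.

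\emph{The $l$-form step.} For $\mu\in\mathbb{F}_{p}^{*}$ put $f_{\mu}(x)=f(\mu x)$; substituting $x\mapsto\mu^{-1}x$ gives $W_{f_{\mu}}(a)=W_{f}(\mu^{-1}a)$, so $f_{\mu}$ is weakly regular with the same $\varepsilon_{f}$, with $f_{\mu}(0)=0$, and with dual $f^{*}(\mu^{-1}\cdot)$. The claim to prove is that $W_{f_{\mu}}$ coincides with $W_{cf}$ for a suitable $c=c_{\mu}$, so that $f_{\mu}=c_{\mu}f$. I would attack this by writing the Fourier inversion of each component, $p^{n/2}\zeta_{p}^{cf(x)}=\sum_{a}\mathrm{sgn}_{c}(a)\zeta_{p}^{\sigma(c)f^{*}(a)+\langle a,x\rangle_{n}}$, summing over $c\in\mathbb{F}_{p}^{*}$, and exploiting the evenness of $f$ and $f^{*}$ and the normalization $\sigma(1)=1$ to show that the dilated spectrum $\{W_{f}(\mu^{-1}a)\}_{a}$ must already be realized by one of the components; the point is that the $p-1$ duals $(cf)^{*}$ are forced to lie in the $(p-1)$-element family $\{\sigma(c)f^{*}\}$ of scalar multiples of the single function $f^{*}$, which is a strong constraint that cannot be satisfied unless $f^{*}(\mu^{-1}\cdot)$ is itself such a multiple. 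Granting this, set $\lambda(\mu)=c_{\mu}$; the identity $f_{\mu\nu}=f_{\mu}\circ(\nu\cdot)$ together with $f\not\equiv0$ makes $\lambda:\mathbb{F}_{p}^{*}\to\mathbb{F}_{p}^{*}$ multiplicative, hence $\lambda(\mu)=\mu^{l}$ for a unique $1\le l\le p-1$, i.e.\ $f$ is an $l$-form.

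\emph{The gcd step, and the obstacle.} Once $f$ is an $l$-form, $f_{\mu}=\mu^{l}f$ forces $f^{*}(\mu^{-1}a)=(\mu^{l}f)^{*}(a)=\sigma(\mu^{l})f^{*}(a)$, so $f^{*}$ is homogeneous and, by the same argument applied to $f^{*}$, an $l'$-form with $\sigma(\nu^{-l})=\nu^{l'}$ for all $\nu$; using $f^{**}(x)=f(x)$ gives the reciprocal $\sigma(\nu^{l})=\nu^{-l'}$, and since on each coset of $H_{l}=\{x^{l}:x\in\mathbb{F}_{p}^{*}\}$ the components $cf$ are dilations of one another (so their duals are dilations of a single multiple of $f^{*}$, hence again multiples of $f^{*}$ by the $l'$-form property), matching exponents yields $\sigma(c)=c^{1-t}$ with $(l-1)(t-1)\equiv1\pmod{p-1}$. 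As $\sigma$ is a permutation of $\mathbb{F}_{p}^{*}$ we need $\gcd(1-t,p-1)=1$, so $t-1$, and hence $l-1$, is a unit modulo $p-1$; that is $\gcd(l-1,p-1)=1$, as required. The main obstacle is the $l$-form step: every formal manipulation of duals is symmetric in $f$ and $f^{*}$ and thus circular, so the argument must genuinely use the weak-regularity shape of $W_{f}$ — this is precisely what rules out vectorial dual-bent functions that are not $l$-forms — whereas the concluding gcd deduction is only bookkeeping about the permutation $\sigma$.
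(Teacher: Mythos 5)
Your proposal has a genuine gap at exactly the point you flag yourself: the ``$l$-form step'' is never proved. You write ``I would attack this by\dots'' and ``Granting this\dots'', but the assertion that the dilated spectrum $\{W_{f}(\mu^{-1}a)\}_{a}$ must coincide with $W_{c_{\mu}f}$ for some $c_{\mu}$ is the entire content of the theorem, and the pigeonhole-flavoured remark that the $p-1$ duals $(cf)^{*}$ lie in the family $\{\sigma(c)f^{*}\}$ does not by itself force $f^{*}(\mu^{-1}\cdot)$ into that family. So as written the argument does not close.

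The missing idea, which is how the paper proceeds, is to apply the Galois automorphism $\phi_{\beta}$ of $\mathbb{Q}(\zeta_{p})$ (defined by $\phi_{\beta}(\zeta_{p})=\zeta_{p}^{\beta}$) to $W_{f}(\beta^{-1}a)$ for \emph{every} $\beta\in\mathbb{F}_{p}^{*}$, not only $\beta=-1$ as in your conjugation argument. Using $\phi_{\beta}(\sqrt{p^{*}})=\eta_{1}(\beta)\sqrt{p^{*}}$ this gives the exact identity $W_{\beta f}(a)=\nu_{f}\eta_{1}(\beta)^{n}\sqrt{p^{*}}^{\,n}\zeta_{p}^{\beta f^{*}(\beta^{-1}a)}$, i.e.\ $(\beta f)^{*}(a)=\beta f^{*}(\beta^{-1}a)$ for weakly regular $f$. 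Combined with the dual-bent condition $(\beta f)^{*}=\sigma(\beta)f^{*}$ this yields the functional equation $f^{*}(x)=\beta\,\sigma(\beta)^{-1}f^{*}(\beta^{-1}x)$ --- precisely the quasi-homogeneity you could not establish, but for $f^{*}$ rather than $f$. The paper then expands $f^{*}$ in its algebraic normal form: the relation forces every monomial of degree $d$ actually occurring to satisfy $\sigma(\beta)=\beta^{1-d}$ for all $\beta$, so $\sigma$ is the power map $\beta\mapsto\beta^{1-d_{0}}$ (with $\gcd(d_{0}-1,p-1)=1$ because $\sigma$ is a bijection) and $f^{*}$ is a $d_{0}$-form. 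Finally, Proposition 5 of Tang et al.\ transfers the form property through duality: $(f^{*})^{*}(x)=f(-x)$ is an $l$-form with $(d_{0}-1)(l-1)\equiv 1\pmod{p-1}$, whence $f(-x)=f(x)$ and $f$ is an $l$-form with $\gcd(l-1,p-1)=1$. Your setup (normalization of $\sigma$, $f^{**}(x)=f(-x)$, evenness, $\sigma(-c)=-\sigma(c)$) and your closing gcd bookkeeping are fine, but they are the easy ends of the argument; the middle, which you correctly identify as the main obstacle, requires the full Galois-automorphism computation plus the ANF rigidity argument, and neither appears in your proposal.
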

\begin{proof}
Since $f$ is a weakly regular bent function, there exists a constant $\nu_{f} \in \{\pm 1\}$ such that $W_{f}(a)=\nu_{f} \sqrt{p^{*}}^{n}\zeta_{p}^{f^{*}(a)}$ for any $a \in V_{n}$, where $p^{*}=\eta_{1}(-1)p$, $\eta_{1}$ denotes the multiplicative quadratic character of $\mathbb{F}_{p}$. For any $\beta \in \mathbb{F}_{p}^{*}$ and $a \in V_{n}$, we have
\begin{equation}\label{32}
\begin{split}
 \sum_{x \in V_{n}}\zeta_{p}^{\beta f(x)-\langle a, x\rangle_{n}}
  & =\sum_{x \in V_{n}}\zeta_{p}^{\beta (f(x)-\langle \beta^{-1}a, x\rangle_{n})}\\
  & =\phi_{\beta}(W_{f}(\beta^{-1}a))\\
  & =\phi_{\beta}(\nu_{f}\sqrt{p^{*}}^{n}\zeta_{p}^{f^{*}(\beta^{-1}a)})\\
  & =\nu_{f}\sqrt{p^{*}}^{n}\eta_{1}(\beta)^{n}\zeta_{p}^{\beta f^{*}(\beta^{-1}a)},
  \end{split}
\end{equation}
where $\phi_{\beta}$ is the automorphism of the cyclotomic field $\mathbb{Q}(\zeta_{p})$ defined by $\phi_{\beta}(\zeta_{p})=\zeta_{p}^{\beta}, \phi_{\beta}(u)=u, u \in \mathbb{Q}$, and in the last equation we use the well-known fact that $\phi_{\beta}(\sqrt{p^{*}})=\eta_{1}(\beta)\sqrt{p^{*}}$ (see \cite{Ireland}).
The above equation shows that $\beta f$ is also weakly regular bent. Furthermore, since $f$ is vectorial dual-bent, we have
\begin{equation}\label{33}
  \begin{split}
  \sum_{x \in V_{n}}\zeta_{p}^{\beta f(x)-\langle a, x\rangle_{n}}=\nu_{\beta f}\sqrt{p^{*}}^{n}\zeta_{p}^{(\beta f)^{*}(a)}=\nu_{\beta f}\sqrt{p^{*}}^{n}\zeta_{p}^{\sigma(\beta)f^{*}(a)},
  \end{split}
\end{equation}
where $\nu_{\beta f}\in \{\pm 1\}$ is a constant and $\sigma$ is some permutation over $\mathbb{F}_{p}^{*}$.

By (32) and (33), we have $\nu_{\beta f}=\nu_{f}\eta_{1}(\beta)^{n}$ and
\begin{equation}\label{34}
  f^{*}(x)=\beta (\sigma(\beta))^{-1}f^{*}(\beta^{-1}x) \ \text{for any} \ \beta \in \mathbb{F}_{p}^{*}.
\end{equation}

When $\sigma$ is an identity map, by (34), we can see that $f^{*}$ is a $(p-1)$-form. Note that $gcd(p-2, p-1)=1$. By Proposition 5 and its proof of \cite{Tang}, we have that $(f^{*})^{*}(x)=f(-x)$ is also a $(p-1)$-form. Since $f(-x)$ is a $(p-1)$-form, it is easy to see that $f(x)=f(-x)$. Therefore, $f$ is an $l$-form with $gcd(l-1, p-1)=1$, where $l=p-1$.

In the following we consider the case that $\sigma$ is not an identity map. Let $\{\alpha_{1}, \dots, \alpha_{n}\}$ be a basis of $V_{n}$ over $\mathbb{F}_{p}$ and define $g(x_{1}, \dots, x_{n})=f^{*}(x_{1}\alpha_{1}+\dots+x_{n}\alpha_{n})$. By (34), we have
\begin{equation}\label{35}
  g(x_{1}, \dots, x_{n})=\beta (\sigma(\beta))^{-1}g(\beta^{-1}x_{1}, \dots, \beta^{-1}x_{n}) \ \ \text{for any} \ \beta \in \mathbb{F}_{p}^{*}.
\end{equation}
Assume that the algebraic normal form of $g$ is
\begin{equation}\label{36}
\begin{split}
  g(x_{1}, \dots, x_{n})&=\sum_{(j_{1}, \dots, j_{n})\in \mathbb{F}_{p}^{n}}a_{(j_{1}, \dots, j_{n})}\prod_{i=1}^{n}x_{i}^{j_{i}}\\
  &=a_{(0, \dots, 0)}+\sum_{d=1}^{n(p-1)}\sum_{(j_{1}, \dots, j_{n})\in \mathbb{F}_{p}^{n}: j_{1}+\dots+j_{n}=d}a_{(j_{1}, \dots, j_{n})}\prod_{i=1}^{n}x_{i}^{j_{i}},
\end{split}
\end{equation}
where $a_{(j_{1}, \dots, j_{n})}\in \mathbb{F}_{p}$.
Then
\begin{equation}\label{37}
\begin{split}
  &\beta (\sigma(\beta))^{-1}g(\beta^{-1}x_{1}, \dots, \beta^{-1}x_{n})=\beta (\sigma(\beta))^{-1}a_{(0, \dots, 0)}\\
  &\ +\sum_{d=1}^{n(p-1)}\sum_{(j_{1}, \dots, j_{n})\in \mathbb{F}_{p}^{n}: j_{1}+\dots+j_{n}=d}a_{(j_{1}, \dots, j_{n})}\beta^{1-d}(\sigma(\beta))^{-1}\prod_{i=1}^{n}x_{i}^{j_{i}}.
  \end{split}
\end{equation}
By (35)-(37), we have that
$a_{(0, \dots, 0)}=0$ and if for $(j_{1}, \dots, j_{n})$ with $j_{1}+\dots+j_{n}=d$ we have $a_{(j_{1}, \dots, j_{n})}\neq 0$, then $\sigma(\beta)=\beta^{1-d}$ for any $\beta \in \mathbb{F}_{p}^{*}$, which implies that there exists an integer $1<d_{0}<p-1$ with $gcd(d_{0}-1, p-1)=1$ such that $\sigma(\beta)=\beta^{1-d_{0}}, \beta \in \mathbb{F}_{p}^{*}$. Then by (34), we can see that $f^{*}$ is a $d_{0}$-form. By Proposition 5 and its proof of \cite{Tang}, we have that $(f^{*})^{*}(x)=f(-x)$ is an $l$-form, where $(d_{0}-1)(l-1)\equiv 1 \ mod \ (p-1)$. Since $f(-x)$ is an $l$-form with $gcd(l-1, p-1)=1$, it is easy to see that $f(x)=f(-x)$. Therefore, $f$ is an $l$-form with $gcd(l-1, p-1)=1$.
\qed
\end{proof}

\section{Conclusion}
\label{sec:5}
In this paper, we further studied vectorial dual-bent functions and partial difference sets.

(1) We provided a new explicit construction of vectorial dual-bent functions (Theorem 1).

(2) For vectorial dual-bent functions $F: V_{n}\rightarrow \mathbb{F}_{p^s}$ with certain additional properties, we proved that if the permutation $\sigma$ of $\mathbb{F}_{p^s}^{*}$ given by $(F_{c})^{*}=(F^{*})_{\sigma(c)}, c \in \mathbb{F}_{p^s}^{*}$ (where $F^{*}$ is a vectorial dual of $F$) is an identity map, then the preimage set of any subset of $\mathbb{F}_{p^s}$ for $F$ forms a partial difference set (Theorem 2).

(3) For vectorial dual-bent functions $F: V_{n}\rightarrow \mathbb{F}_{p^s}$ with certain additional properties, if the permutation $\sigma$ of $\mathbb{F}_{p^s}^{*}$ given by $(F_{c})^{*}=(F^{*})_{\sigma(c)}, c \in \mathbb{F}_{p^s}^{*}$ (where $F^{*}$ is a vectorial dual of $F$) is not an identity map but it satisfies some conditions, we proved that the preimage set of the squares (non-squares) in $\mathbb{F}_{p^s}^{*}$ for $F$ and the preimage set of any coset of some subgroup of $\mathbb{F}_{p^s}^{*}$ for $F$ form partial difference sets (Theorems 3, 4 and Corollary 2). Furthermore, explicit constructions of partial difference sets are yielded from some (non)-quadratic vectorial dual-bent functions (Corollaries 3, 4, 5, 6).

(4) We proved that if $f$ is a $p$-ary weakly regular vectorial dual-bent function with $f(0)=0$, then $f$ is an $l$-form with $gcd(l-1, p-1)=1$ for some integer $1\leq l\leq p-1$ (Theorem 5), which answers one open problem proposed in \cite{Cesmelioglu2}.

In particular, in Remarks 1, 3, 4, we illustrated that almost all the results of using weakly regular $p$-ary bent functions to construct partial difference sets are special cases of our results.

\begin{acknowledgements}
This research is supported by the National Key Research and Development Program of China (Grant No. 2018YFA0704703), the National Natural Science Foundation of China (Grant Nos. 12141108 and 61971243), the Natural Science Foundation of Tianjin (20JCZDJC00610), the Fundamental Research Funds for the Central Universities of China (Nankai University), and the Nankai Zhide Foundation.
\end{acknowledgements}

%
%



\end{document}